\def\enddoc@text{%
}
\newcommand{\mystrut}{\raisebox{0ex}[2.5ex][1ex]{}}
\renewcommand{\int}{\ensuremath{\mathbb{Z}}}
\newcommand{\one}{\ensuremath{1}}
\newcommand{\zero}{\ensuremath{0}}
\newcommand{\comp}[1]{\ensuremath{\overline{#1}}}
\newcommand{\Pow}[1]{\ensuremath{\mathcal{P}(#1)}}
\newcommand{\Powf}[1]{\ensuremath{\mathcal{P}_f(#1)}}
\newcommand{\Ytop}[2]{[{#1}]^{#2}}
\newcommand{\Ybot}[2]{[{#1}]_{#2}}
\newcommand{\interval}[2]{\ensuremath{[{#1}, {#2}]}}
\newcommand{\norm}[1]{\ensuremath{|\!|{#1}|\!|}}
\newcommand{\mins}{\min\nolimits}
\newcommand{\maxs}{\max\nolimits}
\newcommand{\suc}{\eta}
\newcommand{\superimpose}[2]{%
  {\ooalign{$#1\@firstoftwo#2$\cr\hfil$#1\@secondoftwo#2$\hfil\cr}}}
\newcommand{\monM}{\mathbb{M}}
\newcommand{\latL}{\mathbb{L}}
\newcommand{\arr}{\dashrightarrow}
\newcommand{\cupa}[1]{\mathop{\cup_{#1}}}
\newcommand{\udefix}{\textsf{UDEfix}}
\definecolor{dmagenta}{rgb}{0.81,0,0.81}
\definecolor{dcyan}{rgb}{0,0.6,0.6}
\definecolor{dgreen}{rgb}{0.09, 0.45, 0.27}
\definecolor{dred}{rgb}{0.55, 0.0, 0.0}
\newcommand{\dgreen}{\color{dgreen}}
\newcommand{\red}{\color{dred}}
\newcommand{\mystrutab}{\raisebox{0ex}[0.4cm]{}}
\newcommand{\mystrutbl}{\raisebox{-2ex}[0.4cm]{}}
\apptocmd{\@enddocumenthook}{
    \vspace{-2cm}
    \nobreak%
    \insert\copyins{\hsize.58\textwidth
\vbox to 0pt{\vskip12 pt%
      \fontsize{6}{7\p@}\normalfont\upshape
      \everypar{}%
      \noindent\fontencoding{T1}%
  \headertextsf{This work is licensed under the Creative Commons
  Attribution License. To view a copy of this license, visit
  \texttt{https://creativecommons.org/licenses/by/4.0/} or send a
  letter to Creative Commons, 171 Second St, Suite 300, San Francisco,
    CA 94105, USA, or Eisenacher Strasse~2, 10777 Berlin, Germany}\vss}
      \par
      \kern\z@}%
}
\theoremstyle{plain}
\theoremstyle{definition}
\newcommand{\C}{\mathbb{C}}
\newcommand{\D}{\mathbb{D}}
\newcommand{\A}{\mathbb{A}}
\newcommand{\Cf}{\mathbb{C}_f}
\newcommand{\Af}{\A_f}
\DeclareRobustCommand{\cev}[1]{%
  \mathpalette\do@cev{#1}%
}
\newcommand{\do@cev}[2]{%
  \fix@cev{#1}{+}%
  \reflectbox{$\m@th#1\vec{\reflectbox{$\fix@cev{#1}{-}\m@th#1#2\fix@cev{#1}{+}$}}$}%
  \fix@cev{#1}{-}%
}
\newcommand{\fix@cev}[2]{%
  \ifx#1\displaystyle
    \mkern#23mu
  \else
    \ifx#1\textstyle
      \mkern#23mu
    \else
      \ifx#1\scriptstyle
        \mkern#22mu
      \else
        \mkern#22mu
      \fi
    \fi
  \fi
}
\title{A Monoidal View on Fixpoint Checks}
\author[P.~Baldan]{Paolo Baldan\lmcsorcid{0000-0001-9357-5599}}[a]
\address{Universit\`a di Padova, Italy}
\email{baldan@math.unipd.it}
\author[R.~Eggert]{Richard Eggert\lmcsorcid{0000-0002-9901-7392}}[b]
\author[B.~K\"onig]{Barbara K\"onig\lmcsorcid{0000-0002-4193-2889}}[b]
\author[T.~Matt]{Timo Matt\lmcsorcid{0009-0008-5570-6998}}[b]
\address{Universit\"at Duisburg-Essen, Germany}
\email{richard.eggert@uni-due.de, barbara\_koenig@uni-due.de}
\author[T.~Padoan]{Tommaso Padoan\lmcsorcid{0000-0001-7814-1485}}[c]
\address{Universit\`a di Trieste, Italy}
\email{tommaso.padoan@units.it}
\thanks{Partially supported by the DFG project SpeQt (project number
  434050016), by the Ministero dell’Universtà e della Ricerca
  Scientifica of Italy, under Grant No. 201784YSZ5, PRIN2017 – ASPRA,
  by project iNEST, No. J43C22000320006 and by the European Union -
  NextGenerationEU under the National Recovery and Resilience Plan
  (NRRP) - Call PRIN 2022 PNRR - Project P2022HXNSC ``Resource
  Awareness in Programming: Algebra, Rewriting, and Analysis''.}
\begin{document}

\maketitle

\begin{abstract}
  Fixpoints are ubiquitous in computer science as they play a
  central role in providing a meaning to recursive and cyclic
  definitions. Bisimilarity, behavioural metrics, termination
  probabilities for Markov chains and stochastic games are defined in
  terms of least or greatest fixpoints. Here we show that our recent
  work which proposes a technique for checking whether the fixpoint of
  a function is the least (or the largest) admits a natural
  categorical interpretation in terms of gs-monoidal categories.
  The technique is based on a construction that maps a function to a
  suitable approximation. We study the compositionality
    properties of this mapping and show that under some restrictions
    it can naturally be interpreted as a (lax) gs-monoidal functor.
  This guides the development of a tool, called {\udefix} that allows
  us to build functions (and their approximations) like a circuit out
  of basic building blocks and subsequently perform the fixpoints
  checks.
  We also show that a slight generalisation of the theory allows one
  to treat a new relevant case study: coalgebraic behavioural metrics
  based on Wasserstein liftings.
\end{abstract}

\section{Introduction}
\label{sec:introduction}

Fixpoints are fundamental in computer science: extremal (least or
greatest) fixpoints are commonly used to interpret recursive and
cyclic definitions. In a recent
work~\cite{bekp:fixpoint-theory-upside-down,BEKP:FTUD-journal} we
proposed a technique for checking whether the fixpoint of a function
is the least (or the largest). In this paper we show that such
technique admits a natural categorical interpretation in terms of
gs-monoidal categories. This allows us to provide a compositional
flavour to our technique and enables the realisation of a
corresponding tool {\udefix}.  

The theory
in~\cite{bekp:fixpoint-theory-upside-down,BEKP:FTUD-journal} can be
used in a variety of fairly diverse application scenarios, such as
bisimilarity~\cite{s:bisimulation-coinduction}, behavioural
metrics~\cite{dgjp:metrics-labelled-markov,b:prob-bisimilarity-distances,cbw:complexity-prob-bisimilarity,bbkk:coalgebraic-behavioral-metrics},
termination probabilities for Markov chains~\cite{bk:principles-mc}
and simple stochastic games~\cite{c:algorithms-ssg}.  It applies to
non-expansive functions of the kind $f\colon \monM^Y \to \monM^Y$,
where $\monM$ is a set of values and $Y$ is a finite set. More
precisely, the set of values $\monM$ is an MV-chain, i.e., a totally
ordered complete lattice endowed with suitable operations of sum and
complement, which allow for the definition of a natural notion of
non-expansiveness for $\monM$-valued functions. A prototypical example
of MV-algebra, largely used in the examples, is the interval $[0,1]$
with truncated sum.
Roughly, the idea consists in mapping the semantic functions of
interest $f\colon \monM^Y\to \monM^Y$ to corresponding approximations
$f^{\#}$ over (a subset of) $\Pow{Y}$, the powerset of $Y$. While
$\monM$ is in general infinite (or very large), the set $Y$ is
typically finite in a way that the fixpoints of $f^{\#}$ can be
computed effectively and provide information on the fixpoints of the
original function. In particular they allow us to decide, whether a
given fixpoint is indeed the least one (or dually greatest one).

In this paper, we show that the approximation framework and its
compositionality properties can be naturally interpreted in
categorical terms using gs-monoidal categories. In essence gs-monoidal
categories describe graph-like structures with dedicated input and
output interfaces, operators for disjoint union (tensor), duplication
and termination of wires, quotiented by appropriate
axioms. Particularly useful are gs-monoidal functors that preserve
such operators and hence naturally describe compositional operations.

More concretely, we introduce two gs-monoidal categories, that we
refer to as the concrete category and the category of approximations,
where the concrete functions and their approximations, respectively,
live as arrows. We then define a mapping $\#$ from the concrete
category to the category of approximations and show that, whenever, as
in~\cite{BEKP:FTUD-journal}, we assume finiteness of the underlying
set $Y$, the mapping $\#$ turns out to be a gs-monoidal functor.

In the general case, for functions $f\colon \monM^Y \to \monM^Y$,
where $Y$ is not necessarily finite, the mapping $\#$ is only known to
be the union of lax functors. However, we observe that we can
characterise it again as a gs-monoidal functor if we restrict it to
the subcategory of the concrete category where arrows are
reindexings. While this might seem to be a severe restriction, we will
see that it is sufficient to cover the intended applications.

We prove a number of properties of $\#$ that enable us to give a
recipe for finding approximations for a special type of functions:
predicate liftings as those introduced for coalgebraic modal
logic~\cite{p:coalgebraic-logic,s:coalg-logics-limits-beyond-journal}. This
allows us to include a new case study in the machinery for fixpoint
checking: coalgebraic behavioural metrics, based on Wasserstein
liftings.

Besides shedding further light on the theoretical approximation
framework of~\cite{BEKP:FTUD-journal}, the results in the paper guide
the realisation of a tool, called {\udefix} which realises a number of
fixpoints checks (given a post- resp.\ pre-fixpoint, is below the
least resp.\ above the greatest fixpoint? If it is a fixpoint, is it
the least or the greatest?).
Leveraging the visual string diagrammatic language of gs-monoidal
categories, {\udefix} models system functions as (hyper)graphs whose
edges represent suitable basic building blocks. This also extends to
approximations, thanks to the fact that the mapping $\#$ can be seen
as a gs-monoidal functor, enabling a compositional construction of the
approximation function.

The paper is organised as follows. In Section~\ref{sec:motivation} we
provide some high-level motivation.  In
Section~\ref{sec:preliminaries} we introduce some preliminaries and,
in particular, we review the theory if fixpoint checks
from~\cite{BEKP:FTUD-journal}.  In
Section~\ref{sec:categories-functor} we introduce two (gs-monoidal)
categories, $\C$ and $\A$, the so-called concrete category and
category of approximations, and investigate the properties of the
approximation map $\# : \C \to \A$.  In Section~\ref{sec:liftings} we
show how to handle predicate liftings and then, in
Section~\ref{sec:beh-metrics}, we exploit such results to treat
behavioural metrics. In Section~\ref{sec:gs-mon}, we show that, in the
finitary case, the categories $\C$, $\A$ are indeed gs-monoidal and
$\#$ is gs-monoidal functor. The tool {\udefix} is discussed in
Section~\ref{sec:tool}. Finally, in Section~\ref{sec:conclusion} we
draw some conclusions.

This article is an extended version of the paper~\cite{BEKMP:MVFPC}
presented at ICGT 2023. With respect to the conference version, the
present paper contains additional explanations and examples and full
proofs of the results.
    
\section{Motivation}
\label{sec:motivation}

We start by giving a high-level overview over the approach, followed
by a small worked-out example. The general idea follows the
upside-down theory of fixpoint checks
from~\cite{bekp:fixpoint-theory-upside-down}. We are given a monotone
function\footnote{The set of functions from $X$ to $Y$ is denoted by
  $Y^X$.}  $f\colon [0,1]^Y\to [0,1]^Y$ (later $[0,1]$ will be
replaced by a general MV-algebra) and we also assume that $f$ is
non-expansive with respect to a suitable norm.  By
Knaster-Tarski~\cite{t:lattice-fixed-point} we have a guarantee that
$f$ has fixpoints -- among them a least fixpoint $\mu f$ and a
greatest fixpoint $\nu f$ -- but there is no guarantee that there is a
unique fixpoint, indeed there might be several of them.

Now, given a fixpoint $a\in [0,1]^Y$ of $f$, our aim is to check
whether $a$ is the least fixpoint, i.e., $a=\mu f$. (Or dually,
whether it is the greatest fixpoint, i.e., $a=\nu f$.)

The check proceeds by determining from $f$ and $a$ a so-called
approximation $f_\#^a\colon \mathcal{P}(Y)\to
\mathcal{P}(Y)$. Intuitively, the approximation describes how function
$f$ propagates decreases of its argument $a$: for $Y' \subseteq Y$, the set
$f_\#^a(Y')$ contains those elements $y \in Y$ where the value of
$f(a)$ decreases by some fixed amount $\delta$ whenever we decrease
$a$ on $Y'\subseteq Y$ by $\delta$. 
Then one determines the
greatest fixpoint of this approximation ($\nu f_\#^a$) and observes
that $a$ coincides with $\mu f$ if and only if $\nu f_\#^a$ is the
empty set. Otherwise $a$ is too large and intuitively there is still
``wiggle room'' and potential for decrease. This procedure works under
conditions that will be made precise later in
Section~\ref{ss:fix-check}.

The technique is schematised in Figure~\ref{fig:schema}. A feature of
this approach is the fact that the function $f$ can be assembled
compositionally from basic functions via composition and disjoint
union. Such a decomposition can be represented via a string
diagram. Furthermore, approximations can be determined compositionally,
leading to a string diagram of the same type, but with approximations
instead of concrete functions.

\begin{figure}
  \centering
  \scalebox{0.5}{\input{schema.tex}}
  \caption{A graphical overview over the fixpoint checking method}
  \label{fig:schema}
\end{figure}

\medskip

We conclude with an example which illustrates the notions and tools
discussed above for fixpoint checks.  We
consider (unlabelled) Markov chains $(S,T,\eta)$, where $S$
is a finite set of states, $T\subseteq S$ is a set of terminal states
and $\eta\colon S\backslash T\to \mathcal{D}(S)$ assigns to each
non-terminal state a probability distribution over its successors.  We
denote by $p_s = \eta(s)$ the probability distribution associated with
state $s$.

We are interested in the termination probability of a given state of
the Markov chain, which can be computed by taking the least fixpoint
of a function $\mathcal{T}\colon [0,1]^S \to [0,1]^S$:
\begin{eqnarray*}
  && \mathcal{T} : [0,1]^S \to [0,1]^S \\
  && \mathcal{T}(t)(s) = \left\{
    \begin{array}{ll}
      1 & \mbox{if $s\in T$} \\
      \sum\limits_{s'\in S} \suc(s)(s') \cdot t(s') &
      \mbox{otherwise}
    \end{array}
  \right.
\end{eqnarray*}

\begin{table}
  \begin{tabular}{|c|c|c|c|c|c|} 
    \hline
    Function & $c_k$ & $g^*$  & $\min_u$ &
    $\max_u$ & $\mathrm{av}_D = \tilde{\mathcal{D}}$ \mystrut \\
    & $k\colon Z \to \monM$ & $g\colon Z \to Y$  &
    $u\colon Y\to Z$ & $u\colon Y\to
    Z$ & $\monM=[0,1]$, $Z = \mathcal{D}(Y)$  \\
    \hline
    Name & constant & reindexing & minimum & maximum & expectation \\
    \hline 
    $a\mapsto \dots$ & $k$ & $a\circ g$ & $\lambda
    z.\min\limits_{u(y)=z} a(y)$ & $\lambda
    z.\max\limits_{u(y)=z} a(y)$ & $\lambda z.\sum\limits_{y\in
      Y} z(y)\cdot a(y)$  \mystrut \\
    \hline
  \end{tabular}

  \medskip
  
  \caption{Basic functions of type $\monM^Y\to\monM^Z$, $a\colon Y\to\monM$.}
  \label{tab:basic-functions}
\end{table}

We observe that the function $\mathcal{T}$ can be decomposed as
\[
  \mathcal{T} = (\eta^*\circ \tilde{D})\otimes c_k
\]
where $\otimes$ stands for disjoint union (over functions) and we use
some of the basic functions given in
Table~\ref{tab:basic-functions}. We depict this decomposition
diagrammatically in Figure~\ref{fig:decomp-termination}.  For the
Markov chain in Figure~\ref{fig:markov-chain2} the parameters
instantiate as follows (cf. Table~\ref{tab:basic-functions}):
\begin{itemize}
\item $c_k\colon [0,1]^\emptyset\to [0,1]^T$, $k\colon T\to [0,1]$
  with $T=\{u\}$ and $k(u) = 1$
\item $\tilde{\mathcal{D}}\colon [0,1]^S \to [0,1]^{\mathcal{D}(S)}$
\item $\eta^* \colon [0,1]^{\mathcal{D}(S)} \to [0,1]^{S\setminus T}$
  where $\eta\colon S\setminus T\to {\mathcal{D}(S)}$ with
  $\eta(s) = p_s$, where $p_x(y) = p_x(u) = \nicefrac{1}{2}$,
  $p_y(z) = p_z(y) = 1$ and all other values are $0$. 
\end{itemize}
Since $S$ is finite we could alternatively restrict to a finite subset
$D$ of $\mathcal{D}(S)$.

\begin{figure}
  \centering
  \begin{tikzpicture}[every node/.style={inner xsep=0.2em, minimum height=2em, outer sep=0, rectangle},scale=1.2]
    \node (c) at (0,0) [draw] {$c_k$};
    \node (init1) at (-1.5,0) {};
    \node (init2) at (-1.5,-1.2) {};
    \node (D) at (0,-1.2) [draw] {$\tilde{\mathcal{D}}$};
    \node (eta) at (2.4,-1.2) [draw] {$\eta^*$};
    \node (end1) at (4.5,0) {};
    \node (end2) at (4.5,-1.2) {};
    \draw [-] (c) to node [above, pos=0.5]{\footnotesize $[0,1]^T$} (end1);
    \draw [-] (init1) to node [below, pos=0.43] {\footnotesize $[0,1]^\emptyset$} (c);
    \draw [-] (init2) to node [below, pos=0.43] {\footnotesize
      $[0,1]^S$} (D);
    \draw [-] (D) to node [below] {\footnotesize $[0,1]^{\mathcal{D}(S)}$} (eta);
    \draw [-] (eta) to node [below] {\footnotesize
      $[0,1]^{S\backslash T}$}
    (end2);
  \end{tikzpicture}
  \caption{Decomposition of the fixpoint function $\mathcal{T}$ for
    computing termination probabilities.}
  \label{fig:decomp-termination}
\end{figure}

The function $\mathcal{T}$ is a monotone function on a complete
lattice, hence it has a least fixpoint by Knaster-Tarski's fixpoint
theorem~\cite{t:lattice-fixed-point}.  Furthermore, it is
non-expansive, allowing us to use the fixpoint checking techniques
from \cite{BEKP:FTUD-journal}.

We will illustrate this on the concrete instance (Markov chain in
Figure~\ref{fig:markov-chain2}). The state set is $S = \{x,y,u,z\}$
and $u$ is the only terminal state. The least fixpoint
$\mu\mathcal{T}$ of $\mathcal{T}$ is given in
Figure~\ref{fig:markov-chain2} in green (left) and the greatest
fixpoint $\nu\mathcal{T}$ in red (right). These are two of the
infinitely many fixpoints of $\mathcal{T}$. The cycle on the left
(including $y,z$) can be seen as some kind of vicious cycle, where
$y,z$ convince each other erroneously that they terminate with a
probability that might be too high. Such cycles have a coinductive
flavour, hence the computation of the greatest fixpoint of the
approximation.
\begin{figure}[t]
  \centering
  \begin{tikzpicture}[->, >=stealth, nodes={minimum size=1.8em},
    node distance=0.8cm]
    \node [circle,draw](x) [label=below:{\dgreen
      $\nicefrac{1}{2}$}/{\red $1$}] {$x$}; \node
    [circle,draw,accepting](t) [right=of x] [label=below:{\dgreen
      $1$}/{\red $1$}] {$u$}; \node [circle,draw](y) [left=of x]
    [label=below:{\dgreen $0$}/{\red $1$}] {$y$}; \node
    [circle,draw](z) [left=of y][label=below:{\dgreen $0$}/{\red
      $1$}] {$z$};
    \draw [->,thick] (x) to node [above]{$\nicefrac{1}{2}$} (t);
    \draw [->,thick] (x) to node [above]{$\nicefrac{1}{2}$} (y);
    \draw [->,thick,bend left] (y) to node [below]{$1$} (z); \draw
    [->,thick,bend left] (z) to node [above]{$1$} (y);
  \end{tikzpicture}
  \caption{A Markov chain with two fixpoints of $\mathcal{T}$
    (right)}
  \label{fig:markov-chain2}
\end{figure}
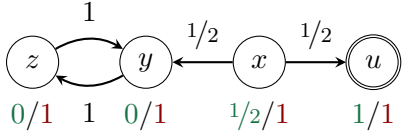

Now let $a = \nu \mathcal{T}$ be the greatest fixpoint of
$\mathcal{T}$, i.e., the function $S\to [0,1]$ that maps every state
to~$1$. In this case we can associate $\mathcal{T}$ with an
approximation $\mathcal{T}_\#^a$ on subsets of $S$ that
records the propagation of decrease. Given $S' \subseteq S$, the
approximation $\mathcal{T}_\#^a$ is as follows:
\[
  \mathcal{T}_\#^a(S') = \{s \in S \mid a(s) > 0, s \notin T,
  \mathit{supp}(\eta(s)) \subseteq S'\}.
\]
where $\mathit{supp}(p)$ denotes the \emph{support} of a probability
distribution $p\colon S\to[0,1]$, i.e., the set of states $v$ for which
$p(v) > 0$.
Intuitively, if we decide to
decrease the $a$-values of all states in $S'$ by some small amount
$\delta$, the states in $\mathcal{T}_\#^a(S')$ will also decrease
their values by $\delta$ after applying $\mathcal{T}$. This is true
for all states that have non-zero value, are not terminal and whose
successors are \emph{all} included in $S'$.

In the example $\mathcal{T}_\#^a(\{y,z\}) = \{y,z\}$ and $\{y,z\}$
is indeed the greatest fixpoint of the approximation. Since it is
non-empty, we  deduce that $a$ is not the least
fixpoint. Furthermore we could now subtract a small value (for
details on how to obtain this value see~\cite[Proposition
4.5]{BEKP:FTUD-journal}) from $a(y)$, $a(z)$ to obtain a smaller
pre-fixpoint, from where one can continue to iterate to the least
fixpoint (see also~\cite{bekp:lattice-strategy-iteration}).

We anticipate that in our tool {\udefix} we can draw a diagram as in
Figure~\ref{fig:decomp-termination}, from which the approximation and
its greatest fixpoint are automatically computed in a compositional
way, allowing us to perform such fixpoint checks.
A more complex case study in the domain of behavioural
metrics will be provided in Section~\ref{sec:case-study}.  

\section{Preliminaries}
\label{sec:preliminaries}

This section reviews some background used throughout the paper. We
first introduce some basics of lattices and MV-algebras, the
domain where the functions of interest take values. Then we recap some
results from~\cite{BEKP:FTUD-journal} useful for detecting if a
fixpoint of a given function is the least (or greatest).

\subsection{Lattices and MV-algebras}
\label{ss:MV}

A \emph{partially ordered} set $(P, \sqsubseteq)$ is often denoted
simply as $P$, omitting the order relation. For $x, y \in P$, we write
$x \sqsubset y$ when $x \sqsubseteq y$ and $x \neq y$.
For a function $f : X \to P$, we will write $\arg\min_{x\in X'} f(x)$
to denote the (possibly empty) set of elements where $f$ reaches the
minimum on domain $X'$, i.e.,
\begin{center}
  $\arg\min_{x\in X'} f(x) = \{ x \in X' \mid \forall y \in X.\, f(x) \sqsubseteq f(y)\}$.
\end{center}
Abusing the notation, we will write $z = \arg\min_{x\in X} f(x)$
instead of $z \in \arg\min_{x\in X} f(x)$.

\begin{definition}[complete lattice]
  A \emph{complete lattice} is a partially ordered set
  $(\latL, \sqsubseteq)$ such that each subset $X \subseteq \latL$
  admits a join $\bigsqcup X$ and a meet $\bigsqcap X$. A complete
  lattice $(\latL, \sqsubseteq)$ always has a least element
  $\bot = \bigsqcap \latL$ and a greatest element
  $\top = \bigsqcup \latL$.
\end{definition}

A prototypical example for a complete lattice is the \emph{powerset}
$\mathcal{P}(Y)$ of a given set $Y$, where the partial order is
subset inclusion ($\subseteq$) and join and meet are given by union and
intersection. The set of finite subsets of $X$ is written $\Powf{X}$.

A function $f : \latL \to \latL$ is \emph{monotone} if for all
$l, l' \in \latL$, if $l \sqsubseteq l'$ then
$f(l) \sqsubseteq f(l')$. By Knaster-Tarski's
theorem~\cite[Theorem~1]{t:lattice-fixed-point}, any monotone function
on a complete lattice has a least fixpoint $\mu f$ and a greatest
fixpoint $\nu f$, characterised as the meet of all pre-fixpoints
$\mu f = \bigsqcap \{ l \mid f(l) \sqsubseteq l \}$ and, dually, a
greatest fixpoint $\nu f = \bigsqcup \{ l \mid l \sqsubseteq f(l) \}$,
characterised as the join of all post-fixpoints.  We denote by
$\mathit{Fix}(f)$ the set of all fixpoints of $f$.

For a set $Y$ and a complete lattice $\latL$, the set of functions
$\latL^Y = \{ f \mid f : Y \to \latL \}$ with pointwise order (for
$a, b \in \latL^Y$, $a \sqsubseteq b$ if $a(y) \sqsubseteq b(y)$ for
all $y\in Y$), is a complete lattice.

The semantic functions of interest in the paper will take values on
special lattices, induced by a suitable class of commutative monoids
with a complement operation.

\begin{definition}[MV-algebra~\cite{Mun:MV}]
  An \emph{MV-algebra} is a tuple
  $\monM = (M, \oplus, \zero, \comp{(\cdot)})$ where
  $(M, \oplus, \zero)$ is a commutative monoid and
  $\comp{(\cdot)} : M \to M$ maps each element to its \emph{complement},
  such that, if we let $\one = \comp{\zero}$ and \emph{subtraction}
  $x \ominus y = \comp{\comp{x} \oplus y}$, then
  for all $x, y \in M$ it holds that
  \begin{enumerate}
  \item $\comp{\comp{x}} = x$;
  \item $x \oplus \one = \one$;    
  \item $(x \ominus y) \oplus y = (y \ominus x) \oplus x$.
  \end{enumerate}
\end{definition}

MV-algebras can be endowed with a partial order, the so-called
\emph{natural order}, defined for $x,y\in M$, by $x \sqsubseteq y$
if $x \oplus z= y$ for some $z \in M$. When $\sqsubseteq$ is total,
$\monM$ is called an \emph{MV-chain}. We will often write $\monM$ instead of
$M$.

The natural order gives an MV-algebra a lattice structure where
$\bot = \zero$, $\top =\one$, $x \sqcup y = (x \ominus y) \oplus y$
and
$x \sqcap y = \comp{\comp{x} \sqcup \comp{y}} = x \ominus (x \ominus y)$. We call the MV-algebra \emph{complete} if it is a
complete lattice, which is not true in general, e.g.,
$([0,1] \cap \mathbb{Q}, \leq)$.

\begin{example}
  \label{ex:mv-chains}
  A prototypical MV-algebra is $([0,1],\oplus,0,\comp{(\cdot)})$ where
  $x\oplus y = \min\{x+y,1\}$, $\comp{x} = 1-x$ and
  $x\ominus y = \max\{0,x-y\}$ for $x,y\in [0,1]$.  The natural order
  is $\le$ (less or equal) on the reals.  Another example is
  $K = (\{0,\dots,k\},\oplus,0,\comp{(\cdot)})$ where
  $n\oplus m = \min\{n+m,k\}$, $\comp{n} = k-n$ and
  $n\ominus m = \max\{ n-m,0\}$ for $n,m\in \{0,\dots,k\}$. Both
  MV-algebras are complete and MV-chains.
\end{example}

\subsection{A Theory of Fixpoint Checks}
\label{ss:fix-check}

We briefly recap the theory from~\cite{BEKP:FTUD-journal}. Given a
function $f : \monM^Y \to \monM^Y$, where $\monM$ is an MV-algebra,
the theory provides results useful for checking whether a fixpoint of
$f$ is the least or the greatest fixpoint.
For gaining some intuition, one can think that, as in
Section~\ref{sec:motivation}, $Y$ is the set of states of a Markov
Chain, $\monM = [0,1]$ and functions $\monM^Y = [0,1]^Y$ provide the
probability of termination of each state. Alternatively, $Y$ could be
the set of pairs of states of a system, $\monM = [0,1]$ and functions
$\monM^Y = [0,1]^Y$ provide the behavioural distance between states.
The technique is for instance useful in a situation where one can
obtain a fixpoint for the function of interest, e.g., via strategy
iteration, but it is unclear whether the fixpoint is the least (or the
largest).

While the theory in~\cite{BEKP:FTUD-journal} was restricted to
functions over $\monM^Y$ where $Y$ is finite, for the purposes of the
present paper we actually need a generalisation working for functions
with an infinite domain. Hence, hereafter $Y$ and $Z$ denote possibly
infinite sets.

\begin{definition}[norm and non-expansive functions]
  Given $a \in \monM^Y$ we define its \emph{norm} as
  $\norm{a} = \bigsqcup \{ a(y) \mid y \in Y\}$.  A function
  $f: \monM^Y\to \monM^Z$ is \emph{non-expansive} if for all
  $a, b \in \monM^Y$ it holds
  $\norm{f(b) \ominus f(a)} \sqsubseteq \norm{b \ominus a}$.
\end{definition}

It can be seen that non-expansive functions are monotone. A number of
standard operators are non-expansive (e.g., constants, reindexing, max
and min over a relation, average in Table~\ref{tab:basic-functions}),
and non-expansiveness is preserved by composition and disjoint union
(see~\cite[Theorem 5.2]{BEKP:FTUD-journal}).

Let $f: \monM^Y\to \monM^Y$, $a \in \monM^Y$.  For a non-expansive
endo-function $f: \monM^Y\to \monM^Y$ and $a \in \monM^Y$, the theory
in~\cite{BEKP:FTUD-journal} provides a so-called $a$-approximation
$f_\#^a$ of $f$, which is an endo-function over a suitable subset of
$\Pow{Y}$. Intuitively, given some $Y'$, the set $f_\#^a(Y')$ contains
the points where a decrease of the values of $a$ on the points in $Y'$
``propagates'' through the function $f$.
Understanding that no decrease can be propagated allows one to
establish when a fixpoint of a non-expansive function $f$ is actually
the least one, and, more generally, when a (post-)fixpoint of $f$ is
above the least fixpoint.

The above intuition is formalised using tools from abstract
interpretation~\cite{cc:ai-unified-lattice-model,CC:TLA}. In
particular, we define a pair of functions, which, under suitable
conditions, form a Galois connection.  For $a, b \in \monM^Y$, let
$[a,b] = \{c\in\monM^Y \mid a\sqsubseteq c\sqsubseteq b\}$ and let
\[ \Ytop{Y}{a} = \{ y\in Y\mid a(y) \neq 0 \}. \] Intuitively
$\Ytop{Y}{a}$ contains those elements where we have ``wiggle room'',
i.e., a decrease is in principle feasible. Then for
$0\sqsubset\delta \in \monM$ define
$\alpha^{a,\delta} : \Pow{\Ytop{Y}{a}} \to \interval{a \ominus
  \delta}{a}$ and
$\gamma^{a,\delta} : \interval{a \ominus \delta}{a} \to
\Pow{\Ytop{Y}{a}}$, as follows: for $Y' \in \Pow{\Ytop{Y}{a}}$ and
$b \in \interval{a \ominus \delta}{a}$, we have
\[
  \alpha^{a,\delta}(Y') = a \ominus \delta_{Y'} \qquad  \qquad
  \gamma^{a,\delta}(b) = \{ y\in \Ytop{Y}{a} \mid a(y) \ominus b(y)
  \sqsupseteq \delta \}.
\]
where we write $\delta_{Y'}$ for the function
defined by $\delta_{Y'}(y) = \delta$ if $y \in Y'$ and
$\delta_{Y'}(y) = \zero$, otherwise.

\begin{center}
  \begin{tikzpicture}[->]
    \node (y) {$\Pow{\Ybot{Y}{a}}$};
    \node [right=of y] (b) 
    {$\interval{a}{a+\delta}$}; 
    \draw [->,thick,bend left] (y) to node [above]{$\alpha_{a,\delta}$} (b); 
    \draw [->,thick,bend left] (b) to node [below]{$\gamma_{a,\delta}$}
    (y);
  \end{tikzpicture}
\end{center}
One can see that for sufficiently small $\delta$ the pair above is
indeed a Galois connection.

Now, in order to allow for a compositional approach, the approximation
is defined for non-expansive functions, where domain and codomain
are possibly distinct.

\begin{definition}[approximation]
  Given $f: \monM^Y\to \monM^Z$ and $\delta \in \monM$, define
  $f^{a,\delta}_\# \colon \Pow{\Ytop{Y}{a}} \to \Pow{\Ytop{Z}{f(a)}}$ as
  $f_\#^{a,\delta} = \gamma^{f(a),\delta} \circ f \circ
  \alpha^{a,\delta}$.
  We then define the \emph{$a$-approximation} of $f$
  as
  \[ f_\#^a = \bigcup_{\delta \sqsupset 0} f_\#^{a,\delta}.\]
\end{definition}

Intuitively, for $Y' \subseteq \Ytop{Y}{a}$, we have that
$f_\#^{a,\delta}(Y')$ is the set of points to which $f$ propagates a
decrease of the function $a$ with value $\delta$ on the subset $Y'$.

For finite sets $Y$ and $Z$ all functions $f_\#^{a,\delta}$, for
$\delta$ below some bound, are equal. Here, the $a$-approximation is
given by $f_\#^a = f_\#^{a,\delta}$ for such a $\delta$.

As stated above, the set $f_\#^a(Y')$ contains the points where a
decrease of the values of $a$ on the points in $Y'$ is propagated by
applying the function $f$.  The greatest fixpoint of $f_\#^a$ gives us
the subset of $Y$ where such a decrease is propagated in a cycle (a
so-called ``vicious cycle''). Whenever $\nu f_\#^a$ is non-empty, one
can argue that $a$ cannot be the least fixpoint of $f$ since we can
decrease the value of $a$ at all elements of $\nu f_\#^a$, obtaining a
smaller pre-fixpoint. Interestingly, for non-expansive functions, also
the converse holds, i.e., emptiness of the greatest fixpoint of
$f_\#^a$ implies that $a$ is the least fixpoint. This is summarised by
the following result from~\cite{BEKP:FTUD-journal}.

\begin{theorem}[soundness and completeness for fixpoints]
  \label{th:fixpoint-sound-compl}
  Let $\monM$ be a complete
  MV-chain, $Y$ a finite set and $f : \monM^Y\to \monM^Y$ be a
  non-expansive function.
  Let $a \in \monM^Y$ be a fixpoint of $f$. Then 
  $\nu f^a_{\#} = \emptyset$ if and only if $a = \mu f$.
  
  If $a$ is not the least fixpoint and thus
  $\nu f^a_{\#} \neq \emptyset$ then there is
  $0 \sqsubset \delta \in \monM$ such that
  $a \ominus \delta_{\nu f^a_{\#}}$ is a pre-fixpoint of $f$.
\end{theorem}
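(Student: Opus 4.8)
The plan is to reduce the statement to two constructions. Since $Y$ is finite I may fix a single $0 \sqsubset \delta \in \monM$ small enough that $f_\#^a = f_\#^{a,\delta}$, and I recall that $f_\#^{a,\delta} = \gamma^{f(a),\delta} \circ f \circ \alpha^{a,\delta}$ is a \emph{monotone} endomap of the complete lattice $\Pow{\Ytop{Y}{a}}$ (being the composite of the antitone $\alpha^{a,\delta}$, the monotone $f$, and the antitone $\gamma^{f(a),\delta}$, and using $f(a)=a$ so that domain and codomain coincide). In particular $N := \nu f_\#^a$ is a genuine fixpoint, $f_\#^{a,\delta}(N)=N$. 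I would prove the \emph{second} statement first and note that it already delivers the soundness direction $a=\mu f \Rightarrow \nu f_\#^a=\emptyset$; the completeness direction $\nu f_\#^a=\emptyset \Rightarrow a=\mu f$ is the real work.

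\emph{Second statement and soundness.} Put $N=\nu f_\#^a$ and $b=\alpha^{a,\delta}(N)=a\ominus\delta_N$. Unfolding $N=f_\#^{a,\delta}(N)=\{y\in\Ytop{Y}{a}\mid a(y)\ominus f(b)(y)\sqsupseteq\delta\}$ (here $f(a)=a$), every $y\in N$ satisfies $f(b)(y)\sqsubseteq a(y)\ominus\delta=b(y)$, using also $f(b)\sqsubseteq f(a)=a$. For $y\notin N$ one has $b(y)=a(y)$ and, by monotonicity, $f(b)(y)\sqsubseteq f(a)(y)=a(y)=b(y)$; points outside $\Ytop{Y}{a}$ are identical since there $a(y)=0=b(y)$. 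Hence $f(b)\sqsubseteq b$, so $b=a\ominus\delta_N$ is a pre-fixpoint -- exactly the second claim. If moreover $N\neq\emptyset$ then, for small $\delta$, $b$ is strictly below $a$ (it drops at any $y\in N\subseteq\Ytop{Y}{a}$, where $a(y)\neq0$), whence $\mu f\sqsubseteq b\sqsubset a$ and $a\neq\mu f$; contrapositively $a=\mu f$ forces $\nu f_\#^a=\emptyset$.

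\emph{Completeness.} It remains to show $a\neq\mu f\Rightarrow\nu f_\#^a\neq\emptyset$, and here non-expansiveness is essential. The naive guess $N=\mathrm{supp}(a\ominus\mu f)$ fails: non-expansiveness only \emph{upper}-bounds the propagated decrease, so a coordinate whose update mixes in an un-decreased value need not drop by the full $\delta$ (a two-point averaging map already exhibits this). Instead I take the maximal-gap set
\[
  N^*=\arg\max_{y\in Y}\bigl(a(y)\ominus\mu f(y)\bigr),\qquad M=\norm{a\ominus\mu f},
\]
which is non-empty with $M\sqsupset0$ (as $\mu f\sqsubset a$) and satisfies $N^*\subseteq\Ytop{Y}{a}$. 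Choosing $\delta$ below the gap $M\ominus\max_{y\notin N^*}(a(y)\ominus\mu f(y))$ (and below the earlier thresholds), one checks $\mu f\sqsubseteq b:=a\ominus\delta_{N^*}$ and $\norm{b\ominus\mu f}=M\ominus\delta$. Applying non-expansiveness to the pair $(\mu f,b)$ together with $f(\mu f)=\mu f$ gives $\norm{f(b)\ominus\mu f}\sqsubseteq M\ominus\delta$, hence $f(b)(y)\ominus\mu f(y)\sqsubseteq M\ominus\delta$ for all $y$. On $N^*$ this is tight: there $a(y)=(a(y)\ominus\mu f(y))\oplus\mu f(y)=M\oplus\mu f(y)$, so the MV-residuation $x\ominus c\sqsubseteq d\iff x\sqsubseteq c\oplus d$ yields $f(b)(y)\sqsubseteq\mu f(y)\oplus(M\ominus\delta)=a(y)\ominus\delta$, i.e.\ $a(y)\ominus f(b)(y)\sqsupseteq\delta$ and $y\in f_\#^{a,\delta}(N^*)$. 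Thus $N^*\subseteq f_\#^{a,\delta}(N^*)$ is a non-empty post-fixpoint and $\emptyset\neq N^*\subseteq\nu f_\#^a$.

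\emph{Main obstacle.} The crux is completeness and, within it, the choice of witness: the argument produces not the whole support of $a\ominus\mu f$ but only the subset where the distance to $\mu f$ attains its maximum $M$, because that is precisely where the non-expansiveness inequality, read ``from below'' via the pair $(\mu f,b)$, is forced to be an equality. What remains is routine but must be handled over a general complete MV-chain rather than $[0,1]$: the identity $(x\ominus y)\oplus y=x$ for $y\sqsubseteq x$, the residuation law $x\ominus c\sqsubseteq d\iff x\sqsubseteq c\oplus d$, and the existence of one positive $\delta$ lying below the finitely many positive thresholds used above.
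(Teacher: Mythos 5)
Your proposal is correct and takes essentially the same route as the source of this theorem: the present paper states it without proof, importing it from~\cite{BEKP:FTUD-journal}, where soundness is likewise obtained by showing that $a \ominus \delta_{\nu f^a_{\#}}$ is a pre-fixpoint (for $\delta$ small enough that $f^a_{\#} = f^{a,\delta}_{\#}$, which uses finiteness of $Y$) and completeness by exhibiting the set where $a \ominus \mu f$ attains its maximal value as a non-empty post-fixpoint of the approximation, forced by non-expansiveness applied to the pair $(\mu f, a \ominus \delta_{N^*})$. The MV-chain facts you lean on (residuation $x \ominus c \sqsubseteq d \iff x \sqsubseteq c \oplus d$, the identity $(x \ominus y) \oplus y = x$ for $y \sqsubseteq x$, and the existence of one positive $\delta$ below the finitely many positive thresholds) are all valid, so there is no gap.
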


Using the above theorem we can check whether some fixpoint $a$ of $f$
is the least fixpoint. Whenever $a$ is a fixpoint, but not yet the
least fixpoint of $f$, it can be decreased by a fixed value in $\monM$
(see~\cite[Proposition 4.5]{BEKP:FTUD-journal} for the details) on the points in
$\nu f^a_{\#}$ to obtain a smaller pre-fixpoint. In this way we obtain
$a' \sqsubset a$ such that $f(a') \sqsubseteq a'$ and can continue
fixpoint iteration from there.

This results in the following biconditional proof rule (where
``biconditional'' means that it can be used in both directions).

\[
  \frac{a = f(a) \qquad \nu f_\#^a = \emptyset}{a = \mu f}
\]

When $a \in \monM^Y$ is not a fixpoint, but a post-fixpoint of $f$
(i.e., $a\sqsubseteq f(a)$), a restriction of
the $a$-approximation of $f$ leading to a sound (but not biconditional)
rule.

\begin{lemma}[soundness for post-fixpoints]
  Let $\monM$ be a complete MV-chain, $Y$ a finite set and
  $f : \monM^Y\to \monM^Y$ a non-expansive function, $a \in \monM^Y$
  such that $a \sqsubseteq f(a)$.  Define
  \[ \Ytop{Y}{a=f(a)} = \{ y \in \Ytop{Y}{a} \mid a(y) = f(a)(y) \} \]
  and restrict the approximation
  $f^a_{\#}: \Pow{\Ytop{Y}{a}} \to \Pow{\Ytop{Y}{f(a)}}$ to an
  endo-function
  \begin{eqnarray*}
    f_*^a : \Ytop{Y}{a=f(a)} & \to & \Ytop{Y}{a=f(a)} \\
    f_*^a(Y') & = & f^a_{\#}(Y') \cap \Ytop{Y}{a=f(a)}
  \end{eqnarray*}
  If $\nu f^a_* = \emptyset$ then $a \sqsubseteq \mu f$.
\end{lemma}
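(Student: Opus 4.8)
The plan is to prove the contrapositive: assuming $a \not\sqsubseteq \mu f$, I will exhibit a non-empty post-fixpoint of $f_*^a$, which gives $\nu f_*^a \neq \emptyset$ since $\nu f_*^a$ is the join of all post-fixpoints. The natural comparison point is $\mu f$ itself, which in particular is a pre-fixpoint, so $f(\mu f) \sqsubseteq \mu f$; and since $a$ is a post-fixpoint, $a \sqsubseteq f(a)$. Because $a \not\sqsubseteq \mu f$, the pointwise excess $a \ominus \mu f$ is non-zero somewhere, and by finiteness of $Y$ the value $\delta := \max\{ a(y) \ominus \mu f(y) \mid y \in Y\}$ is well defined with $\delta \sqsupset 0$. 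Let $W = \{ y \in Y \mid a(y) \sqsupset \mu f(y)\}$ be the full ``bad'' set and $V = \{ y \in Y \mid a(y) \ominus \mu f(y) = \delta\}$ its top level; both are non-empty, and since $a(y) \sqsupset \mu f(y) \sqsupseteq 0$ forces $a(y) \neq 0$, both are contained in $\Ytop{Y}{a}$.

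The computational core is to analyse the uniform decrease $b := \alpha^{a,\delta}(W) = a \ominus \delta_W$. Using the MV-identity $x \ominus (x \ominus y) = x \sqcap y$ one checks $b \sqsubseteq \mu f$ pointwise (on $W$ because the excess is $\sqsubseteq \delta$, off $W$ because $a \sqsubseteq \mu f$ there), whence $f(b) \sqsubseteq f(\mu f) \sqsubseteq \mu f$ by monotonicity and the pre-fixpoint property. For $y \in V$ I then combine $f(a)(y) \sqsupseteq a(y)$ (post-fixpoint), $f(b)(y) \sqsubseteq \mu f(y)$, and monotonicity of $\ominus$ in both arguments to get $f(a)(y) \ominus f(b)(y) \sqsupseteq a(y) \ominus \mu f(y) = \delta$; together with the non-expansiveness bound $f(a)(y) \ominus f(b)(y) \sqsubseteq \norm{f(a) \ominus f(b)} \sqsubseteq \norm{a \ominus b} = \delta$ this forces equality $f(a)(y) \ominus f(b)(y) = \delta$. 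A short calculation using $x = y \oplus (x \ominus y)$ for $y \sqsubseteq x$ on both sides then sandwiches $a(y) = \mu f(y) \oplus \delta \sqsupseteq f(b)(y) \oplus \delta = f(a)(y) \sqsupseteq a(y)$, so $a(y) = f(a)(y)$. Hence $V \subseteq \Ytop{Y}{a=f(a)}$, and since moreover $f(a)(y) \neq 0$, the bound $f(a)(y) \ominus f(b)(y) \sqsupseteq \delta$ witnesses $V \subseteq f_\#^{a,\delta}(W) \subseteq f_\#^a(W)$.

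The main obstacle is that this argument only certifies the \emph{top} excess level $V$: a uniform decrease by $\delta$ detects exactly the points of maximal excess, so one does not immediately get that all of $W$ (or $W \cap \Ytop{Y}{a=f(a)}$) is a post-fixpoint, and $V$ is reached from $W$ rather than from itself, so the statement is not yet self-referential. I expect to close this gap by an induction over the finitely many excess levels of $a \ominus \mu f$ (equivalently, mirroring the fixpoint iteration computing $\mu f$): peeling levels from the top, one shows that the union $D$ of those level sets lying in $\Ytop{Y}{a=f(a)}$ satisfies $D \subseteq f_\#^a(D)$, invoking at each level the same sandwich computation with the appropriate $\delta' \sqsupset 0$ inside $f_\#^a = \bigcup_{\delta'\sqsupset 0} f_\#^{a,\delta'}$, and observing that the restriction to $\Ytop{Y}{a=f(a)}$ built into $f_*^a$ is precisely what keeps the construction inside the domain, yielding $D \subseteq f_*^a(D)$. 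Since $D \supseteq V \neq \emptyset$, this produces a non-empty post-fixpoint and completes the contrapositive. A structurally cleaner alternative I would try first is to pass to the least fixpoint $a^*$ of $f$ above $a$ (which exists by Knaster--Tarski on $[a,\top]$, preserved by $f$ since $a \sqsubseteq f(a)$) and exploit the equivalence $a \sqsubseteq \mu f \iff a^* = \mu f$ together with Theorem~\ref{th:fixpoint-sound-compl} applied to $a^*$; there the delicate step is transporting the non-empty $\nu f_\#^{a^*}$ back to $f_*^a$ on the points where $a \sqsubset a^*$.
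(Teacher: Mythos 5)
Your opening computation is correct as far as it goes (the identities $b=a\ominus\delta_W\sqsubseteq\mu f$, the equality $f(a)(y)\ominus f(b)(y)=\delta$ on $V$, and the sandwich giving $V\subseteq\Ytop{Y}{a=f(a)}$ all check out), but the lemma is not proved, and neither of your two proposed completions closes the self-reference gap you correctly identified. Note first that this paper states the lemma without proof, recalling it from the cited journal version, so the comparison point is the argument there rather than anything in this text. For your level-peeling plan: $f_\#^a$ is monotone, so from $D\subseteq W$ you get $f_\#^a(D)\subseteq f_\#^a(W)$ -- the inclusion $V\subseteq f_\#^a(W)$ therefore gives no information whatsoever about $f_\#^a(D)$, and rerunning your sandwich with the decrease restricted to $D$ breaks at the lower agreement levels. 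Concretely, if a non-agreement level of excess $\delta_2$ sits between the top level and an agreement level of excess $\delta_3\sqsubset\delta_2$, then after decreasing only on $D$ the best available bound is $f(b)(y)\sqsubseteq \mu f(y)\oplus\norm{b\ominus\mu f}$ with $\norm{b\ominus\mu f}\sqsupseteq\delta_2$ (the middle level is untouched), and $\delta_3\ominus\delta_2=0$, so no propagated decrease at the $\delta_3$-level can be certified: the slack there may be caused entirely by the un-decreased intermediate level, and your claimed $D\subseteq f_\#^a(D)$ is unsupported. The alternative via the least fixpoint $a^*$ above $a$ is likewise not a proof: $\nu f_\#^{a^*}$ is a vicious cycle for the base point $a^*$, its elements can lie outside $\Ytop{Y}{a=f(a)}$, and the ``delicate step'' you defer -- converting it into a post-fixpoint of $f_*^a$ -- is exactly where the entire difficulty lives.

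The good news is that your own computation repairs easily and no induction over levels is needed: the top level alone is self-supporting once you change the decrease set from $W$ to $V$ and the test amount from $\delta$ to the gap between the top two excess levels. Put $\delta':=\max_{y\notin V}\,(a(y)\ominus\mu f(y))$, which satisfies $\delta'\sqsubset\delta$ by finiteness and totality, set $\delta'':=\delta\ominus\delta'\sqsupset 0$ and $b:=a\ominus\delta''_V$. The one estimate missing from your write-up is non-expansiveness anchored at $\mu f$ rather than at $a$: on $V$ one has $b(y)\ominus\mu f(y)=(a(y)\ominus\mu f(y))\ominus\delta''=\delta\ominus(\delta\ominus\delta')=\delta\sqcap\delta'=\delta'$, and $\sqsubseteq\delta'$ elsewhere, so $\norm{f(b)\ominus f(\mu f)}\sqsubseteq\norm{b\ominus\mu f}\sqsubseteq\delta'$, i.e.\ $f(b)\sqsubseteq\mu f\oplus\delta'$ pointwise. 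Then for $y\in V$, using the post-fixpoint property $f(a)(y)\sqsupseteq a(y)$, you get $f(a)(y)\ominus f(b)(y)\sqsupseteq a(y)\ominus(\mu f(y)\oplus\delta')=(a(y)\ominus\mu f(y))\ominus\delta'=\delta\ominus\delta'=\delta''$, which (together with $f(a)(y)\sqsupseteq a(y)\sqsupseteq\delta\sqsupset 0$) gives $V\subseteq f_\#^{a,\delta''}(V)\subseteq f_\#^a(V)$. Your sandwich then goes through verbatim with $\norm{a\ominus b}=\delta''$ and $\delta'\oplus\delta''=\delta'\oplus(\delta\ominus\delta')=\delta\sqcup\delta'=\delta$: the forced equality $f(a)(y)\ominus f(b)(y)=\delta''$ yields $f(a)(y)=f(b)(y)\oplus\delta''\sqsubseteq\mu f(y)\oplus\delta=a(y)\sqsubseteq f(a)(y)$, so $V\subseteq\Ytop{Y}{a=f(a)}$. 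Hence $V\subseteq f_\#^a(V)\cap\Ytop{Y}{a=f(a)}=f_*^a(V)$, and Knaster--Tarski gives $\nu f_*^a\sqsupseteq V\neq\emptyset$, completing the contrapositive.
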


Written more compactly, we obtain the following proof rule:

\[
  \frac{a \sqsubseteq f(a) \qquad \nu f_*^a = \emptyset}{a \sqsubseteq
    \mu f}
\]

The above theory can be easily be dualised to checking greatest fixpoints.

\subsection{Approximations for Basic Functions}
\label{sec:approximations-basic}

We next provide the approximation for a number of non-expansive
functions which can be used as basic building blocks for constructing
the semantics functions of interest (Some of them have already been
considered in Table~\ref{tab:basic-functions}.)

In Table~\ref{tab:basic-functions-approximations} these basic
non-expansive functions are listed together with their
approximations. In particular, $\mathcal{D}(Y)\subseteq [0,1]^Y$
denotes the set of finitely supported probability distributions, i.e.,
functions $\beta : Y \to [0,1]$ with finite support such that
$\sum_{y \in Y} \beta(y) = 1$.

Note that functions $\min$ and $\max$ are slightly generalised with respect to
Table~\ref{tab:basic-functions} as they can be parameterised by
relations instead of functions.
We stress, in particular, that the approximation of reindexing is
given by the inverse image, a fact that will be extensively used in the
paper.

\begin{table}[h]
  \small
  \begin{center}
    \begin{tabular}{|l|l|l|}
      \hline function $f$ & definition of $f$ & $f_\#^a(Y')$ \\
      \hline\hline
      $c_k$ & $f(a) = k$ & $\emptyset$ \mystrutbl \\        
      ($k\in\monM^Z$) && \\ \hline
      $u^*$ & $f(a) = a \circ u$ & $u^{-1}(Y')$ \mystrutbl \\
      ($u\colon Z\to Y$) & & \\ \hline
      $\mins_\mathcal{R}$ &
      $f(a)(z) = \min\limits_{y\mathcal{R}z} a(y)$ &
      $\{z \in \Ytop{Z}{f(a)} \mid\ \arg\min\limits_{y\in\mathcal{R}^{-1}(z)} a(y)\cap Y' \neq \emptyset\}$ \mystrutbl \\       
      ($\mathcal{R}\subseteq Y\times Z$) & & \qquad        
      \\ \hline
      $\maxs_\mathcal{R}$ &
      $f(a)(z) = \max\limits_{y\mathcal{R}z} a(y)$ &
      $\{z \in \Ytop{Z}{f(a)} \mid\ \arg\max\limits_{y\in\mathcal{R}^{-1}(z)} a(y) \subseteq
      Y'\}$
      \mystrutbl \\
      ($\mathcal{R}\subseteq Y\times Z$) &&       
      \\
      \hline \mystrutab$\tilde{\mathcal{D}}$ \quad ($\monM = [0,1]$, &
      $f(a)(p) = \sum\limits_{y\in Y} p(y)\cdot a(y)$ &
      $\{p \in \Ytop{D}{f(a)} \mid \mathit{supp}(p) \subseteq Y'\}$ \\
      $Z = D \subseteq \mathcal{D}(Y)$) & &
      \\
      \hline
    \end{tabular}
  \end{center}
  \caption{\normalsize Basic functions $f\colon \monM^Y\to \monM^Z$
    (constant, reindexing, minimum, maximum, average) and their approximations
    $f^a_\#\colon \Pow{\Ytop{Y}{a}} \to
    \Pow{\Ytop{Z}{f(a)}}$. 
  }
  \label{tab:basic-functions-approximations}
\end{table}

\section{A Categorical View of the Approximation Framework}
\label{sec:categories-functor}

In this section we argue that the framework
from~\cite{BEKP:FTUD-journal}, summarised in the previous section, can
be naturally reformulated in a categorical setting. In particular,
here we study the compositionality properties of the operation 
mapping a function $f$ to its approximation $f_\#^a$ (for a given
fixpoint $a$ of $f$). We show that, under some constraints, it
can be characterised as a functor and, in general, as a union of (lax)
functors.

We will use some standard notions from category theory, in particular
categories, functors and natural transformations. The definition of
(strict) gs-monoidal categories will be spelled out in detail later in
Definition~\ref{def:gs-monoidal}.

We first define a concrete category $\C$ whose arrows are the
non-expansive functions for which we seek the least (or greatest)
fixpoint and a category $\A$ whose arrows are the corresponding
approximations.  Recall that, as discussed in the previous section,
given a non-expansive function $f : \monM^Y \to \monM^Z$, the
approximation of $f$ is relative to a fixed map $a \in \monM^Y$. Hence
objects in $\C$ are intuitively pairs
$\langle \monM^Y, a \in \monM^Y \rangle$ or
$\langle Y, a \in \monM^Y \rangle$. Since $a \in \monM^Y$ determines
$\monM^Y$ as its domain, for simplifying the notation, we leave the
first component implicit and let objects in $\C$ be elements
$a \in \monM^Y$ and an arrow from $a \in \monM^Y$ to $b \in \monM^Z$
is a non-expansive function $f : \monM^Y \to \monM^Z$ required to map
$a$ into $b$ ($f(a)=b$). The approximations instead live in a
different category $\A$. Recall that the approximation $f^a_{\#}$ is
of type $\Pow{[Y]^a} \to \Pow{[Z]^b}$. Since the domain and codomain
are again dependent on maps $a$ and $b$, we still employ elements of
$\monM^Y$ as objects, but arrows are functions between powersets.

\begin{definition}[concrete category and category of approximations]
  \label{def:cats-functor-hash}
  We define two categories, the \emph{concrete category} $\C$ and the
  \emph{category of approximations} $\A$ and a mapping
  $\#\colon \C\to\A$.
  \begin{itemize}
  \item The concrete category $\C$ has as objects maps $a \in\monM^Y$
    where $Y$ is a (possibly infinite) set.
    Given $a \in \monM^Y$, $b \in \monM^Z$ an arrow $f: a \arr b$ is a
    non-expansive function $f\colon \monM^Y\to \monM^Z$, such that
    $f(a)=b$.
  
  \item The category of approximations $\A$ has again maps
    $a \in \monM^Y$ as objects.
    Given $a \in \monM^Y$, $b \in \monM^Z$ an arrow $g: a \arr b$
    is a monotone (with respect to inclusion) function
    $g\colon \Pow{[Y]^a}\to \Pow{[Z]^b}$.
    Arrow composition and identities are the obvious ones.
  
  \item The approximation maps $\#^\delta \colon \C \to \A$ (for
    $\delta \sqsupset 0$) and $\#\colon \C \to \A$ are defined as
    follows: for an object $a\in \monM^Y$, we let
    $\#(a) = \#^\delta(a) = a$ and, given an arrow $f : a \arr b$, we
    let $\#^\delta(f) = f_\#^{a,\delta}$ and
    $\#(f) = \bigcup_{\delta\sqsupset 0} \#^\delta(f) = f_\#^a$.
  \end{itemize}
\end{definition}

Note that categorical arrows are represented as dashed ($\arr$). By
definition each such categorical arrow consists of an underlying
function whose domain and codomain are sets. The underlying functions,
being set-valued, are represented as usual (using the notation $\to$).

\begin{lemmarep}[well-definedness]
  \label{le:cate}
  The categories $\C$ and $\A$ are well-defined and the $\#^\delta$
  are lax functors, i.e., identities are preserved and
  $\#^\delta(f) \circ \#^\delta(g)\subseteq \#^\delta(f\circ g)$ for
  composable arrows $f,g$ in $\C$.
\end{lemmarep}

\begin{proof}
  \mbox{}

  \begin{enumerate}
  \item \emph{$\C$ is a well-defined category}: Given arrows
    $f\colon a\arr b$ and $g\colon b\arr c$ then $g\circ f$ is
    non-expansive (since non-expansiveness is preserved by composition) and
    $(g\circ f)(a) = g(b) = c$, thus $g\circ f\colon a\arr c$.
    Associativity holds and the identities are the units of composition
    as for standard function composition.

  \item \emph{$\A$ is a well-defined category}: Given arrows
    $f\colon a\arr b$ and $g\colon b\arr c$ then $g\circ f$ is
    monotone (since monotonicity is preserved by composition) and hence
    $g\circ f\colon a\arr c$.
 
    Again associativity and the fact that the identities are units is
    immediate. 

  \item \emph{$\#^\delta\colon \C\to \A$ is a lax functor}: we first
    check that identities are preserved. Let $U\subseteq [Y]^a$, then
    \begin{align*}
      \#^\delta(\mathit{id}_a)(U) &=
      (\mathit{id}_a)^{a,\delta}_\#(U) \\
      &= \{ y\in [Y]^{\mathit{id}_a(a)}
      \mid \mathit{id}_a(a)(y)
      \ominus \mathit{id}_a(a\ominus \delta_{U})(y)  \sqsupseteq \delta\}\\
      &= \{ y\in [Y]^{a} \mid a(y)
      \ominus
      (a\ominus \delta_{U})(y) \sqsupseteq \delta\}\\
      &=U = \mathit{id}_{a}(U) = \mathit{id}_{\#^\delta(a)}(U).
    \end{align*}
    where in the second last line we use the fact that $U\subseteq [Y]^a$.
    
    Let $a\in \monM^Y$, $b\in\monM^Z$, $c\in \monM^V$,
    $f\colon a\arr b$, $g\colon b\arr c$ be arrows in $\C$ and
    $Y'\subseteq [Y]^a$. Then
    \begin{align*}
      (\#^\delta(g) \circ \#^\delta(f))(Y') 
      &= g_\#^{b,\delta} ( f_\#^{a,\delta}(Y')) \\
      &= (\gamma^{c,\delta} \circ g \circ \alpha^{b,\delta}
      \circ \gamma^{b,\delta} \circ f \circ
      \alpha^{a,\delta})(Y') \\
      &\subseteq (\gamma^{g(f(a)),\delta} \circ g
      \circ f \circ \alpha^{a,\delta})(Y') \\
      &=
      (g\circ f)_\#^{a,\delta}(Y') \\
      &= \#^\delta(g\circ f) (Y')
   \end{align*}
  The inequality holds since for $c\in\monM^Z$:
  \[ \alpha^{b,\delta}(\gamma^{b,\delta}(c))
    = \alpha^{b,\delta}(\{y\in Y\mid b(y)\ominus c(y)\sqsupseteq
    \delta\}) = b\ominus \delta_{\{y\in Y\mid b(y)\ominus c(y)\sqsupseteq
      \delta\}} \sqsupseteq c. \]
  Then the inequality follows from the
  antitonicity of $\gamma^{c,\delta}$. (Remember that we are working
  with a contra-variant Galois connection.) \qedhere
\end{enumerate}
\end{proof}

Note that while $\#$ clearly also preserves identities, the question
whether it is a lax functor (or even a proper functor) is currently
open. It is however the union of lax functors.

We next observe that $\#$ is a functor if we restrict to suitable
subcategories of $\mathbb{C}$, i.e., the subcategory where arrows are
reindexings and the one where objects are maps on finite sets.
This allows to recover compositionality in those cases in which it is
required by the intended applications (see
Sections~\ref{sec:liftings}-\ref{sec:gs-mon}).

\begin{definition}[reindexing subcategory]
  We denote by $\C^*$ the sub-category of $\C$ that contains all
  objects and where arrows are restricted to reindexings, i.e., given
  objects $a \in \monM^Y$, $b \in \monM^Z$ we consider only arrows
  $f : a \arr b$ such that $f = g^*$ for some $g\colon Z\to Y$ (hence,
  in particular, $b = g^*(a) = a\circ g$).
\end{definition}

We prove the following auxiliary lemma that basically shows that
reindexings are preserved by $\alpha,\gamma$:

\begin{lemmarep}
  \label{lem:fibredness}
  Given $a\in \monM^Y$, $g\colon Z \to Y$ and
  $0\sqsubset \delta \in \monM$, then we have
  \begin{enumerate}
  \item
    $\alpha^{a\circ g,\delta} \circ g^{-1} = g^*\circ
    \alpha^{a,\delta}$
  \item
    $\gamma^{a\circ g,\delta} \circ g^* = g^{-1}\circ
    \gamma^{a,\delta}$
  \end{enumerate}
  This implies that for two $\C$-arrows $f\colon a\arr b$,
  $h\colon b\arr c$, it holds that $\#(h\circ f) = \#(h)\circ \#(f)$
  whenever $f$ or $h$ is a reindexing, i.e., is contained in
  $\C^*$. 
\end{lemmarep}

\begin{proof}
  ~
  \begin{enumerate}
  \item Let $Y'\subseteq \Ytop{Y}{a}$. Then
    \begin{align*}
      g^*(\alpha^{a,\delta}(Y')) &= g^*(a\ominus \delta_{Y'}) = (a\ominus
      \delta_{Y'})\circ g = a\circ g \ominus \delta_{Y'}\circ g \\
      &= a\circ g \ominus \delta_{g^{-1}(Y')} = \alpha^{a\circ
        g,\delta}(g^{-1}(Y')) 
    \end{align*}
    where we use that $(\delta_{Y'}\circ g)(z) = \delta$ if $g(z)\in
    Y'$, equivalent to $z\in g^{-1}(Y')$, and $0$ otherwise. Hence
    $\delta_{Y'}\circ g = \delta_{g^{-1}(Y')}$.
  \item Let $b\in \monM^Y$ with $a\ominus \delta\sqsubseteq b\sqsubseteq
    a$. Then
    \begin{align*}
      \gamma^{a\circ g,\delta}\circ g^*(b) &=
      \{z\in Z\mid a(g(z))\ominus
      b(g(z)) \sqsupseteq \delta \} = \{z\in Z\mid
      g(z)\in\gamma^{a,\delta}(b)\} \\
      &= g^{-1}(\gamma^{a,\delta}(b))
    \end{align*}

  \end{enumerate}
  It is left to show that $\#(h\circ f) = \#(h)\circ \#(f)$ whenever
  $f$ or $h$ is a reindexing. Recall that on reindexings it holds that
  $\#(g^*) = g^{-1}$.

  Let $a\in\monM^Y$, $b\in\monM^Z$, $c\in\monM^W$ and assume first
  that $f$ is a reindexing, i.e., $f=g^*$ for some $g\colon Z\to
  Y$. Let $Y'\subseteq [Y]^a$, then
  \begin{align*}
    \#(h\circ f) &= (h\circ f)_\#^a = \bigcup_{\delta\sqsupset 0} 
    (\gamma^{h(f(a)),\delta}\circ h \circ f
    \circ \alpha^{a,\delta})(Y') \\
    &= \bigcup_{\delta\sqsupset 0} 
    (\gamma^{h(f(a)),\delta}\circ h \circ g^*
    \circ \alpha^{a,\delta})(Y') \\
    &= \bigcup_{\delta\sqsupset 0} 
    (\gamma^{h(f(a)),\delta}\circ h \circ \alpha^{a\circ g,\delta})(g^{-1}(Y')) && (1) \\
    &= \bigcup_{\delta\sqsupset 0} 
    (\gamma^{h(f(a)),\delta}\circ h \circ \alpha^{f(a),\delta})(\#(g^*)(Y')) \\
    &= (\#(h)\circ \#(f))(Y')
  \end{align*}
  Now we assume that $h$ is a reindexing, i.e., $h=g^*$ for some
  $g\colon W\to Z$. Let again $Y'\subseteq [Y]^a$, then:
  \begin{align*}
    \#(h\circ f) &= (h\circ f)_\#^a = \bigcup_{\delta\sqsupset 0} 
    (\gamma^{h(f(a)),\delta}\circ h \circ f
    \circ \alpha^{a,\delta})(Y') \\
    &= \bigcup_{\delta\sqsupset 0} 
    (\gamma^{f(a)\circ g,\delta}\circ g^* \circ f
    \circ \alpha^{a,\delta})(Y') \\
    &= \bigcup_{\delta\sqsupset 0} g^{-1}(
    (\gamma^{f(a),\delta} \circ f
    \circ \alpha^{a,\delta})(Y'))  &&\text{(2)}\\
    &= g^{-1}(\bigcup_{\delta\sqsupset 0} 
    (\gamma^{f(a),\delta} \circ f
    \circ \alpha^{a,\delta})(Y')) &&\text{[preimage preserves union]} \\
     &= \#(g^*)(\bigcup_{\delta\sqsupset 0} 
    (\gamma^{f(a),\delta} \circ f
    \circ \alpha^{a,\delta})(Y')) \\
     &= (\#(h)\circ \#(f))(Y') \tag*{\qedhere}
    \end{align*}
\end{proof}

Then, as an immediate corollary, we obtain the functoriality of $\#$
in the corresponding subcategory.

\begin{corollary}[approximation functor for reindexing categories]
  \label{cor:reindexing-functor}
  The approximation map $\#\colon \C\to \A$ restricts to
  $\#\colon \C^* \to \A$, which is a (proper) functor.
\end{corollary}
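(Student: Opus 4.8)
The plan is to derive Corollary~\ref{cor:reindexing-functor} directly from the two lemmas that precede it, essentially by assembling already-proven pieces. The statement asks for two things: first, that $\#$ restricts to a map $\C^* \to \A$ (i.e. that it is well-defined on the subcategory), and second, that this restriction is a genuine functor rather than merely a lax one. The first point requires almost no work: $\C^*$ has the same objects as $\C$ and its arrows are a subclass of $\C$-arrows (the reindexings), and $\#$ is already defined on all of $\C$ by Definition~\ref{def:cats-functor-hash}, sending each arrow $f$ to $f_\#^a$, which by Lemma~\ref{le:cate} lands in $\A$. So the restriction $\#\colon \C^* \to \A$ is simply $\#$ evaluated on reindexing arrows, and there is nothing further to check for well-definedness.

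For functoriality, I would verify the two functor axioms. Preservation of identities is immediate and was already observed in the remark following Lemma~\ref{le:cate}: $\#$ preserves identities on all of $\C$, hence \emph{a fortiori} on $\C^*$ (note also that each identity $\mathit{id}_a$ is the reindexing $(\mathit{id}_Y)^*$, so it is indeed an arrow of $\C^*$). The substantive axiom is preservation of composition, namely $\#(h \circ f) = \#(h) \circ \#(f)$ for composable arrows $f, h$ in $\C^*$. But this is exactly the conclusion of Lemma~\ref{lem:fibredness}, which establishes $\#(h\circ f) = \#(h)\circ \#(f)$ whenever \emph{at least one} of $f$, $h$ is a reindexing. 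In $\C^*$ \emph{both} arrows are reindexings, so the hypothesis of Lemma~\ref{lem:fibredness} is satisfied (in fact with room to spare), and the composition law follows at once.

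The one point worth stating explicitly is that $\C^*$ genuinely is a subcategory, so that the functor axioms even make sense: reindexings are closed under composition, since for $g\colon Z\to Y$ and $g'\colon W\to Z$ we have $g^* \circ g'^* = (g \circ g')^*$ (composition of reindexings is reindexing along the composite map), and the identity arrow $(\mathit{id}_Y)^*$ is a reindexing. Combined with the fact established in Lemma~\ref{le:cate} that these underlying maps are non-expansive and compose correctly in $\C$, this confirms $\C^*$ is a legitimate subcategory. Once this is noted, the corollary is a one-line consequence: identity preservation holds by the remark after Lemma~\ref{le:cate}, and composition preservation holds by Lemma~\ref{lem:fibredness}.

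I do not expect any genuine obstacle here, since this is a corollary whose entire content has been front-loaded into Lemma~\ref{lem:fibredness}. The only mild subtlety is making sure that Lemma~\ref{lem:fibredness} really does cover the case where both arrows are reindexings — but since it covers the case where \emph{either} is a reindexing, the both-reindexings case is a special instance, and there is nothing to prove beyond citing it. The remaining work is purely bookkeeping: checking closure of $\C^*$ under composition and identities, which reduces to the elementary identities $g^* \circ g'^* = (g\circ g')^*$ and $\mathit{id}^* = \mathit{id}$.
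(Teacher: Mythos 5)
Your proposal is correct and takes exactly the paper's route: the paper states this corollary as an immediate consequence of Lemma~\ref{lem:fibredness} (which covers the case where either arrow is a reindexing, hence a fortiori both), with identity preservation already settled in Lemma~\ref{le:cate}, just as you argue. One cosmetic slip in your bookkeeping aside: since $(\cdot)^*$ is contravariant, for $g\colon Z\to Y$ and $g'\colon W\to Z$ the composable composite is $g'^*\circ g^* = (g\circ g')^*$ rather than $g^*\circ g'^*$ --- your verbal description is right, and this does not affect the argument.
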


We next focus on the subcategory of $\C$ where we consider as objects
only maps $a : Y \to \monM$ over a finite set $Y$.

\begin{definition}[finitary subcategories]
  We denote by $\Cf$,
  $\Af$ the full sub-categories of
  $\C,\A$ where objects are of the kind
  $a \in \monM^Y$ for a \emph{finite} set $Y$.
\end{definition}

\begin{lemmarep}[approximation functor for finitary categories]
  \label{lem:lax-functor-proper-finite}
  The approximation map $\#\colon \C\to \A$ restricts to
  $\# \colon \Cf\to\Af$, which is a (proper) functor.
\end{lemmarep}

\begin{proof}
  Clearly the restriction to categories based on finite sets is well-defined.
  
  We show that $\#$ is a (proper) functor. Let $a\in\monM^Y$,
  $b\in\monM^Z$, $c\in\monM^V$, $f\colon a\arr b$, $g\colon b\arr c$
  and $Y'\subseteq [Y]^a$. Then
  \begin{align*}
    \#(g\circ f) = (g\circ f)_\#^{a} = g_\#^{f(a)} \circ f_\#^a =
    g_\#^b \circ f_\#^a = \#(g) \circ \#(f),
  \end{align*}
  The second inequality above is a consequence of the compositionality
  result in~\cite[Proposition~D.3]{BEKP:FTUD-journal}. This
  requires finiteness of the sets $Y,Z,V$.
  
  The rest follows from Lemma~\ref{le:cate}.
\end{proof}

We will later show in Theorem~\ref{thm:gs-monoidal-functor} that $\#$
also respects the monoidal operation $\otimes$ (disjoint union of
functions). Using this, we can now apply the framework
to an example.

\begin{example}
  \label{ex:termination-mc-comp}
  We revisit the example from Section~\ref{sec:motivation}. Remember
  that the function $\mathcal{T}$, whose least fixpoint is termination
  probability, can be written as follows:
  \[ \mathcal{T}= (\eta^*\circ \tilde{\mathcal{D}}) \otimes c_k\] 

  Let $a=\nu\mathcal{T}$ be the greatest fixpoint of $\mathcal{T}$,
  i.e., $\mathcal{T}(a)=a$. Hence we can view
  $\mathcal{T}\colon a\arr a$ as a concrete arrow in $\C$ in the sense
  of Definition~\ref{sec:categories-functor}. Furthermore
  $\#\mathcal{T} = \mathcal{T}_\#^a\colon a\arr a$ is the
  corresponding approximation, living in the category of
  approximations $\A$.
  
  We can compute $\mathcal{T}_\#^a$ compositionally. The subfunctions
  of $\mathcal{T}$ given above are also arrows in $\Cf$ for
  appropriate finite domains and codomains, which we refrain from
  spelling out explicitly.
  Now:
  \[ \mathcal{T}_\#^a = \#\mathcal{T} = \#((\eta^*\circ \tilde{\mathcal{D}})
    \otimes c_k) = \#(\eta^*)\circ \#(\tilde{\mathcal{D}}) \otimes
    \#(c_k) \]
\end{example}

This view enables us to obtain an approximation $\mathcal{T}_\#^a$
compositionally out of the approximations of the subfunctions.

\section{Predicate Liftings}
\label{sec:liftings}

In this section we show how predicate
liftings~\cite{p:coalgebraic-logic,s:coalg-logics-limits-beyond-journal}
can be integrated into our theory.
We will characterise predicate liftings which are non-expansive and
derive their approximations. This will then be used in
Section~\ref{sec:beh-metrics} for treating coalgebraic behavioural
metrics.

\subsection{Predicate Liftings and their Properties}

Roughly speaking, predicates are seen as maps from a set $Y$ to a
suitable set of truth values $\mathcal{V}$. Then, given a functor $F$,
a predicate lifting is an operation which transforms predicates over
$Y$ to predicates over $F Y$.

One of the simplest examples of a predicate lifting is given by
the diamond ($\Diamond$) operator from modal logic. In this case
$F=\mathcal{P}$ (powerset functor) and $\mathcal{V}=\{0,1\}$. Given a
predicate $q\colon Y\to \{0,1\}$,
this is mapped to $\Diamond(q)\colon \mathcal{P}(Y)\to\{0,1\}$ where
$\Diamond(q)(Y')=1$ iff there exists $y\in Y'$ with $q(y)=1$.
Another typical example is expectation where $F=\mathcal{D}$ (distribution functor) and $\mathcal{V}=[0,1]$
(see also
Example~\ref{ex:distribution-predlift} below). In this case a random
variable $r\colon Y\to[0,1]$ is mapped to a function of type
$\mathcal{D}(Y)\to [0,1]$ where $p\mapsto \mathbb{E}_p[r]$.

Predicate liftings have been studied for predicates valued over
arbitrary quantales $\mathcal{V}$ (see,
e.g.,~\cite{bkp:up-to-behavioural-metrics-fibrations}), i.e., complete
lattices with an associative operator that distributes over arbitrary
joins.
It can be shown that every complete MV-algebra is a quantale with
respect to $\oplus$ and the inverse of the natural order. This result
can be easily derived from~\cite{di05:_algeb_lukas}. (See
Lemma~\ref{le:quantale} in the appendix for an explicit proof.) Hence
here we can work with predicates of the kind $Y \to \monM$ where
$\monM$ is a complete MV-algebra.

\begin{toappendix}
  \begin{lemma}[complete MV-algebras are quantales]
  \label{le:quantale}
  Let $\monM$ be a complete MV-algebra. Then
  $(\monM, \oplus, \sqsupseteq)$ is a unital commutative quantale,
  i.e., a quantale with neutral element for $\oplus$.
\end{lemma}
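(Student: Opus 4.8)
The plan is to check the three ingredients of a unital commutative quantale one by one: that $(\monM,\sqsupseteq)$ is a complete lattice, that $(\monM,\oplus,\zero)$ is a commutative monoid, and that $\oplus$ distributes over arbitrary joins of the order $\sqsupseteq$. The first point is immediate, since $\monM$ being a complete MV-algebra is a complete lattice for the natural order $\sqsubseteq$, and reversing a complete lattice yields again a complete lattice, so $(\monM,\sqsupseteq)$ has all joins and meets. The second point is part of the definition of an MV-algebra, which provides commutativity, associativity and the neutral element $\zero$; hence the resulting quantale is commutative and unital (with unit $\zero$). So the whole content of the statement lies in the distributivity law, and by commutativity of $\oplus$ it will suffice to prove it on one side.

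Next I would translate the distributivity condition into the natural order. Joins with respect to $\sqsupseteq$ are precisely meets $\bigsqcap$ with respect to $\sqsubseteq$, so what has to be shown is that for every $y\in\monM$ and every family $\{x_i\}_{i\in I}$ one has
\[
  \Big(\bigsqcap_{i\in I} x_i\Big)\oplus y = \bigsqcap_{i\in I}(x_i\oplus y),
\]
i.e.\ that $\oplus$ distributes over arbitrary meets of the natural order. First I would record that $\oplus$ is monotone in each argument: if $x\sqsubseteq x'$, say $x\oplus z = x'$, then $x'\oplus y = (x\oplus y)\oplus z$, whence $x\oplus y\sqsubseteq x'\oplus y$. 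Monotonicity already gives the inequality $\big(\bigsqcap_i x_i\big)\oplus y \sqsubseteq \bigsqcap_i (x_i\oplus y)$, so only the converse inequality remains at stake.

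The tool I would use for the converse is the residuation law of MV-algebras, namely that for all $x,y,z\in\monM$
\[
  z\ominus y \sqsubseteq x \quad\Longleftrightarrow\quad z \sqsubseteq x\oplus y .
\]
This says exactly that $(-)\ominus y$ is left adjoint to $(-)\oplus y$; consequently $(-)\oplus y$ is an upper adjoint and preserves all meets that exist, which in the complete lattice $(\monM,\sqsubseteq)$ are all meets. Unfolding the adjunction delivers the converse directly: with $s=\bigsqcap_i(x_i\oplus y)$ we have $s\sqsubseteq x_i\oplus y$ for every $i$, hence $s\ominus y\sqsubseteq x_i$ by the law, hence $s\ominus y\sqsubseteq\bigsqcap_i x_i$, and applying the law once more yields $s\sqsubseteq\big(\bigsqcap_i x_i\big)\oplus y$.

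I expect the only genuine obstacle to be the residuation law itself, the rest being formal; note moreover that its nullary instance (empty family) recovers the axiom $x\oplus\one=\one$, which is a reassuring consistency check. The residuation law is a standard property of MV-algebras, derivable from the three defining axioms together with the induced descriptions of the natural order and of $\sqcup,\sqcap$; alternatively it may simply be cited, e.g.\ from the theory underlying \cite{di05:_algeb_lukas}. Once it is in place, the quantale laws follow exactly as above, completing the proof.
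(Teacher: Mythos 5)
Your proposal is correct and takes essentially the same route as the paper: both reduce the claim to showing that $(-)\oplus y$ preserves arbitrary meets of the natural order (one inequality by monotonicity of $\oplus$, the converse via the two residuation inequalities: from $z \sqsubseteq x \oplus y$ infer $z \ominus y \sqsubseteq x$, then conclude with $(z \ominus y) \oplus y \sqsupseteq z$). The only difference is presentational: where you package these as a standard, citable residuation/adjunction law, the paper derives both inequalities inline from the MV-algebra axioms, using $x \sqcap \comp{y} = \comp{\comp{x \oplus y} \oplus y} \sqsupseteq z \ominus y$ for the first and axiom (3) for the second.
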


\begin{proof}
  We know $\monM$ is a complete lattice. Binary meets are given by
  \begin{equation}
    \label{eq:meet}
    x \sqcap y = \comp{\comp{x \oplus \comp{y}} \oplus \comp{y}}.
  \end{equation}
  Moreover $\oplus$ is
  associative and commutative, with $0$ as neutral element.
  
  It remains to show that $\oplus$ distributes with respect to
  $\sqcap$ (note that $\sqcap$ is the join for the reverse order),
  i.e., that for all $X \subseteq \monM$ and $a \in \monM$, it holds
  \[
    a \oplus \bigsqcap X = \bigsqcap \{ a \oplus x \mid x  \in X \}
  \]
  Clearly, since $\bigsqcap X \leq x$ for all $x \in X$ and $\oplus$
  is monotone, we have
  $a \oplus \bigsqcap X \sqsubseteq \bigsqcap \{ a \oplus x \mid x \in
  X \}$. In order to show that $a \oplus \bigsqcap X$ is the greatest
  lower bound, let $z$ be another lower bound for
  $\{ a \oplus x \mid x \in X \}$, i.e., $z \sqsubseteq a \oplus x$
  for all $x \in X$. Then observe that for $x \in X$, using
  \eqref{eq:meet}, we get
  \begin{center}
    $x \sqsupseteq x \sqcap \comp{a} = \comp{\comp{(x \oplus a)}
      \oplus a} \sqsupseteq \comp{\comp{z} \oplus a} = z \ominus a$
  \end{center}
  Therefore $\bigsqcap X \sqsupseteq z \ominus a$ and thus
  \begin{center}
    $a \oplus  \bigsqcap X \sqsupseteq a \oplus (z \ominus a) \sqsupseteq z$
  \end{center}
  as desired.
\end{proof}
\end{toappendix}

\begin{definition}[predicate lifting]
  Given a functor $F\colon\mathbf{Set}\to\mathbf{Set}$, a 
  \emph{predicate lifting} is a family of functions
  $\tilde{F}_Y\colon \monM^Y \to \monM^{FY}$ (where $Y$ is a set),
  such that for all $g\colon Z\to Y$, $a\colon Y \to \monM$ it holds that
  $(Fg)^*(\tilde{F}_Y(a))  = \tilde{F}_Z(g^*(a))$. 
\end{definition}

In words, predicate liftings must commute with reindexings. The index
$Y$ will be omitted if clear from the context. 
It can be seen that such predicate liftings are in one-to-one
correspondence to so called \emph{evaluation maps}
$\mathit{ev} \colon F\monM \to \monM$.  Given $\mathit{ev}$, we define
the corresponding lifting to be
$\tilde{F}(a) = \mathit{ev}\circ Fa\colon FY \to \monM$, where
$a\colon Y\to \monM$. Conversely, given a lifting $\tilde{F}$, we
obtain $\mathit{ev} = \tilde{F}(\mathit{id}_\monM)$.

A lifting $\tilde{F}$ is well-behaved if (i) $\tilde{F}$ is monotone;
(ii) $\tilde{F}(0_Y) = 0_{FY}$ where $0$ is the constant $0$-function;
(iii)
$\tilde{F}(a \oplus b) \sqsubseteq \tilde{F}(a) \oplus \tilde{F}(b)$
for $a,b\colon Y\to \monM$; (iv) $F$ preserves weak pullbacks. We need
well-behavedness in order to prove the next result and in
Section~\ref{sec:beh-metrics}. 

In order to use the theory of fixpoint checks
from~\cite{BEKP:FTUD-journal} we need to have not only monotone, but
non-expansive liftings. We next provide a characterisation of such
liftings, following \cite{ws:metrics-fuzzy-lax-journal}.

\begin{lemmarep}[non-expansive predicate lifting]
  \label{lem:pred-lifting-nonexpansive}
  Let $\mathit{ev}\colon F\monM\to \monM$ be an evaluation map and
  assume that its corresponding lifting
  $\tilde{F}\colon \monM^Y\to \monM^{FY}$ is well-behaved. Then
  $\tilde{F}$ is non-expansive iff for all
  $\delta \in \monM$ it holds that
  $\tilde{F}\delta_{Y} \sqsubseteq
  \delta_{FY}$.
\end{lemmarep}

\begin{proof}
  The proof is inspired
  by~\cite[Lemma~3.9]{ws:metrics-fuzzy-lax-journal} and uses the fact
  that a monotone function $f\colon \monM^Y\to\monM^Z$ is
  non-expansive iff $f(a\oplus \delta)\sqsubseteq f(a)\oplus \delta$
  for all $a,\delta$.

  \smallskip
  \noindent
  ($\Rightarrow$) Fix a set $Y$ and assume that
  $\tilde{F}\colon \monM^Y\to \monM^{FY}$ is non-expansive.  Then
  \[ \tilde{F}(\delta) = \tilde{F}(0\oplus \delta) \sqsubseteq
    \tilde{F}(0)\oplus \delta \sqsubseteq 0 \oplus \delta = \delta \]

  \smallskip
  \noindent
  ($\Leftarrow$) Now assume that $\tilde{F}(\delta) \sqsubseteq
  \delta$. Then, using the lemma referenced above,
  \[ \tilde{F}(a\oplus \delta) \sqsubseteq \tilde{F}(a)\oplus
    \tilde{F}(\delta) \sqsubseteq \tilde{F}(a) \oplus \delta \]

  Above we write $\delta$ for both $\delta_Y$, $\delta_{FY}$
  and both deductions rely on the fact that $\tilde{F}$ is
  well-behaved.
\end{proof}

\begin{example}[Finitely supported distributions]
  \label{ex:distribution-predlift}
  Consider the (finitely supported)
  distribution functor $\mathcal{D}$ that maps a set $X$ to all maps
  $p\colon X\to [0,1]$ that have finite support and satisfy
  $\sum_{x\in X} p(x) = 1$. (Here $\monM = [0,1]$.)  A possible
  evaluation map is $\mathit{ev}\colon \mathcal{D}[0,1]\to [0,1]$
  defined by $\mathit{ev}(p) = \sum_{r\in[0,1]} r\cdot p(r)$, where
  $p$ is a distribution on $[0,1]$. This results in the predicate
  lifting $\tilde{D}_Y\colon [0,1]^Y \to [0,1]^{\mathcal{D}(Y)}$ with
  $\tilde{\mathcal{D}}_Y(r)(p) = \mathbb{E}_p[r]$ 
  discussed at the beginning of the section (expectation).

  It is easy to see that $\tilde{\mathcal{D}}$ is well-behaved and
  non-expansive. The latter follows from
  $\tilde{\mathcal{D}}(\delta_Y) = \delta_{\mathcal{D}Y}$.
\end{example}

\begin{example}[Finite powerset]
  \label{ex:powerset-predlift}
  Consider the finite powerset functor
  $\mathcal{P}_f$ with  the evaluation map
  $\mathit{ev} \colon \mathcal{P}_f \monM \to \monM$, defined for
  finite $S\subseteq \monM$ as $\mathit{ev} (S) = \max S$, where
  $\max \emptyset = 0$. This results in the predicate lifting
  $\tilde{P}_Y\colon [0,1]^Y \to [0,1]^{\mathcal{P}(Y)}$,
  $\mathcal{P}_Y(p)(Y') = \bigsqcup_{y\in Y'} p(y)$.

  The lifting $\tilde{\mathcal{P}}_f$ is well-behaved
  (see~\cite{bbkk:coalgebraic-behavioral-metrics}) and
  non-expansive. To show the latter, observe that
  $\tilde{\mathcal{P}}_f(\delta_Y) =
  \delta_{\mathcal{P}_f(Y)\backslash\{\emptyset\}} \sqsubseteq
  \delta_{\mathcal{P}_f(Y)}$.
\end{example}

Non-expansive predicate liftings can be seen as functors
$\tilde{F}\colon \C^* \to \C^*$. To be more precise,
$\tilde{F}$ maps an object $a\in \monM^Y$ to
$\tilde{F}(a) \in \monM^{FY}$ and an arrow
$g^* : a \arr a\circ g$, , where $g\colon Z\to Y$, to
$(Fg)^* : \tilde{F}a \arr \tilde{F}(a \circ g)$.

\subsection{Approximations of Predicate Liftings}
\label{sec:approximation-pred-lifting}

We now study approximations of predicate liftings. It involves the
approximation functor $\#\colon \C^*\to\A$ (restricted to $\C^*$) and
the predicate lifting $\tilde{F}\colon \C^*\to \C^*$ introduced
before. We start with a result about an auxiliary natural transformation.

\begin{propositionrep}
  \label{prop:nat-transf}
  Let $\tilde{F}$ be a (non-expansive) predicate lifting.  There is a
  natural transformation $\beta \colon \# \Rightarrow \# \tilde{F}$
  between functors $\#,\#\tilde{F}\colon \C^*\to \A$, whose
  components, for $a\in \monM^Y$, are
  $\beta_a\colon a \arr \tilde{F}(a)$ in $\A$, defined by
  $\beta_a(U) = \tilde{F}^{a}_\#(U)$ for $U\subseteq [Y]^a$.

  That is, the following diagrams commute for every $g\colon Z \to Y$
  (the diagram on the left indicates the formal arrows, while the one
  on the right reports the underlying functions).
  \begin{center}
    \scalebox{1}{
      \begin{tikzpicture}	
        \node (1) at (-4,1.5) [] {\begin{tabular}{c}
            $\#(a)$
          \end{tabular} };
        
        \node (2) at (0,1.5) [] {\begin{tabular}{c}
            $\#(a\circ g)$
          \end{tabular} };
        \node (3) at (-4,0) [] {\begin{tabular}{c}
            $\#(\tilde{F}a)$
          \end{tabular} };
        \node (4) at (0,0) [] {\begin{tabular}{c}
            $\#(\tilde{F}(a\circ g))$
          \end{tabular} };
        
        \draw[->,dashed] (1) to node [above]  {$\#(g^*) $} (2);
        \draw[->,dashed] (1) to node [left]  {$\beta_a$} (3);
        \draw[->,dashed] (2) to node [right]
        {$\beta_{a\circ g}$} (4);
        \draw[->,dashed] (3) to node [above]  {$\#(\tilde{F}(g^*))  $} (4);
        
      \end{tikzpicture}
      \hspace{1cm}
      \begin{tikzpicture}	
        \node (1) at (-3,1.5) [] {\begin{tabular}{c}
            $\Pow{[Y]^a}$
          \end{tabular} };
        
        \node (2) at (1,1.5) [] {\begin{tabular}{c}
            $\Pow{[Z]^{a\circ g}}$
          \end{tabular} };
        \node (3) at (-3,0) [] {\begin{tabular}{c}
            $\Pow{[FY]^{\tilde{F}(a)}}$
          \end{tabular} };
        \node (4) at (1,0) [] {\begin{tabular}{c}
            $\Pow{[FZ]^{\tilde{F}(a\circ g)}}$
          \end{tabular} };
        
        \draw[->] (1) to node [above]  {$g^{-1}$} (2);
        \draw[->] (1) to node [left]  {$\tilde{F}^{a}_\#$} (3);
        \draw[->] (2) to node [right]
        {$\tilde{F}^{a\circ g}_\#$} (4);
        \draw[->] (3) to node [above]  {$(Fg)^{-1}$} (4);
        
      \end{tikzpicture}
    }
  \end{center}
\end{propositionrep}

\begin{proof}
  We first define a natural transformation
  $\eta\colon \mathrm{Id}_{\C^*}\Rightarrow \tilde{F}$ (between the
  identity functor and $\tilde{F}$) with components
  $\eta_a\colon a\arr \tilde{F}(a)$ (for $a\in\monM^Y$) by defining
  $\eta_a(b) = \tilde{F}(b)$ for $b\in\monM^Y$. The $\eta_a$ are
  non-expansive by assumption. In addition, $\eta$ is natural due to
  the definition of a predicate lifting, i.e.,
  $(Fg)^*\circ \tilde{F} = \tilde{F}\circ g^*$ for $g\colon Z\to Y$.

  Now we apply $\#$ and use the fact that $\#$ is functorial, even for
  the full category $\C$, whenever one of the two arrows to which $\#$
  is applied is a reindexing (see
  Lemma~\ref{lem:fibredness}). Furthermore we observe that
  $\beta = \#(\eta)$.  This immediately gives commutativity of the
  diagram on the left. (The diagram on the right just displays the
  underlying functions.)
\end{proof}

We next characterise $\tilde{F}^d_\#(Y')$. We rely on the fact that
$d$ can be decomposed into $d = \pi_1\circ\bar{d}$, where the
projection $\pi_1$ is independent of $d$ and $\bar{d}$ is dependent on
$Y'$, and exploit the natural transformation in
Proposition~\ref{prop:nat-transf}.

\begin{propositionrep}[Approximations for predicate liftings]
  \label{prop:char-abstr-predlift}
  Let $\tilde{F}$ be a predicate lifting. We fix $Y'\subseteq Y$ and
  let $\chi_{Y'}\colon Y\to \{0,1\}$ be its characteristic
  function. Furthermore let $a\colon Y\to \monM$ be a predicate. Let
  $\pi_1\colon \monM\times\{0,1\} \to \monM$,
  $\pi_2\colon \monM\times\{0,1\} \to \{0,1\}$ be the projections and
  define $\bar{a}\colon Y\to \monM\times \{0,1\}$ via
  $\bar{a}(y) = (a(y),\chi_{Y'}(y))$ as the mediating morphism into
  the product (see diagram below).

  \begin{center}
    \begin{tikzpicture}	
      \node (Y) at (0,0) [] {$Y$};
      \node (B) at (2,0) [] {$\{0,1\}$};
      \node (M) at (-2,0) [] {$\monM$};
      \node (C) at (0,-2) [] {$\monM\times \{0,1\}$};
      
      \draw[->] (Y) to node [above]  {$a$} (M);
      \draw[->] (Y) to node [above]  {$\chi_{Y'}$} (B);
      \draw[->] (Y) to node [right] {$\bar{a}$} (C);
      \draw[->] (C) to node [left]  {$\pi_1$} (M);
      \draw[->] (C) to node [right]  {$\pi_2$} (B);
    \end{tikzpicture}
  \end{center}

  Then
  \[ \tilde{F}_\#^{a}(Y') = (F\bar{a})^{-1}
    (\tilde{F}_\#^{\pi_1}((\monM\backslash\{0\})\times \{1\})). \]
\end{propositionrep}

\begin{proof}
  Let $a\in \monM^Y$ and $Y'\subseteq \Ytop{Y}{a}$. Note that
  $\bar{a}^{-1}((\monM\backslash\{0\})\times \{1\}) = Y'$ and
  $a=\pi_1\circ \bar{a}$, thus by Proposition~\ref{prop:nat-transf}:
  \begin{align*}
    \tilde{F}_\#^a(Y') 
    &= \tilde{F}_\#^{\pi_1\circ \bar{a}}(\bar{a}^{-1}((\monM\backslash\{0\})\times \{1\})) \\
    &= (F\bar{a})^{-1} (\tilde{F}_\#^{\pi_1}( (\monM\backslash\{0\})\times \{ 1\})) \tag*{\qedhere}
  \end{align*}
\end{proof}

Here
$\tilde{F}_\#^{\pi_1}((\monM\backslash\{0\})\times \{1\})\subseteq
F(\monM\times \{0,1\})$ is independent of $a$ and has to be determined
only once for every predicate lifting $\tilde{F}$. We will show how
this set looks like for our example functors. We first consider the
distribution functor.

\begin{lemmarep}
  \label{lem:char-hash-distr}
  Consider the lifting of the distribution functor presented in
  Example~\ref{ex:distribution-predlift} and let $\monM = [0,1]$. Then we have
  \[ \tilde{\mathcal{D}}_\#^{\pi_1}( (0,1]\times \{1\} ) = \{ p\in \mathcal{D}Z
    \mid \mathit{supp}(p) \in (0,1]\times \{1\} \}. \]
\end{lemmarep}

\begin{proof}
  Let $\delta > 0$.
  We define
  \[ \tilde{\pi}^\delta_1 := \alpha^{\pi_1,\delta}((0,1]\times
    \{1\}) \] where $\tilde{\pi}^\delta_1(x,0) = x$,
  $\tilde{\pi}^\delta_1 (x,1) = x\ominus \delta$ for $x\in
  [0,1]$. Note that
  $[\mathcal{D}Z]^{\tilde{\mathcal{D}}\pi_1} = \{ p\in DZ \mid \exists
  (x,b)\in \mathit{supp}(p) \text{ with } x \ge 0 \}$. Now
  \begin{align*}
    \tilde{\mathcal{D}}_\#^{\pi_1,\delta}((0,1]\times\{1\}) &= \{ p\in [\mathcal{D}Z]^{\tilde{\mathcal{D}}\pi_1} \mid \tilde{\mathcal{D}}\pi_1 (p) \ominus \tilde{\mathcal{D}}(\tilde{\pi}^\delta_1) (p)\ge \delta \} \\
    &= \{ p\in [\mathcal{D}Z]^{\tilde{\mathcal{D}}\pi_1} \mid \bigl( \sum_{x\in [0,1]} x\cdot p(x,0) \oplus \sum_{x\in [0,1]} x\cdot p(x,1) \bigr)\\
    &\qquad\quad\ominus \bigl( \sum_{x\in [0,1]} x\cdot p(x,0) \oplus \sum_{x\in [0,1]} (x\ominus \delta)\cdot p(x,1) \bigr)\ge \delta \}  \\
    &= \{ p\in [\mathcal{D}Z]^{\tilde{\mathcal{D}}\pi_1} \mid
    \sum_{x\in[0,\delta)} x\cdot p(x,1) + \sum_{x\in[\delta,1]} \delta
    \cdot p(x,1) \ge \delta \}  \\
    &= \{ p\in [\mathcal{D}Z]^{\tilde{\mathcal{D}}\pi_1} \mid
    \mathit{supp}(p) \in [\delta,1]\times \{1\} \}.
  \end{align*}
  Where the second last equality uses the fact that $x\ominus
  (x\ominus \delta) = \delta$ if $x\ge \delta$ and $x$ otherwise.
  
  Now, we obtain
  \begin{align*} \tilde{\mathcal{D}}_\#^{\pi_1}( (0,1]\times \{ 1\} ) = \bigcup_{\delta
      \sqsupset 0} \tilde{\mathcal{D}}_\#^{\pi_1,\delta}( (0,1]\times \{ 1\} ) =
    \{ p\in \mathcal{D}Z \mid \mathit{supp}(p) \in (0,1]\times \{1\}
    \}. \tag*{\qedhere}\end{align*}   
\end{proof}

This means intuitively that a decrease or ``slack'' can exactly be
propagated for elements whose probabilities are strictly larger than
$0$.

We now turn to the powerset functor.

\begin{lemmarep}
  \label{lem:char-hash-powerset}
  Consider the lifting of the finite powerset functor from
  Example~\ref{ex:powerset-predlift} with arbitrary $\monM$. Then we
  have
  \[ (\tilde{\mathcal{P}}_f)_\#^{\pi_1}((\monM\backslash\{0\})\times \{1\}) = \{
    S\in [\mathcal{P}_f Z]^{\tilde{\mathcal{P}}_f\pi_1} \mid \exists
    (s,1)\in S~ \forall (s',0)\in S:~s\sqsupset s' \}. \]
\end{lemmarep}

\begin{proof}
  Let $\delta \sqsupset 0$ and define $\tilde{\pi}_1^\delta$ as in the
  proof of Lemma~\ref{lem:char-hash-distr}. Then
  \begin{align*}
    (\tilde{\mathcal{P}}_f)_\#^{\pi_1,\delta}((\monM\backslash\{0\})\times\{1\})
    &= \{ S\in [\mathcal{P}_f Z]^{\tilde{\mathcal{P}}_f\pi_1} \mid \tilde{\mathcal{P}}_f\pi_1 (S) \ominus \tilde{\mathcal{P}}_f(\tilde{\pi}^\delta_1) (S)\sqsupseteq \delta \} \\
    &= \{ S\in [\mathcal{P}_f Z]^{\tilde{\mathcal{P}}_f \pi_1} \mid \max_{(s,b)\in S} s \ominus (\max_{(s,b)\in S} s\ominus b\cdot \delta)\sqsupseteq \delta \}  \\
    &= \{ S\in [\mathcal{P}_f Z]^{\tilde{\mathcal{P}}_f \pi_1} \mid \exists (s,1)\in S~\forall (s',0)\in S:~s\ominus \delta \sqsupseteq s' \}
  \end{align*}
  For the last step we note that this condition ensures that the
  second maximum equates to $\max_{(s,b)\in S} s \ominus \delta$
  which is required for the inequality to hold. Now, we obtain
  \begin{align*}
    (\tilde{\mathcal{P}}_f)_\#^{\pi_1}( (\monM\backslash\{0\})\times \{ 1\}) &=
    \bigcup_{\delta \sqsupset 0} \tilde{F}_\#^{\pi_1,\delta}(
    (\monM\backslash\{0\})\times \{ 1\}) \\
    &= \{ S\in
    [\mathcal{P}_f Z]^{\tilde{\mathcal{P}}_f \pi_1} \mid \exists (s,1)\in
    S~\forall (s',0)\in S:~s\sqsupset s' \}. \tag*{\qedhere}
  \end{align*}
\end{proof}

The idea is that the maximum of a set $S$ decreases if we decrease at
least one its values and all values which are not decreased are
strictly smaller.

\begin{remark}
  \label{rem:embed}
  Note that $\#$ is a functor on the subcategory $\Cf$ (see
  Lemma~\ref{lem:lax-functor-proper-finite}), while some liftings
  (e.g., the one for the distribution functor) involve infinite sets,
  for which we would lose compositionality.  In this case, given a
  finite set $Y$, we will actually focus on a finite subset
  $D\subseteq FY$. (This is possible since we work with coalgebras with
  finite state space that map only into finitely many elements of
  $Y$.)  Then we consider $\tilde{F}_Y \colon \monM^Y\to \monM^{FY}$
  and $e\colon D\hookrightarrow FY$ (the embedding of $D$ into
  $FY$). We set $f = e^*\circ \tilde{F}_Y\colon \monM^Y\to
  \monM^D$. Given $a\colon Y\to\monM$, we view $f$ as an arrow
  $a\arr \tilde{F}_Y(a)\circ e$ in $\C$.  The approximation adapts to
  the ``reduced'' lifting, which can be seen as follows
  (cf. Lemma~\ref{lem:fibredness}, which shows that $\#$ preserves
  composition if one of the arrows is a reindexing):
  \[ f_\#^a = \#(f) = \#(e^*\circ \tilde{F}_Y) = \#(e^*)\circ
    \#(\tilde{F}_Y) = e^{-1}\circ \#(\tilde{F}_Y) = \#(\tilde{F}_Y) \cap
    D. \]
\end{remark}

\section{Wasserstein Lifting and Behavioural Metrics}
\label{sec:beh-metrics}

In this section we use the results about predicate liftings from the
previous section to show how the framework for fixpoint checking can
be used to deal with coalgebraic behavioural metrics.

We build on~\cite{bbkk:coalgebraic-behavioral-metrics}, where an
approach is proposed for canonically defining a behavioural
pseudo-metric for coalgebras of a functor
$F\colon\mathbf{Set}\to\mathbf{Set}$, that is, for functions of the
form $\xi\colon X\to FX$ where $X$ is a set. Intuitively $\xi$
specifies a transition system whose branching type is given by $F$.
Our aim is to determine the behavioural distance of two states.  Given
a coalgebra $\xi$, the idea is to endow $X$ with a distance function
$d_\xi\colon X\times X\to \monM$ defined as the least fixpoint of the
map $d \mapsto d^F \circ (\xi\times \xi)$ where $\_^F$ lifts a
distance function $d \colon X \times X\to \monM$ to
$d^F\colon FX\times FX\to \monM$. Here we focus on a generalisation of
the Wasserstein or Kantorovich distances \cite{v:optimal-transport} to
the categorical setting and then explain how they integrate into the
fixpoint checking framework. Such distances are parametric on a
suitable notion of predicate lifting, hence we will need the results
from the previous section.

We remark that we will use some functors $F$, for which $FY$ is
infinite, even if $Y$ is finite. This is in fact the reason why the
categories $\C$ and $\A$ also include infinite sets. However note that
the resulting fixpoint function will be always defined for finite
sets, although intermediate functions might not conform to this. Hence
the restricted compositionality results of
Section~\ref{sec:categories-functor} are sufficient
(cf. Remark~\ref{rem:embed} and Definition~\ref{def:finc}).  

\subsection{Wasserstein Lifting}

We first recap the definition of the generalised Wasserstein lifting
from \cite{bbkk:coalgebraic-behavioral-metrics}.
Hereafter, $F$ denotes a fixed endo-functor on $\mathbf{Set}$ and
$\xi\colon X\to FX$ is a coalgebra over a finite set $X$. We also fix
a well-behaved non-expansive predicate lifting $\tilde{F}$.
Recall that \emph{pseudo-metrics} are distance functions satisfying:
(i) reflexivity: $\forall x\in X\,.\,d(x,x)=0$; (ii) symmetry:
$\forall x,y\in X\,.\,d(x,y)=d(y,x)$; (iii) triangle inequality:
$\forall x,y,z\in X\,.\,d(x,z)\sqsubseteq d(x,y)\oplus d(y,z)$.  Here
we allow arbitrary distance functions $d\colon X\times X\to \monM$ and
do not restrict to pseudo-metrics.

In order to define a Wasserstein lifting for this functor, a first
ingredient is that of a coupling. Given $t_1,t_2\in FX$ a
\emph{coupling} of $t_1$ and $t_2$ is an element $t\in F(X\times X)$,
such that $F{\pi_i}(t) = t_i$ for $i=1,2$, where
$\pi_i\colon X\times X\to X$ are the projections. We write
$\Gamma (t_1,t_2)$ for the set of all such couplings.

\begin{definition}[Wasserstein lifting]
  \label{def:wasserstein}
  The Wasserstein lifting
  $\_^F\colon \monM^{X\times X}\to \monM^{FX\times FX}$ is defined for
  $d\colon X\times X\to \monM$ and $t_1,t_2\in FX$ as
  \begin{align*}
    d^F(t_1,t_2) = \inf_{t\in \Gamma(t_1,t_2)}
    \tilde{F}d(t)
  \end{align*}
\end{definition}

\begin{example}
  \label{ex:intuition-wasserstein}   
  We consider the Wasserstein lifting in the concrete case where $F$
  equals the distribution functor $\mathcal{D}$. The function
  $\tilde{\mathcal{D}}$ is obtained as the predicate lifting of
  $\mathcal{D}$ (see Section~\ref{sec:liftings} and
  Table~\ref{tab:basic-functions}).
  
  For this instance, the lifting corresponds to the well-known 
  Kantorovich or Wasserstein lifting~\cite{v:optimal-transport}. In fact,
  it gives the solution of a transport problem, where we
  interpret $p_1,p_2$ as the supply respectively demand at each point
  $x\in X$. Transporting a unit from $x_1$ to $x_2$ costs $d(x_1,x_2)$
  and $t$ is a transport plan (= coupling) whose marginals are
  $p_1,p_2$. In other words $d^\mathcal{D}(p_1,p_2)$ can be seen as
  the cost of the optimal transport plan, moving the supply $p_1$ to
  the demand $p_2$.

  In more detail: given $d$, we obtain
  $d^\mathcal{D}\colon \mathcal{D}(X)\times \mathcal{D}(X)\to [0,1]$
  as
  \begin{eqnarray*}
    d^\mathcal{D}(p_1,p_2) & = & \inf \{\tilde{D}d(t) \mid t \in
    \Gamma (p_1,p_2) \} \\
    & = & \inf \{ \sum_{x_1,x_2\in X} d(x_1,x_2)\cdot t(x_1,x_2) \mid t
    \in \Gamma (p_1,p_2) \}
  \end{eqnarray*}
  where
  $\Gamma(p_1,p_2)$ is the set of couplings of $p_1,p_2$ (i.e.,
  distributions $t\colon X\times X\to [0,1]$ such that
  $\sum_{x_2\in X} t(x_1,x_2) = p_1(x_1)$ and
  $\sum_{x_1\in X} t(x_1,x_2) = p_2(x_2)$).
\end{example}

It can be seen that for  well-behaved $\tilde{F}$, the lifting preserves
pseudo-metrics
(see~\cite{bbkk:coalgebraic-behavioral-metrics,bkp:up-to-behavioural-metrics-fibrations}).

In order to make the theory for fixpoint checks effective we will need
to restrict to a subclass of liftings.

\begin{definition}[finitely coupled lifting]
  \label{def:finc}
  We call a lifting $\tilde{F}$ \emph{finitely coupled}  
  if for all $X$ and $t_1,t_2\in FX$ there exists a
  finite
  $\Gamma'(t_1,t_2) \subseteq \Gamma(t_1,t_2)$, which can be computed
  given $t_1,t_2$, such that
  $ \inf_{t\in \Gamma(t_1,t_2)} \tilde{F}d(t) = \min_{t\in
    \Gamma'(t_1,t_2)} \tilde{F}d(t)$ for all $d$.
\end{definition}

We hence ask that the infimum in Definition~\ref{def:wasserstein} is
actually a minimum. Observe that whenever the infimum above is a
minimum, there is trivially such a finite $\Gamma'(t_1,t_2)$. We
however ask that $\Gamma'(t_1,t_2)$ is independent of $d$ and there is
an effective way to determine it.
  
The lifting in Example~\ref{ex:powerset-predlift} (for the finite
powerset functor) is obviously finitely coupled.  For the lifting
$\mathcal{\tilde{\mathcal{D}}}$ from
Example~\ref{ex:distribution-predlift} we note that the set of
couplings $t\in \Gamma(t_1,t_2)$ forms a polytope with a finite number
of vertices, which can be effectively computed and $\Gamma'(t_1,t_2)$
consists of these vertices. The infimum (minimum) is obtained at one
of these
vertices~\cite[Remark~4.5]{bblm:on-the-fly-exact-journal}. This allows
us to always reduce to a finite set of couplings for finite-state
systems.

\subsection{Decomposing the Behavioural Metrics Function}
\label{sec:comp-rep}

As mentioned above, for a coalgebra $\xi\colon X\to FX$ the
behavioural pseudo-metric $d: X \times X \to \monM$ is the
least fixpoint of the behavioural metrics function
$\mathcal{W} = (\_^F) \circ (\xi\times \xi)$ where $(\_^F)$ is the
Wasserstein lifting.

The Wasserstein lifting can be decomposed as
$\_^F = \min\nolimits_u \circ \tilde{F}$ where
$\tilde{F} : \monM^{X\times X} \to \monM^{F(X\times X)}$ is a
predicate lifting -- which we require to be non-expansive
(cf. Lemma~\ref{lem:pred-lifting-nonexpansive}) -- and $\min_u$ is the
minimum over the coupling function
$u\colon F(X\times X) \to FX\times FX$ defined as
$u(t) = (F{\pi_1}(t),F{\pi_2}(t))$, which means that
$\min_u \colon \monM^{F(X\times X)}\to \monM^{FX\times FX}$ (see
Table~\ref{tab:basic-functions}).

Therefore the behavioural metrics function can be expressed as
\begin{center}
  $\mathcal{W} = (\xi\times\xi)^*\circ \min_u\circ \tilde{F}$
\end{center}

Explicitly, for $d \in [0,1]^{X\times X}$ and $x, y \in X$,
\begin{eqnarray*}
  \mathcal{W}(d)(x,y)
  & = & \min\nolimits_{u} \circ \tilde{F}(d)(\xi (x),\xi (y))
  = \min_{u(t) = (\xi(x),\xi(y))}\tilde{F}(d)(t)\\
  & = & \min_{t\in \Gamma(\xi(x),\xi(y))} \tilde{F}d(t)
  = d^F(\xi(x),\xi(y))
\end{eqnarray*}

Note that the fixpoint equation for behavioural metrics is sometimes
equipped with a discount factor that reduces the effect of deviations
in the (far) future and ensures that the fixpoint is unique by
contractivity of the function. Here we focus on the undiscounted case
where the fixpoint equation may have several solutions.

\subsection{A Worked-out Case Study on Behavioural Metrics}
\label{sec:case-study}

As a case study we consider probabilistic transition systems (Markov
chains) with labelled states. These are given by a finite set of
states $X$, a function $\eta\colon X\to \mathcal{D}X$ mapping each
state $x\in X$ to a probability distribution on $X$ and a labelling
function $\ell\colon X\to \Lambda$, where $\Lambda$ is a fixed set of
labels (for examples see Figure~\ref{fig:prob-ts}). Two such systems
are depicted in Figure~\ref{fig:prob-ts}.

  This is represented by a coalgebra
  $\xi:X\to \Lambda\times \mathcal{D}X$ for the functor
  $FX = \Lambda \times \mathcal{D}(X)$, where $\Lambda$ is a fixed set
  of labels.

  In particular, we are interested in computing behavioural metrics
  for such systems. We let $\monM = [0,1]$ and consider the
  Wasserstein lifting for $\mathcal{D}$ explained earlier in
  Example~\ref{ex:intuition-wasserstein} that lifts a distance
  function $d\colon X\times X\to [0,1]$ to
  $d^{\mathcal{D}}\colon \mathcal{D}(X)\times \mathcal{D}(X)\to
  [0,1]$.

  For instance, the best transport plan for the system on the
  left-hand side of Figure~\ref{fig:prob-ts} and the distributions
  $\eta(1),\eta(2)$ (where $\eta(1)(3) = \nicefrac{1}{2}$,
  $\eta(1)(4) = \nicefrac{1}{2}$, $\eta(2)(3) = \nicefrac{1}{3}$,
  $\eta(2)(4) = \nicefrac{2}{3}$) is $t$ with
  $t(3,3) = \nicefrac{1}{3}$, $t(3,4) = \nicefrac{1}{6}$,
  $t(4,4) = \nicefrac{1}{2}$ and $0$ otherwise.

  For the functor $FX = \Lambda \times \mathcal{D}(X)$ we observe that
  couplings of $(a_1,p_1), (a_2,p_2)\in FX$ only exist if $a_1=a_2$
  and -- if they do not exist -- the distance is the infimum of an
  empty set, hence $1$. If $a_1=a_2$, couplings correspond to the
  usual Wasserstein couplings of probability distributions $p_1,p_2$.

\begin{figure}
  \centering
  \scalebox{1}{

  \begin{tikzpicture}
    \node (S1) at (0,-2) [circle,draw]{$1$}; 			
    \node (S1L) at (0,-1.4) {$A$}; 			
    \node (S3) at (-1.6,-2.5) [circle,draw]{$3$};
    \node (S3L) at (-1.6,-1.9) {$B$};
    \node (S4) at (1.6,-2.5) [circle,draw]{$4$};
    \node (S4L) at (1.6,-1.9) {$C$};
    \node (S2) at (0,-3) [circle,draw]{$2$}; 			
    \node (S2L) at (0,-3.6) {$A$}; 			
    \draw  [->] (S1) to node [above]{$\nicefrac{1}{2}$} (S3);
    \draw  [->] (S1) to node [above]{$\nicefrac{1}{2}$} (S4);	
    \draw  [->] (S2) to node [below]{$\nicefrac{1}{3}$} (S3);
    \draw  [->] (S2) to node [below]{$\nicefrac{2}{3}$} (S4);		
    \path
    (S3) edge [loop left] node {$1$} (S3);			 
    \path
    (S4) edge [loop right] node {$1$} (S3);			 
  \end{tikzpicture}
  \qquad \qquad
  \raisebox{10pt}
  {\begin{tikzpicture}
    \node (S1) at (0,-2) [circle,draw]{$1$}; 			
    \node (S1L) at (0,-1.4) {$A$}; 			
    \node (S2) at (3,-2) [circle,draw]{$2$}; 			
    \node (S2L) at (3,-1.4) {$A$}; 			
    \draw  [->] (S1) to [bend left] node [above]{$1$} (S2);
    \draw  [->] (S2) to [bend left] node [below]{$1$} (S1);	
  \end{tikzpicture}}
  }
 
  \caption{Two probabilistic transition systems.}
  \label{fig:prob-ts}
\end{figure}

Hence, the behavioural metric is defined as the least fixpoint of
the function 
\begin{align*}
  \mathcal{B} \colon [0,1]^{X\times X} &\to [0,1]^{X\times
    X} \\
  \mathcal{B}(d)(x_1,x_2) &=
  \begin{cases}
    1 &\mbox{if }
    \ell(x_1)\neq \ell(x_2) \\
    d^\mathcal{D}(\eta(x_1),\eta(x_2))
    &\mbox{otherwise}
  \end{cases}
\end{align*}

Based on the decomposition of $\mathcal{W}$ explained in
Section~\ref{sec:comp-rep}, the function $\mathcal{B}$ can be\linebreak 
written as
\[ \mathcal{B} = \max\nolimits_\rho\circ (c_k \otimes \mathcal{W}) =
  \max\nolimits_\rho\circ (c_k \otimes (\eta\times \eta)^*\circ
  \min\nolimits_u \circ \tilde{\mathcal{D}}), \] where we use the
functions given in Table~\ref{tab:basic-functions-approximations}
(Section~\ref{sec:preliminaries}). More concretely, the types of the
components and the parameters $k,u,\rho$ are given as follows, where
$Y = X\times X$:

\begin{itemize}
\item $c_k\colon [0,1]^\emptyset \to [0,1]^{Y}$ where $k(x,x') = 1$
  if $\ell(x)\neq\ell(x')$ and $0$ otherwise.
\item $\tilde{\mathcal{D}}\colon [0,1]^{Y} \to [0,1]^{\mathcal{D}(Y)}$.
\item
  $\min\nolimits_u\colon [0,1]^{\mathcal{D}(Y)} \to
  [0,1]^{\mathcal{D}(X)\times \mathcal{D}(X)}$ where
  $u\colon \mathcal{D}(Y) \to \mathcal{D}(X)\times \mathcal{D}(X)$,
  $u(t) = (p,q)$ with $p(x) = \sum_{x'\in X} t(x,x')$,
  $q(x) = \sum_{x'\in X} t(x',x)$.
\item
  $(\eta\times \eta)^*\colon [0,1]^{\mathcal{D}(X)\times
    \mathcal{D}(X)} \to [0,1]^{Y}$.
\item $\max\nolimits_\rho\colon [0,1]^{Y + Y} \to [0,1]^{Y}$ where
  $\rho\colon Y + Y \to Y$ is the obvious map from the coproduct
  (disjoint union of sets) $Y + Y$ to $Y$.
\end{itemize}

This decomposition is depicted diagrammatically in
Figure~\ref{fig:decomp-beh-metric}.

\begin{figure}
  \centering
  \begin{tikzpicture}[every node/.style={inner xsep=0.2em, minimum height=2em, outer sep=0, rectangle},scale=1.2]
    \node (c) at (0,0) [draw] {$c_k$};
    \node (init) at (-1.5,-1.2) {};
    \node (D) at (0,-1.2) [draw] {$\tilde{\mathcal{D}}$};
    \node (min) at (2.4,-1.2) [draw] {$\min\nolimits_u$};
    \node (delta) at (6,-1.2) [draw] {$(\eta\times\eta)^*$};
    \node (max) at (8.4,-0.6) [draw, minimum height=5.7em] {$\max\nolimits_\rho$};
    \node (end) at (10.2,-0.6) {};
    \draw [-] (c) to node [above, pos=0.5]{\footnotesize $[0,1]^{Y}$} (max.125);
    \draw [-] (init) to node [below, pos=0.43] {\footnotesize $[0,1]^{Y}$} (D);
    \draw [-] (D) to node [below] {\footnotesize $[0,1]^{\mathcal{D}(Y)}$} (min);
    \draw [-] (min) to node [below] {\footnotesize $[0,1]^{\mathcal{D}(X)\times\mathcal{D}(X)}$} (delta);
    \draw [-] (delta) to node [below] {\footnotesize $[0,1]^{Y}$}
    (max.233);
    \draw [-] (max) to node [above, pos=0.57]{\footnotesize $[0,1]^{Y}$} (end);
  \end{tikzpicture}
  \caption{Decomposition of the fixpoint function $\mathcal{B}$ for computing behavioural metrics.}
  \label{fig:decomp-beh-metric}
\end{figure}

If we consider only finite state spaces, we can, as explained in
Remark~\ref{rem:embed} and after Definition~\ref{def:finc}, restrict
ourselves to finite subsets of $\mathcal{D}(X)$ and $\mathcal{D}(Y)$.

By giving a transport plan as above, it is possible to provide an
upper bound for the Wasserstein lifting and hence there are strategy
iteration algorithms that can approach a fixpoint from above. The
problem with these algorithms is that they might get stuck at a
fixpoint that is not the least. Hence, it is essential to be able to
determine whether a given fixpoint is indeed the smallest one
(see, e.g.,~\cite{bblmtb:prob-bisim-distance-automata-journal}).

Consider, for instance, the transition system in
Figure~\ref{fig:prob-ts} on the right. It contains two states $1,2$ on
a cycle. In fact these two states should be indistinguishable and
hence, if $d = \mu \mathcal{B}$ is the least fixpoint of
$\mathcal{B}$, then $d(1,2) = d(2,1) = 0$. However, the metric $a$
with $a(1,2) = a(2,1) = 1$ ($0$ otherwise) is also a fixpoint and the
question is how to determine that it is not the least.

For this, we use the techniques outlined in
Section~\ref{ss:fix-check}.  We associate $\mathcal{B}$ with an
approximation $\mathcal{B}_\#^a$ on subsets of $X\times X$ such that,
given $Y'\subseteq X\times X$, the set $\mathcal{B}_\#^a(Y')$
intuitively contains all pairs $(x_1,x_2)$ such that, decreasing
function $a$ by some value $\delta$ over $Y'$, resulting in a function
$b$ (defined as $b(x_1,x_2) = a(x_1,x_2)\ominus\delta$ if
$(x_1,x_2)\in Y'$ and $b(x_1,x_2) = a(x_1,x_2)$ otherwise) and
applying $\mathcal{B}$, we obtain a function $\mathcal{B}(b)$, where
the same decrease takes place at $(x_1,x_2)$ (i.e.,
$\mathcal{B}(b)(x_1,x_2) = \mathcal{B}(a)(x_1,x_2) \ominus \delta$).
We will later discuss in detail how to compute $\mathcal{B}_\#^a$. Here, it can be seen that $\mathcal{B}_\#^a(\{(1,2)\}) = \{(2,1)\}$, since a
decrease at $(1,2)$ will cause a decrease at $(2,1)$ in the next
iteration. In fact the greatest fixpoint of $\mathcal{B}_\#^a$, which
here is $\{(1,2),(2,1)\}$, gives us those elements that have a
potential for decrease (intuitively there is ``slack'' or ``wiggle
room'') and form a ``vicious cycle''.

By the results outlined in Section~\ref{ss:fix-check} it holds that
$a$ is the least fixpoint of $\mathcal{B}$ iff the the greatest
fixpoint of $\mathcal{B}_\#^a$ is the empty set.

As discussed earlier, we can here indeed exploit the fact that the
approximation is compositional, i.e., $\mathcal{B}_\#^a$ can be built
out of the approximations of $\max\nolimits_\rho$, $c_k$,
$(\delta\times \delta)^*$, $\min\nolimits_u$, $\tilde{\mathcal{D}}$
(see Table~\ref{tab:basic-functions-approximations}).  

\subsection{Approximation of the Behavioural Metrics Function}
\label{sec:approximation-wasserstein}

While we previously did not spell out how to compute the approximation
of the Wasserstein lifting $\mathcal{W}$, we explain this here in
detail. We rely on the decomposition of $\mathcal{W}$ as given in
Section~\ref{sec:comp-rep}, which can be used to derive its
$d$-approximation (for a given distance function $d$).

\begin{propositionrep}
  \label{prop:approx-wasserstein}
  Let $F\colon \mathbf{Set}\to \mathbf{Set}$ be a functor and let
  $\tilde{F}$ be a corresponding predicate lifting (for $\monM$-valued
  predicates). Assume that $\tilde{F}$ is non-expansive and finitely
  coupled and fix a coalgebra $\xi\colon X\to FX$, where $X$ is
  finite.  Let $Y=X\times X$. For $d\in \monM^Y$ and
  $Y'\subseteq \Ytop{Y}{d}$ we have
  \[ 
    \mathcal{W}_\#^d (Y') = \{ (x,y)\in \Ytop{Y}{d} \mid\ \exists t\in
    \tilde{F}_\#^d(Y'), u(t)=(\xi(x),\xi(y)), \tilde{F}d(t) =
    \mathcal{W}(d)(x,y) \}.
  \]
\end{propositionrep}

\begin{proof}
  We first remark that since $X$ is finite and $\tilde{F}$ is finitely
  coupled it is sufficient to restrict to finite subsets of $F(X\times
  X)$ and $FX\times FX$ (cf. Remark~\ref{rem:embed}). In other words
  $\mathcal{W}$ can be obtained as composition of functions living in
  $\C_f$, hence $\#$ is a proper functor and approximations
  can obtained compositionally. We exploit this fact in the following.

  More concretely, we restrict $u$ to $u\colon V\to W$, where
  $V\subseteq F(X\times X)$, $W\subseteq FX\times FX$. We require that
  $W$ contains all pairs $(\xi(x),\xi(y))$ for $x,y\in X$ and
  $V = \bigcup_{(t_1,t_2)\in W} \Gamma'(t_1,t_2)$. Hence both $V,W$
  are finite.

  The function $\tilde{F}$ is restricted accordingly to a map
  $\monM^Y\to \monM^V$ as explained in Remark~\ref{rem:embed}.
  
  For $d\in \monM^Y$ and $Y'\subseteq \Ytop{Y}{d}$ we have, using the fact
  that $\mathcal{W} = (\xi\times\xi)^*\circ \min_u\circ \tilde{F}$,
  compositionality and the approximations listed in
  Table~\ref{tab:basic-functions-approximations}:
  \begin{align*}
    \mathcal{W}_\#^d (Y') = \{ (x,y)\in \Ytop{Y}{d}\mid\ & (\xi (x),\xi (y))
    \in (\min\nolimits_{u})_\#^{\tilde{F}(d)}
    (\tilde{F}^d_\#(Y')\cap V)\} \\
    = \{ (x,y)\in \Ytop{Y}{d}\mid\ & \arg\min_{t\in u^{-1}(\xi (x),\xi (y))}
    \tilde{F}(d)(t) \cap
    \tilde{F}^d_\#(Y') \cap V \not= \emptyset\} \\
    = \{ (x,y)\in \Ytop{Y}{d}\mid\  & \exists t\in \tilde{F}_\#^d(Y')\cap V, u(t)=(\xi(x),\xi(y)),\\
    &\tilde{F}d(t) = \min\nolimits_{t'\in V}\tilde{F}d(t') \} \\
    = \{ (x,y)\in \Ytop{Y}{d}\mid\  & \exists t\in \tilde{F}_\#^d(Y')\cap V, u(t)=(\xi(x),\xi(y)),\\
    &\tilde{F}d(t) = \min\nolimits_{t'\in
      \Gamma'(\xi(x),\xi(y))}\tilde{F}d(t') \} \\
    = \{ (x,y)\in \Ytop{Y}{d}\mid\ & \exists t\in \tilde{F}_\#^d(Y'),
    u(t)=(\xi(x),\xi(y)), \tilde{F}d(t) = \min\nolimits_{t'\in
      \Gamma'(\xi(x),\xi(y))}\tilde{F}d(t')\!\} \\
    = \{ (x,y)\in \Ytop{Y}{d}\mid\ & \exists t\in \tilde{F}_\#^d(Y'),
    u(t)=(\xi(x),\xi(y)), \tilde{F}d(t) = \mathcal{W}(d)(x,y) \}
  \end{align*}
  The first equality is based on Remark~\ref{rem:embed} and uses the
  fact that the approximation for the restricted $\tilde{F}$ maps $Y'$
  to $\tilde{F}^d_\#(Y')\cap V$.

  The second-last inequality also needs explanation, in particular, we
  have to show that the set on the second-last line is included in the one on
  the previous line, although we omitted the intersection with $V$.

  Hence let $(x,y)\in \Ytop{Y}{d}$ such that there exists
  $s\in\tilde{F}^d_\#(Y')$, $u(s) = (\xi(x),\xi(y))$ and
  $\tilde{F}d(s) = \min\nolimits_{t'\in
    \Gamma'(\xi(x),\xi(y))}\tilde{F}d(t')$. We have to show that there
  exists a $t$ with the same properties that is also included in $V$.

  The fact that $s\in \tilde{F}^d_\#(Y')$ implies that
  $\tilde{F}(d)(s) \ominus \tilde{F}(d\ominus \delta_{Y'})(s)
  \sqsupseteq \delta$ for small enough $\delta$, using the fact that
  $\tilde{F}_\#^d = \gamma^{\tilde{F}(d),\delta} \circ \tilde{F} \circ
  \alpha^{d,\delta}$ (for an appropriate value $\delta$).

  Since the minimum of the Wasserstein lifting is always reached in
  $\Gamma'(\xi(x),\xi(y))$, independently of the argument, there
  exists $t\in \Gamma'(\xi(x),\xi(y))\subseteq V$ (hence $u(t) =
  (\xi(x),\xi(y))$), such that
  \[ \tilde{F}(d\ominus \delta_{Y'})(t) = \min\nolimits_{t'\in
      \Gamma(\xi(x),\xi(y))}\tilde{F}(d\ominus \delta_{Y'})(t'). \]
  This implies that
  $\tilde{F}(d\ominus \delta_{Y'})(t) \sqsubseteq \tilde{F}(d\ominus
  \delta_{Y'})(s)$ (since $s\in \Gamma(\xi(x),\xi(y))$). From the
  assumption
  $\tilde{F}d(s) = \min\nolimits_{t'\in
    \Gamma'(\xi(x),\xi(y))}\tilde{F}d(t')$ we obtain
  $\tilde{F}d(s)\sqsubseteq \tilde{F}d(t)$. Hence, using the fact that
  $\ominus$ is monotone in the first and antitone in the second
  argument, we have:
  \[ \delta\sqsubseteq \tilde{F}(d)(s) \ominus \tilde{F}(d\ominus
    \delta_{Y'})(s) \sqsubseteq \tilde{F}(d)(t) \ominus
    \tilde{F}(d\ominus \delta_{Y'})(t) \sqsubseteq \delta. \] The last
  inequality follows from non-expansiveness. Hence
  \[ \tilde{F}(d)(s) \ominus \tilde{F}(d\ominus \delta_{Y'})(s) =
    \tilde{F}(d)(t) \ominus \tilde{F}(d\ominus \delta_{Y'})(t) =
    \delta, \] which in particular implies that
  $t\in\tilde{F}^d_\#(Y')$.

  In order to conclude we have to show that
  $\tilde{F}d(t) = \min \nolimits_{t'\in
    \Gamma'(\xi(x),\xi(y))}\tilde{F}d(t')$. We first observe that in
  an MV-chain $\monM$, whenever $x\sqsubseteq y$ (for $x,y\in \monM$) we
  can infer that $(y\ominus x)\oplus x = y$ (this follows for instance
  from Lemma~2.4(6) in~\cite{BEKP:FTUD-journal} and duality). The
  inequality
  $\tilde{F}(d\ominus \delta_{Y'}) \sqsubseteq \tilde{F}(d)$ holds by
  monotonicity and we can conclude that
  \begin{align*}
    \tilde{F}d(t) & = (\tilde{F}(d)(t) \ominus \tilde{F}(d\ominus
    \delta_{Y'})(t)) \oplus \tilde{F}(d\ominus \delta_{Y'})(t)) \\
    & = (\tilde{F}(d)(s) \ominus \tilde{F}(d\ominus \delta_{Y'})(s))
    \oplus \tilde{F}(d\ominus \delta_{Y'})(t)) \\
    & \sqsubseteq
    (\tilde{F}(d)(s) \ominus \tilde{F}(d\ominus \delta_{Y'})(s))
    \oplus \tilde{F}(d\ominus \delta_{Y'})(s)) = \tilde{F}(d)(s).
  \end{align*}
  The other inequality $\tilde{F}d(s)\sqsubseteq \tilde{F}d(t)$ holds
  anyway and hence $\tilde{F}d(t) = \tilde{F}d(s)$.  This
  finally implies, as desired, that
  \begin{align*} \tilde{F}d(t) = \tilde{F}d(s) = \min \nolimits_{t'\in
      \Gamma'(\xi(x),\xi(y))}\tilde{F}d(t'). \tag*{\qedhere}\end{align*}
\end{proof}

Intuitively, the definition of $\mathcal{W}_\#^d (Y')$ in
Proposition~\ref{prop:approx-wasserstein} says that $(x,y)$ is
contained in this set, whenever an optimal coupling for the successors
of $x,y$ (i.e., a coupling reaching the minimum in the Wasserstein
lifting) is contained in $\tilde{F}_\#^d(Y')$. Note that
$\tilde{F}_\#^d$ has already been characterised earlier in
Section~\ref{sec:approximation-pred-lifting}.

We now illustrate this result by two simple and concrete examples
involving coalgebras over the distribution and powerset
functor.

\begin{example}
  We continue with the case study from Section~\ref{sec:case-study} on
  probabilistic transition systems. We omit labels, i.e.\ $\Lambda$ is
  a singleton, which implied $\mathcal{B} = \mathcal{W}$.

  Let $X=\{x,y,z\}$. We define a coalgebra
  $\xi \colon X \to \Lambda\times \mathcal{D}X$ via $\xi (x)(x) = 1$,
  $\xi (y)(y) = \xi(y)(z)= \nicefrac{1}{2}$ and $\xi (z)(z)= 1$. All
  other distances are $0$.  
  
  \begin{center}
    \begin{tikzpicture}
      \node (Y) at (4,0) [circle,draw]{$y$}; 			
      \node (X) at (0,0) [circle,draw]{$x$};  
      \node (Z) at (8,0) [circle,draw]{$z$}; 
      \draw  [->] (Y) to [below] node {$\nicefrac{1}{2}$} (X);
      \draw  [->] (Y) to [below] node {$\nicefrac{1}{2}$} (Z);		
      \path (X) edge [loop below] node {$1$} (X);
      \path (Z) edge [loop below] node {$1$} (Z);	
    \end{tikzpicture}
  \end{center}

  Since all states have the same label, they are in fact
  probabilistically bisimilar and hence have behavioural distance $0$,
  given by the least fixpoint of $\mathcal{W}$.  Now consider the
  pseudo-metric $d\colon X\times X \to [0,1]$ with
  $d(x,y) = d(x,z) = d(y,z) = 1$ and $0$ for the reflexive pairs. This
  is a also a fixpoint of $\mathcal{W}$ ($d = \mathcal{W}(d)$), but
  it clearly over-estimates the true behavioural metric.

  We can detect this by computing the greatest fixpoint of
  $\mathcal{W}_\#^d$, which is
  \[ Y' = \{(x,y),(y,x),(x,z),(z,x),(y,z),(z,y)\} \neq\emptyset, \]
  containing the pairs that still have slack in $d$ and whose
  distances can be reduced. We explain why $Y' = \mathcal{W}_\#^d(Y')$
  by focusing on the example pair $(x,y)$ and check that
  $(x,y)\in \mathcal{W}_\#^d(Y')$. For this we use the definition of
  $\mathcal{W}_\#^d$ given in
  Proposition~\ref{prop:approx-wasserstein}.
  
  A valid coupling $t\in \mathcal{D}(X\times X)$ of $\xi (x),\xi(y)$
  is given by $t(x,y)=t(x,z) = \nicefrac{1}{2}$. It satisfies
  $u(t) = (\xi(x),\xi(y))$ and is optimal since it is the only one. We
  obtain the Wasserstein lifting
  \[ \mathcal{W}(d)(x,y) =d^{\mathcal{D}}(\xi(x),\xi (y)) =
    \min_{t'\in \Gamma(\xi(x),\xi (y))} \tilde{\mathcal{D}} d(t') =
    \tilde{\mathcal{D}} d(t) = \nicefrac{1}{2} \cdot 1 +
    \nicefrac{1}{2} \cdot 1 = 1. \] It is left to show that
  $t\in \tilde{\mathcal{D}}_\#^d(Y')$, for which we use the
  characterisation in
  Proposition~\ref{prop:char-abstr-predlift}. We have
  $\bar{d}(x,y) =\bar{d}(x,z) = (1,1)$ where
  \[ \mathcal{D}\bar{d}(t) = p \in \mathcal{D}Z \text{ with } p(1,1) =
    \nicefrac{1}{2}+\nicefrac{1}{2}= 1. \] From
  Lemma~\ref{lem:char-hash-distr} we obtain
  $p \in \tilde{\mathcal{D}}_\#^{\pi_1}((0,1]\times\{1\})$ and by
  definition $t\in (\mathcal{D}\bar{d})^{-1} (p)$, i.e.
  $t\in \tilde{\mathcal{D}}_\#^d(Y')$.  So we can conclude that
  $(x,y) \in \mathcal{W}_\#^d(Y')$.  
\end{example}

\begin{example}
  We now consider an example in the non-deterministic setting. Let
  $X=\{x,y\}$ and a coalgebra $\xi \colon X \to \mathcal{P}X$ be given
  by $\xi (x) = \{x,y\}$, $\xi(y) = \{x\}$.

  \begin{center}
    \begin{tikzpicture}
      \node (Y) at (4,0) [circle,draw]{$y$}; 			
      \node (X) at (0,0) [circle,draw]{$x$};  
      \draw  [->] (X) to [bend left=20] (Y);
      \draw  [->] (Y) to [bend left=20] (X);		
      \path (X) edge [loop below] (X);
    \end{tikzpicture}
  \end{center}

  Since all states have successors, they are in fact bisimilar and
  hence have behavioural distance $0$, given by the least fixpoint of
  $\mathcal{W}$.  Now consider the pseudo-metric
  $d\colon X\times X \to [0,1]$ with
  $d(x,y) = d(y,x) = \nicefrac{1}{2}$ and $0$ for the reflexive
  pairs. This is a also a fixpoint of $\mathcal{W}$
  ($d = \mathcal{W}(d)$), but it clearly over-estimates the true
  behavioural metric.

  We can detect this by computing the greatest fixpoint of
  $\mathcal{W}_\#^d$, which is
  \[ Y' = \{(x,y),(y,x)\} \neq \emptyset, \] containing the pairs that still have
  slack in $d$ and whose distances can be reduced. We explain why
  $Y' = \mathcal{W}_\#^d(Y')$ by focusing on the example pair $(x,y)$
  and check that $(x,y)\in \mathcal{W}_\#^d(Y')$. For this we use the
  definition of $\mathcal{W}_\#^d$ given in
  Proposition~\ref{prop:approx-wasserstein}.
  
  A valid (and optimal) coupling $t\in \mathcal{P}(X\times X)$ of
  $\xi (x),\xi(y)$ is given by $t = \{ (x,x),(y,x)\}$.  It satisfies
  $u(t) = (\xi(x),\xi(y))$. We obtain the Wasserstein lifting
  \[ \mathcal{W}(d)(x,y) =d^{\mathcal{P}}(\xi(x),\xi (y)) = \min_{t\in
      \Gamma(\xi(x),\xi (y))} \tilde{\mathcal{P}} d(t) =
    \tilde{\mathcal{P}} d(t) = \max\{0,\nicefrac{1}{2}\} =
    \nicefrac{1}{2}. \] It is left to show that
  $t\in \tilde{\mathcal{P}}_\#^d(Y')$, for which we use the
  characterisation in
  Proposition~\ref{prop:char-abstr-predlift}. We have
  $\bar{d}(x,x) =\bar{d}(y,y) = (0,0)$,
  $\bar{d}(x,y) =\bar{d}(y,x) = (\nicefrac{1}{2},1)$ and
  \[ \mathcal{P}\bar{d}(t) = S=\{ (0,0),(\nicefrac{1}{2},1)\} \] From
  Lemma~\ref{lem:char-hash-powerset} we obtain
  $S \in
  \tilde{\mathcal{P}}_\#^{\pi_1}((\monM\backslash\{0\})\times\{1\})$
  and $t\in (\mathcal{P}\bar{d})^{-1} (S)$, i.e.
  $t\in \tilde{\mathcal{P}}_\#^d(Y')$.  So we can conclude that
  $(x,y) \in \mathcal{W}_\#^d(Y')$.
\end{example}

\section{GS-Monoidality}
\label{sec:gs-mon}

We will now show that the categories $\Cf$ and $\Af$ can be turned
into gs-monoidal categories, making $\#$ a gs-monoidal functor. This
will give us a method to assemble functions and their approximations
compositionally and this will form the basis for the
tool.
We first define gs-monoidal categories in detail
(cf.~\cite[Definition~7]{gh:inductive-graph} and~\cite{fgtc:gs-monoidal-oplax}).

\begin{definition}[gs-monoidal categories]
  \label{def:gs-monoidal}
  A \emph{strict gs-monoidal category} is a strict symmetric monoidal
  category, where $\otimes$ denotes the tensor and $e$ its unit and
  symmetries are given by $\rho_{a,b}\colon a\otimes b\to b\otimes a$.
  For every object $a$ there exist morphisms
  $\nabla_a\colon a\to a\times a$ (\emph{duplicator}) and
  $!_a\colon a\to e$ (\emph{discharger}) satisfying the axioms given
  below. (See also their visualisations as string diagrams in
  Figure~\ref{fig:gs-axioms}.)

\medskip

\begin{small}
\begin{tabular}{ll}
  \hline
  \begin{tabular}{l}
  \emph{functoriality of tensor}\\[.5mm]
  $(g\otimes g') \circ (f\otimes f') = (g\circ f) \otimes (g'\circ f')$\\
  $\mathit{id}_{a\otimes b} = \mathit{id}_a \otimes \mathit{id}_b$\\[.5mm]
  \hline
  \emph{monoidality}\\[.5mm]
  $(f\otimes g) \otimes h = f \otimes (g\otimes h)$\\
  $f\otimes \mathit{id}_e=f = \mathit{id}_e\otimes f$\\[.5mm]
  \end{tabular}
  &
  \begin{tabular}{l}
  \emph{naturality}\\[.5mm]
    $(f'\otimes f) \circ \rho_{a,a'} = \rho_{b,b'} \circ (f\otimes f')$\\
    \ \\[.5mm]
  \hline
  \emph{symmetry}\\[.5mm]
  $\rho_{e,e} = \mathit{id}_e$ \qquad
  $\rho_{b,a}\circ \rho_{a,b} = \mathit{id}_{a\otimes b}$\\
  $\rho_{a\otimes b,c} = (\rho_{a,c}\otimes \mathit{id}_b \circ(\mathit{id}_a\otimes \rho_{b,c}) $\\
  \end{tabular}\\[.5mm]
  \hline
  \hline
  \multicolumn{2}{l}{\ \emph{gs-monoidality}}\\[.5mm]
  \multicolumn{2}{l}{\ $!_e = \nabla_e = \mathit{id}_e$}\\[.5mm]
  \begin{tabular}{l}
  \hline
  \emph{coherence axioms}\\[.5mm]
  $(\mathit{id}_a\otimes \nabla_a) \circ \nabla_a = (\nabla_a\otimes \mathit{id}_a)\circ \nabla_a$\\
  $\mathit{id}_a=(\mathit{id}_a\otimes !_a) \circ \nabla_a$\\
  $\rho_{a,a}\circ \nabla_a = \nabla_a$\\[.5mm]
  \end{tabular}
  &
  \begin{tabular}{l}
  \hline
  \emph{monoidality axiom}s\\[.5mm]
  $!_{a\otimes b} = !_a \otimes !_b$\\
  $(\mathit{id}_a\otimes \rho_{a,b} \otimes
  \mathit{id}_b) \circ (\nabla_a\otimes \nabla_b) =
  \nabla_{a\otimes b}$\\
  (or, equiv.
  $\nabla_a\otimes \nabla_b = (\mathit{id}_a\otimes
  \rho_{b,a} \otimes \mathit{id}_b) \circ
    \nabla_{a\otimes b}$
  \end{tabular}\\
  \hline 
\end{tabular}
\end{small}

\medskip

A functor $F : \C\to \D$, where $\C$ and $\D$ are gs-monoidal categories, is \emph{gs-monoidal} if the following holds:

\medskip

  \begin{small}
    \begin{tabular}{l@{\hspace{1cm}}l@{\hspace{1cm}}l}
    \hline
    \begin{tabular}{l}
    \emph{monoidality}\\[.5mm]
    $F(e) = e'$\\
    $F(a\otimes b) = F(a) \otimes' F(b)$\\
    \end{tabular}
    &
    \begin{tabular}{l}
    \emph{symmetry}\\[.5mm]
    $F(\rho_{a,b}) = \rho'_{F(a),F(b)}$\\
     \\
    \end{tabular}
    &
    \begin{tabular}{l}
    \emph{gs-monoidality}\\[.5mm]
    $F(!_a) = !'_{F(a)}$\\
     $F(\nabla_a) = \nabla'_{F(a)}$\\
    \end{tabular}\\
    \hline 
  \end{tabular}
  \end{small}
  \medskip
  
\noindent  
where the primed operators are from category $\D$, the others from $\C$.
\end{definition}

Note that the visualisations of the axioms in
Figure~\ref{fig:gs-axioms} match the images in
Figure~\ref{fig:decomp-termination} and
Figure~\ref{fig:decomp-beh-metric}. However, instead of labelling the
wires with the types of objects as in the previous figures, we here
label the wires with a fixed object.

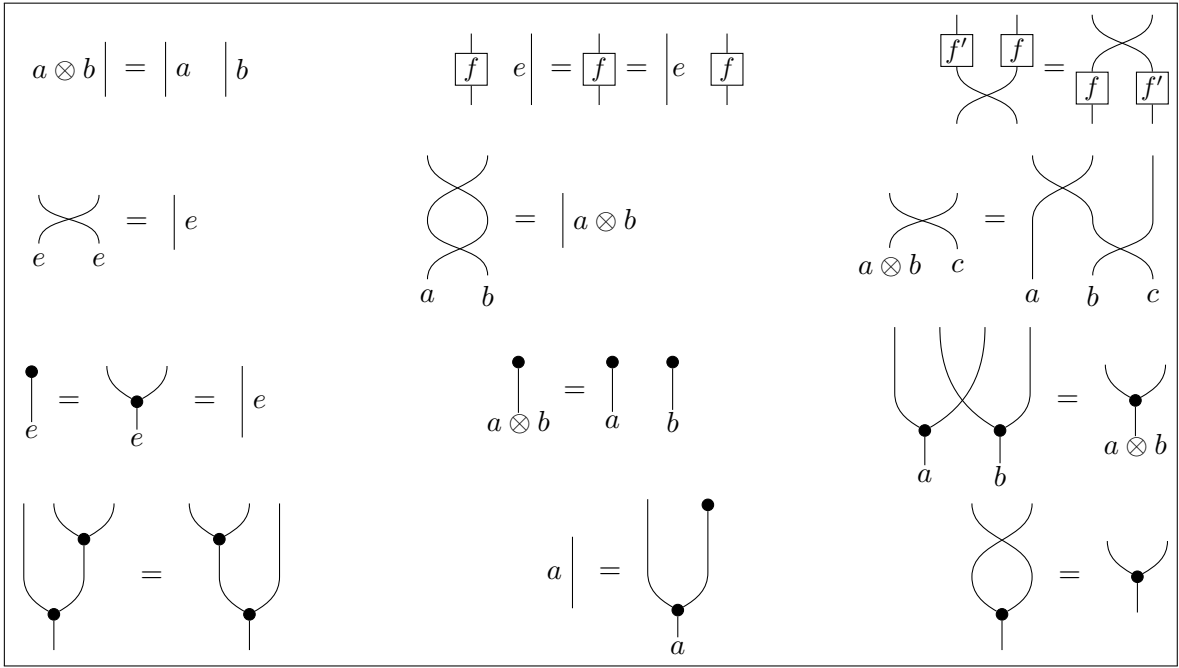
\begin{figure}[h]
  \begin{center}
  \hspace*{-5pt}
  \fbox{
  \hspace*{-8pt}
  \parbox{\textwidth}{    
  \begin{tikzpicture}[baseline={(current bounding box.center)}]
    \node (x) at (0,0) {};
    \node (x1) at (1,0) {};
    \node (eq) at (1.5,0) {$=$};
    \node (y) at (2,-0.4) {};
    \node (y1) at (3,-0.4) {};
    \node (z) at (2,0.4) {};
    \node (z1) at (3,0.4) {};
    \draw (x) -- node[above]{$a\otimes b$} (x1);
    \draw (y) -- node[above]{$b$} (y1);
    \draw (z) -- node[above]{$a$} (z1);
  \end{tikzpicture}
  \hfill\vrule\hfill
  \begin{tikzpicture}[baseline={(current bounding box.center)}, every node/.style={inner sep=0, outer xsep=0}]
    \node (a) at (0,0) {};
    \node[minimum height=1.1em, minimum width=1.1em, draw] (f) at (0.5,0) {\small $f$};
    \node (b) at (1,0) {};
    \node (e) at (0,0.8) {};
    \node (e1) at (1,0.8) {};
    \node (eq) at (1.5,0.4) {$=$};
    \node (a1) at (2,0.4) {};
    \node[minimum height=1.1em, minimum width=1.1em, draw] (f1) at (2.5,0.4) {\small $f$};
    \node (b1) at (3,0.4) {};
    \node (eq0) at (3.5,0.4) {$=$};
    \node (e0) at (4,0) {};
    \node (e01) at (5,0) {};
    \node (a0) at (4,0.8) {};
    \node[minimum height=1.1em, minimum width=1.1em, draw] (f0) at (4.5,0.8) {\small $f$};
    \node (b0) at (5,0.8) {};
    \draw (a) -- (f) -- (b);
    \draw (e) -- node[below, inner sep=2pt]{$e$} (e1);
    \draw (a1) -- (f1) -- (b1);
    \draw (a0) -- (f0) -- (b0);
    \draw (e0) -- node[above, inner sep=2pt]{$e$} (e01);
  \end{tikzpicture}
  \hfill\vrule\hfill
  \begin{tikzpicture}[baseline={(current bounding box.center)}, every node/.style={inner sep=0, outer xsep=0}]
    \node (a) at (0,0) {};
    \node (a1) at (0,0.8) {};
    \node[minimum height=1.1em, minimum width=1.1em, draw] (f1) at (1,0) {\small $f'$};
    \node[minimum height=1.1em, minimum width=1.1em, draw] (f) at (1,0.8) {\small $f$};
    \node (b1) at (1.5,0) {};
    \node (b) at (1.5,0.8) {};
    \node (eq) at (2,0.4) {$=$};
    \node (a0) at (2.5,0) {};
    \node (a01) at (2.5,0.8) {};
    \node[minimum height=1.1em, minimum width=1.1em, draw] (f0) at (3,0) {\small $f$};
    \node[minimum height=1.1em, minimum width=1.1em, draw] (f01) at (3,0.8) {\small $f'$};
    \node (b01) at (4,0) {};
    \node (b0) at (4,0.8) {};
    \draw (a) to[in=180, out=0] (f) -- (b);
    \draw (a1) to[in=180, out=0] (f1) -- (b1);
    \draw (a0) -- (f0) to[in=180, out=0] (b0);
    \draw (a01) -- (f01) to[in=180, out=0] (b01);
  \end{tikzpicture}
  \end{center}
  \smallskip\hrule\smallskip
  \begin{center}
  \begin{tikzpicture}[baseline={(current bounding box.center)}]
    \node (a) at (0,0) {$e$};
    \node (b) at (0,0.8) {$e$};
    \node (b1) at (1.5,0) {};
    \node (a1) at (1.5,0.8) {};
    \node (eq) at (2,0.4) {$=$};
    \node (e) at (2.5,0.4) {};
    \node (e1) at (3.5,0.4) {};
    \draw (a) to[out=0, in=180] (a1);
    \draw (b) to[out=0, in=180] (b1);
    \draw (e) -- node[above]{$e$} (e1);
  \end{tikzpicture}
  \hfill\vrule\hfill
  \begin{tikzpicture}[baseline={(current bounding box.center)}]
    \node (a) at (0,0) {$b$};
    \node (b) at (0,0.8) {$a$};
    \node[inner sep=0, outer sep=0] (b1) at (1,0) {};
    \node[inner sep=0, outer sep=0] (a1) at (1,0.8) {};
    \node (a2) at (2,0) {};
    \node (b2) at (2,0.8) {};
    \node (eq) at (2.5,0.4) {$=$};
    \node (ab) at (3,0.4) {};
    \node (ab1) at (4,0.4) {};
    \draw (a) to[out=0, in=180] (a1.center) to[out=0, in=180] (a2);
    \draw (b) to[out=0, in=180] (b1.center) to[out=0, in=180] (b2);
    \draw (ab) -- node[above]{$a\otimes b$} (ab1);
  \end{tikzpicture}
  \hfill\vrule\hfill
  \begin{tikzpicture}[baseline={(current bounding box.center)}]
    \node (ab) at (0.4,0) {$c$};
    \node (x) at (0.5,0) {};
    \node (c0) at (0,1) {$a\otimes b$};
    \node (c01) at (1.5,0) {};
    \node (ab1) at (1.5,1) {};
    \node (eq) at (2,0.5) {$=$};
    \node (a) at (2.5,1.2) {$a$};
    \node (b) at (2.5,0.4) {$b$};
    \node (c) at (2.5,-0.2) {$c$};
    \node (a1) at (3.5,1.2) {};
    \node (c1) at (3.5,0.4) {};
    \node (b1) at (3.5,-0.2) {};
    \node (c2) at (4.5,1.2) {};
    \node (a2) at (4.5,0.4) {};
    \node (b2) at (4.5,-0.2) {};
    \draw (x) to[out=0, in=180] (ab1);
    \draw (c0) to[out=0, in=180] (c01);
    \draw (a) to[out=0, in=180] (a1.center) to[out=0, in=180] (a2);
    \draw (b) to[out=0, in=180] (b1.center) to[out=0, in=180] (b2);
    \draw (c) to[out=0, in=180] (c1.center) to[out=0, in=180] (c2);
  \end{tikzpicture}
  \end{center}
  \smallskip\hrule\smallskip
  \begin{center}
  \begin{tikzpicture}[baseline={(current bounding box.center)}, every node/.style={inner sep=0, outer xsep=0, circle}]
    \node (i) at (0,0) {$e$};
    \node[inner xsep=.2em, draw, fill, black] (del) at (0.9,0) {};
    \node (eq) at (1.5,0) {$=$};
    \node (a) at (2,0) {$e$};
    \node[inner xsep=.2em, draw, fill, black] (N) at (2.5,0) {};
    \node (b) at (3,0.4) {};
    \node (c) at (3,-0.4) {};
    \node (eq0) at (3.5,0) {$=$};
    \node (e) at (4,0) {};
    \node (e1) at (4.5,0) {};
    \draw (i) -- (del);
    \draw (a) -- (N) to[out=60, in=180] (b);
    \draw (N) to[out=-60, in=180] (c);
    \draw (e) -- node[above, inner sep=2pt]{$e$} (e1);
  \end{tikzpicture}
  \hfill\vrule\hfill
  \begin{tikzpicture}[baseline={(current bounding box.center)}, every node/.style={inner sep=0, outer xsep=0, circle}]
    \node (ab) at (-0.3,0) {$a\otimes b$};
    \node (x) at (0.2,0) {};
    \node[inner xsep=.2em, draw, fill, black] (del) at (0.9,0) {};
    \node (eq) at (1.5,0) {$=$};
    \node (a) at (2,-0.4) {$a$};
    \node (b) at (2,0.4) {$b$};
    \node[inner xsep=.2em, draw, fill, black] (del1) at (3,-0.4) {};
    \node[inner xsep=.2em, draw, fill, black] (del2) at (3,0.4) {};
    \draw (x) -- (del);
    \draw (a) -- (del1);
    \draw (b) -- (del2);
  \end{tikzpicture}
  \hfill\vrule\hfill
  \begin{tikzpicture}[baseline={(current bounding box.center)}, every node/.style={inner sep=0, outer xsep=0, circle}]
    \node (a) at (0,-0.4) {$a$};
    \node[inner xsep=.2em, draw, fill, black] (Na) at (0.6,-0.4) {};
    \node (a1) at (1.1,0) {};
    \node (ia1) at (2,0) {};
    \node (a2) at (2,-1.2) {};
    \node (b) at (0,-1.4) {$b$};
    \node[inner xsep=.2em, draw, fill, black] (Nb) at (0.6,-1.4) {};
    \node (b1) at (2,-0.6) {};
    \node (b2) at (1.1,-1.8) {};
    \node (ib2) at (2,-1.8) {};
    \node (eq) at (2.5,-0.9) {$=$};
    \node (ab) at (3.5,-0.9) {$a\otimes b$};
    \node (x) at (4,-0.9) {};
    \node[inner xsep=.2em, draw, fill, black] (N) at (4.5,-0.9) {};
    \node (ab1) at (5.5,-0.5) {};
    \node (ab2) at (5.5,-1.4) {};
    \draw (a) -- (Na) to[out=60, in=180] (a1.center) -- (ia1);
    \draw (Na) to[out=-60, in=180] (a2);
    \draw (Nb) to[out=60, in=180] (b1);
    \draw (b) -- (Nb) to[out=-60, in=180] (b2.center) -- (ib2);
    \draw (x) -- (N) to[out=60, in=180] (ab1);
    \draw (N) to[out=-60, in=180] (ab2);
  \end{tikzpicture}
  \end{center}
  \smallskip\hrule\smallskip
  \begin{center}
  \begin{tikzpicture}[baseline={(current bounding box.center)}, every node/.style={inner sep=0, outer xsep=0, circle}]
    \node (a) at (0,-0.4) {};
    \node[inner xsep=.2em, draw, fill, black] (N1) at (0.5,-0.4) {};
    \node (b) at (1,0) {};
    \node (c) at (1,-0.8) {};
    \node[inner xsep=.2em, draw, fill, black] (N2) at (1.5,-0.8) {};
    \node (b1) at (2,0) {};
    \node (c1) at (2,-0.4) {};
    \node (c2) at (2,-1.2) {};
    \node (eq) at (2.5,-0.6) {$=$};
    \node (a0) at (3,-0.8) {};
    \node[inner xsep=.2em, draw, fill, black] (N) at (3.5,-0.8) {};
    \node (c0) at (4,-0.4) {};
    \node (b0) at (4,-1.2) {};
    \node[inner xsep=.2em, draw, fill, black] (N0) at (4.5,-0.4) {};
    \node (c01) at (5,0) {};
    \node (c02) at (5,-0.8) {};
    \node (b01) at (5,-1.2) {};
    \draw (a) -- (N1) to[out=60, in=180] (b.center) -- (b1);
    \draw (N1) to[out=-60, in=180] (c.center) -- (N2) to[out=60, in=180] (c1);
    \draw (N2) to[out=-60, in=180] (c2);
    \draw (a0) -- (N) to[out=-60, in=180] (b0.center) -- (b01);
    \draw (N) to[out=60, in=180] (c0.center) -- (N0) to[out=60, in=180] (c01);
    \draw (N0) to[out=-60, in=180] (c02);
  \end{tikzpicture}
  \hfill\vrule\hfill
  \begin{tikzpicture}[baseline={(current bounding box.center)}, every node/.style={inner sep=0, outer xsep=0, circle}]
    \node (a0) at (0.5,0) {};
    \node (a1) at (1.5,0) {};
    \node (eq) at (2,0) {$=$};
    \node (a) at (2.5,0) {$a$};
    \node[inner xsep=.2em, draw, fill, black] (N) at (3,0) {};
    \node (b) at (3.5,0.4) {};
    \node (c) at (3.5,-0.4) {};
    \node (b1) at (4.5,0.4) {};
    \node[inner xsep=.2em, draw, fill, black] (c1) at (4.4,-0.4) {};
    \draw (a0) -- node[above, inner sep=2pt]{$a$} (a1);
    \draw (a) -- (N) to[out=60, in=180] (b.center) -- (b1);
    \draw (N) to[out=-60, in=180] (c.center) -- (c1);
  \end{tikzpicture}
  \hfill\vrule\hfill
  \begin{tikzpicture}[baseline={(current bounding box.center)}, every node/.style={inner sep=0, outer xsep=0, circle}]
    \node (a) at (0,-0.4) {};
    \node[inner xsep=.2em, draw, fill, black] (N) at (0.5,-0.4) {};
    \node (b) at (1,0) {};
    \node (c) at (1,-0.8) {};
    \node (c1) at (2,0) {};
    \node (b1) at (2,-0.8) {};
    \node (eq) at (2.5,-0.4) {$=$};
    \node (a0) at (3,-0.4) {};
    \node[inner xsep=.2em, draw, fill, black] (N0) at (3.5,-0.4) {};
    \node (b0) at (4,0) {};
    \node (c0) at (4,-0.8) {};
    \draw (a) -- (N) to[out=60, in=180] (b.center) to[out=0, in=180] (b1);
    \draw (N) to[out=-60, in=180] (c.center) to[out=0, in=180] (c1);
    \draw (a0) -- (N0) to[out=60, in=180] (b0);
    \draw (N0) to[out=-60, in=180] (c0);
  \end{tikzpicture}
  }}
  \end{center}
  \caption{String diagrams for the axioms of gs-monoidal categories.}
  \label{fig:gs-axioms}
\end{figure}

In fact, in order to obtain strict gs-monoidal categories with
disjoint union as tensor, we will work with the skeleton categories
where every finite set $Y$ is represented by an isomorphic copy
$\{1,\dots,|Y|\}$. This enables us to make disjoint union strict,
i.e., associativity holds on the nose and not just up to
isomorphism. In particular for finite sets $Y,Z$, we define disjoint
union as $Y+Z = \{ 1,\dots ,|Y|,|Y|+1,\dots, |Y|+|Z|\}$.

\begin{theoremrep}[$\Cf$ is gs-monoidal]
  \label{thm:c-gs-monoidal}
  The category $\Cf$ with the following operators is gs-monoidal:
\begin{enumerate}
\item The tensor $\otimes$ on objects $a\in \monM^Y$ and
  $b\in \monM^Z$ is defined as
  \[
    a\otimes b = a+b \in \monM^{Y+Z}
  \]
  where for $k\in Y+Z$ we have $(a+b)(k) = a(k)$ if $k\leq |Y|$ and
  $(a+b)(k) = b(k-|Y|)$ if $|Y|<k\leq |Y|+|Z|$.

  On arrows $f\colon a\arr b$ and $g\colon a'\arr b'$ (with
  $a'\in \monM^{Y'}$, $b'\in \monM^{Z'}$) tensor is given by
  \[
    f\otimes g \colon \monM^{Y+ Y'} \to \monM^{Z+ Z'}, \quad
    (f\otimes g) (u) = f(\cev{u}_Y)+g(\vec{u}_{Y})
  \]
  for $u\in \monM^{Y+ Y'}$ where $\cev{u}_Y\in \monM^Y$ and
  $\vec{u}_Y\in \monM^{Y'}$ are defined as
  \[ \cev{u}_Y(k) = u(k) \mbox{ for $1\leq k \leq |Y|$ and }
    \vec{u}_Y(k) = u(|Y|+k) \mbox{ for $1\leq k \leq |Y'|$.} \]
\item The symmetry $\rho_{a,b}\colon a\otimes b\arr b\otimes a$ for
  $a\in \monM^Y$, $b\in \monM^Z$ is defined for $u\in \monM^{Y+Z}$ as
  \[
    \rho_{a,b}(u) = \vec{u}_Y+\cev{u}_Y.
  \]
\item The unit $e$ is the unique mapping
  $e \colon\emptyset \to \monM$.

\item The duplicator $\nabla_a \colon a\arr a\otimes a$ for
  $a\in \monM^Y$ is defined for $u\in \monM^Y$ as
  \[\nabla_{a}(u) =  u+u.\]

\item The discharger $!_a\colon a\arr e$ for $a\in \monM^Y$ is defined
  for $u \in \monM^Y$ as $!_a(u) = e$.

\end{enumerate}
\end{theoremrep}

\begin{proof}
  In the following let $a\in \monM^Y$, $a'\in \monM^{Y'}$,
  $b\in \monM^Z$, $b'\in \monM^{Z'}$, $c\in \monM^W$,
  $c'\in \monM^{W'}$ be objects in $\Cf$.

  We know that $\Cf$ is a well-defined category from
  Lemma~\ref{le:cate}. We also note that disjoint unions of
  non-expansive functions are non-expansive. Moreover, given
  $f\colon a\arr b$ and $g\colon a'\arr b'$, that
  \begin{align*}
    & (f\otimes g) (a\otimes a') 
    = (f\otimes g) (a+ a') \\
    &= f(\overleftarrow{(a+a')}_Y)  +  g(\overrightarrow{(a+a')}_Y)
    = f(a) + g(a')\\
    & = b+b' 
    = b\otimes b'.
  \end{align*}
  Thus  $f\otimes g$ is a well-defined arrow
  $a\otimes a'\arr b\otimes b'$.

We next verify all the axioms of gs-monoidal categories given in
Definition~\ref{def:gs-monoidal}. In general the calculations are
straightforward, but they are provided here for completeness.

In the sequel we will often use the fact that
$\cev{u}_Y + \vec{u}_Y = u$ whenever $Y$ is a subset of the domain of
$u$.

\begin{enumerate}
\item functoriality of tensor: 

  \begin{itemize}
  \item 
    $\mathit{id}_{a\otimes b} = \mathit{id}_a \otimes \mathit{id}_b$

    \smallskip
    
    Let $u\in \monM^{Y+Z}$. Then
    \begin{eqnarray*}
      &&(\mathit{id}_a \otimes \mathit{id}_b) (u) 
      = \mathit{id}_a (\cev{u}_Y) +  \mathit{id}_b (\vec{u}_Y)
      = \cev{u}_Y+\vec{u}_Y
      = u 
      = \mathit{id}_{a\otimes b}(u) 
    \end{eqnarray*}  
    
    \smallskip

  \item
    $(g\otimes g') \circ (f\otimes f') = (g\circ f) \otimes (g'\circ
    f')$

    \smallskip

    This is required to hold when both sides are defined.
    Hence let $f\colon a\arr b$, $g\colon b\arr c$,
    $f'\colon a'\arr b'$, $g'\colon b'\arr c'$ and
    $u\in \monM^{Y+ Y'}$. We obtain:
    \begin{align*}
      & (g\otimes g') \circ (f\otimes f') (u) 
      = (g\otimes g') (f(\cev{u}_Y)+f'(\vec{u}_Y)) \\
      &= g(f(\cev{u}_Y)) + g'(f'(\vec{u}_Y))
      = ((g\circ f) \otimes (g'\circ f')) (\cev{u}_Y+\vec{u}_Y) \\
      &= ((g\circ f) \otimes (g'\circ f'))(u)
    \end{align*}  
  \end{itemize}
  
\item monoidality:
  \begin{itemize}
  \item $f\otimes \mathit{id}_e=f = \mathit{id}_e\otimes f$

    \smallskip
    
    Let $f\colon a\arr b$ and $u\in \monM^Y$. It holds that:
    \begin{align*}
      & (f\otimes \mathit{id}_e) (u) = (f\otimes \mathit{id}_e) (u+e) 
      = f(u) + \mathit{id}_e(e) \\
      &= f(u) + e = f(u) = e + f(u) \\
      &=   \mathit{id}_e(e) + f(u) 
      = ( \mathit{id}_e \otimes f) (e+u) = (\mathit{id}_e \otimes f)(u)
    \end{align*}

    \smallskip
            
  \item $(f\otimes g) \otimes h = f \otimes (g\otimes h)$

    \smallskip
    
    Let $f\colon a\arr a'$, $g\colon b\arr b'$ and $h\colon c\arr c'$
    and $u\in \monM^{Y+ Z+ W}$, then
    \begin{align*}
      & ((f\otimes g)\otimes h)(u) 
      = (f\otimes g) (\cev{u}_{Y+Z}) + h(\vec{u}_{Y+Z}) \\
      &= (f(\overleftarrow{(\cev{u}_{Y+Z})}_Y) +
      g(\overrightarrow{(\cev{u}_{Y+Z})}_Y)) + h(\vec{u}_{Y+Z}) \\
      &= f(\cev{u}_Y) + (g(\overleftarrow{(\vec{u}_Y)}_Z) +
      h(\overrightarrow{(\vec{u}_Y)}_Z)) \\
      &= f(\cev{u}_Y) + (g\otimes h) (\vec{u}_Y) 
      = (f\otimes (g\otimes h)) (u)
    \end{align*} 
    where we use the fact that $\overrightarrow{(\cev{u}_{Y+Z})}_Y =
    \overleftarrow{(\vec{u}_Y)}_Z$. 
  \end{itemize}

  \smallskip

\item naturality:
\begin{itemize}
\item $(f'\otimes f) \circ \rho_{a,a'} = \rho_{b,b'} \circ (f\otimes
  f')$

  \smallskip

  Let $f\colon a\arr b$ and $f'\colon a'\arr b'$. Then for $u\in \monM^{Y+ Y'}$
  \begin{align*}
    & (\rho_{b,b'} \circ (f\otimes f'))(u)  
    = \rho_{b,b'} (f(\cev{u}_Y)+f'(\vec{u}_Y)) \\
    &= f'(\vec{u}_Y) + f(\cev{u}_Y) 
    = (f'\otimes f)(\vec{u}_Y+\cev{u}_Y)
    = ((f'\otimes f) \circ \rho_{a,a'})(u)
 \end{align*}

\end{itemize}
\item symmetry:
\begin{itemize}
\item $\rho_{e,e} = \mathit{id}_e$

  \smallskip

  We note that $e$ is the unique function from $\emptyset$ to $\monM$
  and furthermore $e\otimes e = e+e = e$. Then
  \[ \rho_{e,e} (e) = \rho_{e,e} (e+e) = e+e = e = \mathit{id}_e(e)\]

\item $\rho_{b,a}\circ \rho_{a,b} = \mathit{id}_{a\otimes b}$

  \smallskip

Let $u\in \monM^{Y+Z}$, then:
\[ (\rho_{b,a} \circ \rho_{a,b})(u) = \rho_{b,a} (\vec{u}_Y+\cev{u}_Y)
  =  \cev{u}_Y + \vec{u}_Y = u = \mathit{id}_{a\otimes b}(u) \]

\item $\rho_{a\otimes b,c} = (\rho_{a,c}\otimes \mathit{id}_b)
  \circ(\mathit{id}_a\otimes \rho_{b,c})$

  \smallskip

Let $u\in \monM^{Y+Z+W}$, then:
\begin{align*}
& ((\rho_{a,c} \otimes \mathit{id}_b) \circ (\mathit{id}_a \otimes  \rho_{b,c}))(u)\\
&= (\rho_{a,c} \otimes \mathit{id}_b) (\mathit{id}_a (\cev{u}_Y) + \rho_{b,c}(\vec{u}_Y)) \\
&= (\rho_{a,c} \otimes
\mathit{id}_b)(\cev{u}_Y+\overrightarrow{(\vec{u}_Y)}_Z +
\overleftarrow{(\vec{u}_Y)}_Z) \\
&= \rho_{a,c}(\cev{u}_Y+\vec{u}_{Y+Z}) + \mathit{id}_b (\overleftarrow{(\vec{u}_Y)}_Z) 
= \vec{u}_{Y+Z} + \cev{u}_Y+\overrightarrow{(\cev{u}_{Y+Z})}_Y \\
&= \vec{u}_{Y+Z} + \cev{u}_{Y+Z} 
= \rho_{a\otimes b,c} (u)
\end{align*}
where we use the fact that $\overrightarrow{(\vec{u}_Y)}_Z =
\vec{u}_{Y+Z}$ and $\overleftarrow{(\vec{u}_Y)}_Z =
\overrightarrow{(\cev{u}_{Y+Z})}_Y$. 

\end{itemize}

    \smallskip

\item gs-monoidality:
\begin{itemize}
\item $!_e = \nabla_e = \mathit{id}_e$

  \smallskip

  Since $e$ is the unique function of type $\emptyset\to\monM$ and $e+e=e$, we obtain:
  \[ !_e(e) = e = \mathit{id}_e (e) = e = e + e  = \nabla_e(e) \]

\item coherence axioms:

  \smallskip
  For $u\in \monM^Y$, we note that $\overleftarrow{(u+u)}_Y = \overrightarrow{(u+u)}_Y = u$.
  
\begin{itemize}
\item $(\mathit{id}_a\otimes \nabla_a) \circ \nabla_a =
  (\nabla_a\otimes \mathit{id}_a)\circ \nabla_a$

  \smallskip

Let $u\in \monM^Y$, then:
\begin{align*}
& ((\mathit{id}_{a} \otimes \nabla_{a})\circ \nabla_a)(u) 
= (\mathit{id}_a\otimes \nabla_a)(u+u)\\
&= \mathit{id}_a(u) + \nabla_a(u) 
= u + u + u 
= \nabla_a(u) + \mathit{id}_a(u) \\
&= (\nabla_a \otimes \mathit{id}_a) (u+u) 
= (\nabla_a \otimes \mathit{id}_a) (\nabla_a(u))
\end{align*}

\item $\mathit{id}_a=(\mathit{id}_a\otimes !_a) \circ \nabla_a$

  \smallskip
  
Let $u\in \monM^Y$, then:
\begin{align*}
&((\mathit{id}_{a} \otimes !_{a}) \circ \nabla_a)(u) 
= (\mathit{id}_{a} \otimes !_{a}) (u+u)\\
&= \mathit{id}_a(u) + !_a(u) 
= \mathit{id}_a(u) + e 
= \mathit{id}_a(u)
\end{align*}

\item $\rho_{a,a}\circ \nabla_a = \nabla_a$

  \smallskip

Let $u\in \monM^Y$, then:
\[ (\rho_{a,a} \circ \nabla_a)(u) = \rho_{a,a} (u + u) = u + u =
  \nabla_a (u) \]

\end{itemize}
\item monoidality axioms:
\begin{itemize}
\item $!_{a\otimes b} = !_a \otimes !_b$

  \smallskip

Let $u\in \monM^{Y+Z}$, then:
\[
  !_{a\otimes b} (u)  = e = e+e = !_a(\cev{u}_Y) + !_b(\vec{u}_Y) \\
  = (!_a\otimes !_b)(u)
\]

    \smallskip

\item $\nabla_a\otimes \nabla_b = (\mathit{id}_a\otimes \rho_{b,a}
  \otimes \mathit{id}_b) \circ \nabla_{a\otimes b}$

  \smallskip

Let $u\in \monM^{Y+Z}$, then:
\begin{align*}
&(\mathit{id}_a \otimes \rho_{b,a} \otimes \mathit{id}_b) ( \nabla_{a\otimes b} (u)) 
= (\mathit{id}_a \otimes \rho_{b,a} \otimes \mathit{id}_b) (u+u) \\
&= (\mathit{id}_a \otimes \rho_{b,a} \otimes \mathit{id}_b) (\cev{u}_Y+\vec{u}_Y+\cev{u}_Y+\vec{u}_Y) \\
&= \cev{u}_Y+\cev{u}_Y+\vec{u}_Y+\vec{u}_Y 
= \nabla_a(\cev{u}_Y) + \nabla_b(\vec{u}_Y) \\
&= (\nabla_a\otimes \nabla_b) (\cev{u}_Y+\vec{u}_Y) 
= (\nabla_a\otimes \nabla_b) (u) \tag*{\qedhere}
\end{align*}
\end{itemize}
\end{itemize}
\end{enumerate}
\end{proof}

We now turn to the category of approximations $\Af$. Note that here
functions have as parameters sets of the form
$U\subseteq [Y]^a\subseteq Y$. Hence, (the cardinality of) $Y$ cannot
be determined directly from $U$ and we need extra care with the
tensor.

\begin{theoremrep}[$\Af$ is gs-monoidal]
  \label{thm:a-gs-monoidal}
  The category $\Af$ with the following operators is gs-monoidal:
\begin{enumerate}
\item The tensor $\otimes$ on objects $a\in \monM^Y$ and
  $b\in \monM^Z$ is again defined as $a\otimes b = a+b$.

  On arrows $f\colon a\arr b$ and $g\colon a'\arr b'$ (where
  $a'\in \monM^{Y'}$, $b'\in \monM^{Z'}$ and
  $f\colon \Pow{[Y]^{a}} \to \Pow{[Z]^{b'}}$,
  $g\colon \Pow{[Y']^{a'}} \to \Pow{[Z']^{b'}}$ are the underlying
  functions), the tensor is given by
  \[ f\otimes g \colon \Pow{[Y+Y']^{a+a'}} \to \Pow{[Z+Z']^{b+ b'}},
    \quad 
    (f\otimes g) (U) = f(\cev{U}_Y) \cupa{Z} g(\vec{U}_Y)
  \]
  where \[ \cev{U}_Y = U\cap \{1,\dots,|Y|\} \mbox{ and }
  \vec{U}_Y = \{k \mid |Y|+k \in U\}. \] Furthermore:
  
  \[ U\cupa{Y} V = U\cup\{|Y| + k \mid k \in V\} \quad \text{(where
      $U\subseteq Y$)}  \]
      
\item The symmetry $\rho_{a,b}\colon a\otimes b\arr b\otimes a$ for
  $a\in \monM^Y$, $b\in \monM^Z$ is defined for
  $U\subseteq [Y+Z]^{a+b}$ as
  \[
    \rho_{a,b}(U)= \vec{U}_Y \cupa{Z} \cev{U}_Y \subseteq [Z+Y]^{b+a}
  \]
\item The unit $e$ is again the unique mapping
  $e \colon\emptyset \to \monM$.

\item The duplicator $\nabla_a \colon a\arr a\otimes a$ for
  $a\in \monM^Y$ is defined for $U\subseteq [Y]^a$ as
  \[\nabla_{a}(U)  = U \cupa{Y} U \subseteq [Y+Y]^{a+a}. \]
\item The discharger $!_a\colon a\arr e$ for $a\in \monM^Y$ is defined
  for $U \subseteq [Y]^a$ as $!_a(U) = \emptyset$.
\end{enumerate}
\end{theoremrep}

\begin{proof}
  Let $a\in \monM^Y$, $a'\in \monM^{Y'}$, $b\in \monM^Z$,
  $b'\in \monM^{Z'}$, $c\in \monM^W$, $c'\in \monM^{W'}$ be objects in
  $\Af$.

  We know that $\Af$ is a well-defined category from
  Lemma~\ref{le:cate}. We note that, disjoint unions of monotone
  functions are monotone, making the tensor well-defined.

  We now verify the axioms of gs-monoidal categories (see
  Definition~\ref{def:gs-monoidal}). The calculations are mostly
  straightforward.

  In the following we will often use the fact that
  $\cev{U}_Y\cupa{Y} \vec{U}_Y = U$ whenever $U\in \Pow{[Z]^b}$ and $Y\subseteq Z$.
\begin{enumerate}

\item functoriality of tensor:
\begin{itemize}
\item $\mathit{id}_{a\otimes b} = \mathit{id}_a \otimes
  \mathit{id}_b$

  \smallskip
  
  Let $U\subseteq [Y+Z]^{a+b}$, then:
\begin{align*}
  &(\mathit{id}_a \otimes \mathit{id}_b) (U) 
  = (\mathit{id}_a \otimes \mathit{id}_b) (\cev{U}_Y\cupa{Y} \vec{U}_Y) \\
  &= \mathit{id}_a (\cev{U}_Y) \cupa{Y}  \mathit{id}_b (\vec{U}_Y )
  = \cev{U}_Y\cupa{Y} \vec{U}_Y
  = U 
  = \mathit{id}_{a\otimes b}(U) 
\end{align*}  

\item $(g\otimes g') \circ (f\otimes f') = (g\circ f) \otimes (g'\circ
  f')$

  \smallskip

  Let $f\colon a\arr b$, $g\colon b\arr c$, $f'\colon a'\arr b'$,
  $g'\colon b'\arr c'$ and $u\in \monM^{Y+ Y'}$. We obtain:
 \begin{align*}
 & ((g\otimes g') \circ (f\otimes f')) (U) 
 = (g\otimes g') (f(\cev{U}_Y)\cupa{Z} f'(\vec{U}_Y)) \\
 &= g(f(\cev{U}_Y)) \cupa{W} g'(f'(\vec{U}_Y)) 
 = ((g\circ f) \otimes (g'\circ f'))(\cev{U}_Y\cupa{Y} \vec{U}_Y) \\
 &= ((g\circ f) \otimes (g'\circ f'))(U)
\end{align*}  

\end{itemize}
\item monoidality:
\begin{itemize}
\item $f\otimes \mathit{id}_e=f = \mathit{id}_e\otimes f$

  \smallskip

Let $f\colon a\arr b$ and $U\subseteq [Y]^a$. It holds that:
\begin{align*}
  &(f\otimes \mathit{id}_e) (U) = f(\cev{U}_Y) \cupa{Z}
  \mathit{id}_e(\vec{U}_Y)
  = f(U) \cupa{Z} \mathit{id}_e(\emptyset) 
  = f(U) \cupa{Z} \emptyset \\
  & =  f(U) = \emptyset \cupa{\emptyset} f(U)
  = \mathit{id}_e(\emptyset) \cupa{\emptyset} f(U) =
  \mathit{id}_e(\cev{U}_\emptyset)
  \cupa{\emptyset} f(\vec{U}_\emptyset) \\
  &= ( \mathit{id}_e \otimes f) (U)
\end{align*}
where we use $\cev{U}_Y = U$ and
$\vec{U}_Y = \emptyset$, since $U\subseteq Y$, as well as
$\cev{U}_\emptyset = \emptyset$ and $\vec{U}_\emptyset = U$.
  
  \medskip

\item $(f\otimes g) \otimes h = f \otimes (g\otimes h)$

  \smallskip

  Let $f\colon a\arr a'$, $g\in \colon b\arr b'$ and
  $h\colon c\arr c'$ and $U\subseteq [Y+ Z+ W]^{a+b+c}$. Then:
\begin{align*}
  & ((f\otimes g)\otimes h)(U) 
  = (f\otimes g) (\cev{U}_{Y+Z}) \cupa{Y'+Z'} h(\vec{U}_{Y+Z}) \\
  &= \big(f(\overleftarrow{(\cev{U}_{Y+Z})}_Y) \cupa{Y'} g(\overrightarrow{(\cev{U}_{Y+Z})}_Y)\big) \cupa{Y'+Z'} h(\vec{U}_{Y+Z}) \\
  &= \big(f(\cev{U}_Y) \cupa{Y'} g(\overleftarrow{(\vec{U}_{Y})}_Z)\big) \cupa{Y'+Z'} h(\vec{U}_{Y+Z}) \\
  &= f(\cev{U}_Y) \cupa{Y'} \big( g(\overleftarrow{(\vec{U}_{Y})}_Z)
  \cupa{Z'} h(\overrightarrow{(\vec{U}_{Y})}_Z) \big) \\
  &= f(\cev{U}_Y) \cupa{Y'} (g\otimes h) (\vec{U}_Y) 
  = (f\otimes (g\otimes h)) (U)
\end{align*} 
where we use
$\overleftarrow{(\cev{U}_{Y+Z})}_Y = \cev{U}_Y$,
$\overrightarrow{(\cev{U}_{Y+Z})}_Y = \overleftarrow{(\vec{U}_{Y})}_Z$
and $\vec{U}_{Y+Z} = \overrightarrow{(\vec{U}_{Y})}_Z$.
\end{itemize}

\item naturality:
\begin{itemize}
\item $(f'\otimes f) \circ \rho_{a,a'} = \rho_{b,b'} \circ (f\otimes
  f')$

  \smallskip

  Let $f\colon a\arr b$ and $f'\colon a'\arr b'$. Then for
  $U\subseteq [Y+Y']^{a+a'}$ it holds that:
 \begin{align*}
   & (\rho_{b,b'} \circ (f\otimes f')) (U)  \\
   &= \rho_{b,b'} (f(\cev{U}_Y) \cupa{Z} f'(\vec{U}_Y)) \\
   &= \overrightarrow{(f(\cev{U}_Y) \cupa{Z} f'(\vec{U}_Y))}_Z
   \cupa{Z'} \overleftarrow{(f(\cev{U}_Y) \cupa{Z}
     f'(\vec{U}_Y))}_Z \\
   &= f'(\vec{U}_Y) \cupa{Z'} f(\cev{U}_Y) \\
   &= f'(\overleftarrow{(\vec{U}_Y\cupa{Y'} \cev{U}_Y)}_{Y'}) \cupa{Z'}
   f(\overrightarrow{(\vec{U}_Y\cupa{Y'} \cev{U}_Y)}_{Y'}) \\
   &= (f'\otimes f)(\vec{U}_Y\cupa{Y'} \cev{U}_Y)
   = (f'\otimes f) (\rho_{a,a'}(U))
 \end{align*}
 where we use $\overleftarrow{(U\cupa{Y} V)}_Y = U$ and
 $\overrightarrow{(U\cupa{Y} V)}_Y = V$.
\end{itemize}

\item symmetry:
\begin{itemize}
\item $\rho_{e,e} = \mathit{id}_e$
    \smallskip

  Note that the only possibly argument is $\emptyset$ and hence:

  \[ \rho_{e,e} (\emptyset) = \vec{\emptyset}_\emptyset
    \cupa{\emptyset} \cev{\emptyset}_\emptyset = \emptyset
    \cupa{\emptyset} \emptyset = \emptyset =
    \mathit{id}_e(\emptyset) \]

\item $\rho_{b,a}\circ \rho_{a,b} = \mathit{id}_{a\otimes b}$

  \smallskip

  Let $U\subseteq [Y+Z]^{a+b}$, then:
  \begin{align*}
    & (\rho_{b,a} \circ \rho_{a,b}) (U) = \rho_{b,a}
    (\vec{U}_Y\cupa{Z}\cev{U}_Y) \\
    &=
    \overrightarrow{(\vec{U}_Y\cupa{Z}\cev{U}_Y)}_Z \cupa{Y}
    \overleftarrow{(\vec{U}_Y\cupa{Z}\cev{U}_Y)}_Z = \cev{U}_Y\cupa{Y}
    \vec{U}_Y =
    U \\
    &= \mathit{id}_{a\otimes b}(U)
  \end{align*}

\item
  $\rho_{a\otimes b,c} = (\rho_{a,c}\otimes \mathit{id}_b)
  \circ(\mathit{id}_a\otimes \rho_{b,c})$

  \smallskip
  
  Let $U\subseteq [Y+Z+W]^{a+b+c}$, then:
\begin{align*}
  &((\rho_{a,c} \otimes \mathit{id}_b) \circ (\mathit{id}_a \otimes  \rho_{b,c})) (U)\\
  &= (\rho_{a,c} \otimes \mathit{id}_b) (\mathit{id}_a (\cev{U}_Y) \cupa{Y} \rho_{b,c}(\vec{U}_Y)) \\
  &= (\rho_{a,c} \otimes \mathit{id}_b)(\cev{U}_Y\cupa{Y}
  (\overrightarrow{(\vec{U}_Y)}_Z \cupa{W} \overleftarrow{(\vec{U}_Y)}_Z)) \\
  &= (\rho_{a,c} \otimes \mathit{id}_b)(\cev{U}_Y\cupa{Y}
  (\vec{U}_{Y+Z} \cupa{W} \overleftarrow{(\vec{U}_Y)}_Z)) \\
  &= (\rho_{a,c} \otimes \mathit{id}_b)((\cev{U}_Y\cupa{Y}
  \vec{U}_{Y+Z}) \cupa{Y+W} \overleftarrow{(\vec{U}_Y)}_Z)) \\
  &= \rho_{a,c}(\cev{U}_Y\cupa{Y}
  \vec{U}_{Y+Z}) \cupa{W+Y} 
  \mathit{id}_b(\overleftarrow{(\vec{U}_Y)}_Z) \\
  &= (\vec{U}_{Y+Z}\cupa{W} \cev{U}_Y)
  \cupa{W+Y} \overleftarrow{(\vec{U}_Y)}_Z \\
  &= \vec{U}_{Y+Z}\cupa{W} (\cev{U}_Y
  \cupa{Y} \overleftarrow{(\vec{U}_Y)}_Z) \\
  &= \vec{U}_{Y+Z}\cupa{W} (\overleftarrow{(\cev{U}_{Y+Z})}_Y
  \cupa{Y} \overrightarrow{(\cev{U}_{Y+Z})}_Y) \\
  &= \vec{U}_{Y+Z}\cupa{W} \cev{U}_{Y+Z} 
  = \rho_{a\otimes b,c} (U)
\end{align*}
where we use
$\overrightarrow{(\vec{U}_Y)}_Z = \vec{U}_{Y+Z}$,
$\cev{U}_Y = \overleftarrow{(\cev{U}_{Y+Z})}_Y$ and
$\overleftarrow{(\vec{U}_Y)}_Z = \overrightarrow{(\cev{U}_{Y+Z})}_Y$.
\end{itemize}

\item gs-monoidality:
\begin{itemize}
\item $!_e = \nabla_e = \mathit{id}_e$

  \smallskip

  In this case $\emptyset$ is the only possible argument and we have:
  
  \[ !_e(\emptyset) = \emptyset = \mathit{id}_e (\emptyset) = \emptyset
    = \emptyset\cupa{\emptyset} \emptyset = \nabla_e(\emptyset) \]

\item coherence axioms:

  \smallskip
  For $U\subseteq [Y]^a$, we note that $\overleftarrow{(U\cup_Y U)}_Y = \overrightarrow{(U\cup_Y U)}_Y = U$.
\begin{itemize}
\item $(\mathit{id}_a\otimes \nabla_a) \circ \nabla_a =
  (\nabla_a\otimes \mathit{id}_a)\circ \nabla_a$ 

  \smallskip
  
Let $U\subseteq [Y]^a$, then:
\begin{align*}
  &((\mathit{id}_{a} \otimes \nabla_{a})\circ \nabla_a)(U)
    = (\mathit{id}_a\otimes \nabla_a)(U\cupa{Y}U)\\
  & = \mathit{id}_a(U) \cupa{Y} \nabla_a(U)
    = U\cupa{Y} (U\cupa{Y}U)\\
  &= (U\cupa{Y} U)\cupa{Y+Y}U
    = \nabla_a(U) \cupa{Y+Y} \mathit{id}_a(U) \\
  &= (\nabla_a \otimes \mathit{id}_a) (U\cupa{Y}U) 
    = (\nabla_a \otimes \mathit{id}_a) (\nabla_a(U))
\end{align*}

\item $\mathit{id}_a=(\mathit{id}_a\otimes !_a) \circ \nabla_a$

\smallskip
  
Let $U\subseteq [Y]^a$, then:
\begin{align*}
&((\mathit{id}_{a} \otimes !_{a}) \circ \nabla_a)(U) 
= (\mathit{id}_{a} \otimes !_{a}) (U\cupa{Y}U)\\
& = \mathit{id}_a(U) \cupa{Y} !_a(U) 
= \mathit{id}_a(U) \cupa{Y} \emptyset\\ 
& = \mathit{id}_a(U)
\end{align*}

\item $\rho_{a,a}\circ \nabla_a = \nabla_a$

  \smallskip

  Let $U\subseteq [Y]^a$, then:
  \[ (\rho_{a,a} \circ \nabla_a)(U) = \rho_{a,a} (U\cupa{Y}U) = U\cupa{Y}U
  = \nabla_a (U) \]

\end{itemize}
\item monoidality axioms:
\begin{itemize}
\item $!_{a\otimes b} = !_a \otimes !_b$

  \smallskip

Let $U\subseteq [Y+Z]^{a+b}$, then:
\begin{align*}
  !_{a\otimes b} (U)
  & =  \emptyset = \emptyset\cupa{\emptyset}
    \emptyset = !_a(\cev{U}_Y) \cupa{\emptyset} !_b(\vec{U}_Y) \\
  & =  (!_a\otimes !_b)(\cev{U}_Y\cupa{Y}\vec{U}_Y) = (!_a\otimes !_b)(U)
\end{align*}

\item $\nabla_a\otimes \nabla_b = (\mathit{id}_a\otimes \rho_{b,a}
  \otimes \mathit{id}_b) \circ \nabla_{a\otimes b}$

  \smallskip

Let $U\subseteq [Y+Z]^{a+b}$, then:
\begin{align*}
&(\mathit{id}_a \otimes \rho_{b,a} \otimes \mathit{id}_b) ( \nabla_{a\otimes b} (U))\\
& = (\mathit{id}_a \otimes (\rho_{b,a} \otimes \mathit{id}_b)) (U\cupa{Y+Z}U) \\
&= \mathit{id}_a (\overleftarrow{(U\cupa{Y+Z}U)}_Y) \cupa{Y}
(\rho_{b,a} \otimes \mathit{id}_b) (\overrightarrow{(U\cupa{Y+Z}U)}_Y) \\
&= \mathit{id}_a (\cev{U}_Y) \cupa{Y}
(\rho_{b,a} \otimes \mathit{id}_b) (\overrightarrow{(U\cupa{Y+Z}U)}_Y) \\
&= \cev{U}_Y \cupa{Y}
(\rho_{b,a} \otimes \mathit{id}_b) (\overrightarrow{(U\cupa{Y+Z}U)}_Y) \\
&= \cev{U}_Y \cupa{Y}
(\rho_{b,a} \otimes \mathit{id}_b) ((\vec{U}_Y\cupa{Z}\cev{U}_Y)
\cupa{Z+Y} \vec{U}_Y) \\
&= \cev{U}_Y\cupa{Y} ((\cev{U}_Y \cupa{Y}
\vec{U}_Y) \cupa{Y+Z}\vec{U}_Y)\\
& = (\cev{U}_Y\cupa{Y}\cev{U}_Y)\cupa{Y+Y} (\vec{U}_Y\cupa{Z}\vec{U}_Y) \\
&= \nabla_a(\cev{U}_Y) \cupa{Y+Y} \nabla_b(\vec{U}_Y) \\
& = (\nabla_a\otimes \nabla_b) (\cev{U}_Y\cupa{Y}\vec{U}_Y)
 = (\nabla_a\otimes \nabla_b) (U)
\end{align*}
where we use the fact that
\begin{align*} 
  \overleftarrow{(U\cupa{Y+Z}U)}_Y = \cev{U}_Y \text{ and }
  \overrightarrow{(U\cupa{Y+Z}U)}_Y = (\vec{U}_Y\cupa{Z}\cev{U}_Y) \cupa{Z+Y} \vec{U}_Y. \tag*{\qedhere}
\end{align*}
\end{itemize}
\end{itemize}
\end{enumerate}
\end{proof}

Finally, the approximation $\#$ is indeed gs-monoidal,
i.e., it preserves all the additional structure (tensor, symmetry, unit,
duplicator and discharger).

\begin{theoremrep}[$\#$ is gs-monoidal]
\label{thm:gs-monoidal-functor}
$\#\colon \Cf\to \Af$ is a gs-monoidal functor.

\end{theoremrep}

\begin{proof}
  We write $e',\otimes',!',\nabla',\rho'$ for the corresponding
  operators in category $\Af$. Note that by definition $e=e'$ and
  $\otimes$, $\otimes'$ agree on objects.

  First, categories $\Cf$ and $\Af$ are gs-monoidal by
  Theorem~\ref{thm:c-gs-monoidal} and~\ref{thm:a-gs-monoidal}.

  Furthermore we verify that:
  
\begin{enumerate}
\item monoidality:

  \smallskip
  
\begin{itemize}
\item $\#(e) = e'$

  \smallskip
  
  We have $\#(e) = e = e'$
\item $\#(a\otimes b) = \#(a) \otimes' \#(b)$

  \smallskip
  
  We have:
  \[ \#(a\otimes b) = a\otimes b = \#(a) \otimes' \#(b)\]
\end{itemize}
\item symmetry:

  \smallskip
  
\begin{itemize}
\item $\#(\rho_{a,b}) = \rho'_{\#(a),\#(b)}$

  \smallskip
  
  Let $U\subseteq [Y+Z]^{a+b}$, then for sufficiently small
  $\delta\sqsupset 0$ (note that such $\delta$ exists due to finiteness):
  \begin{align*}
    &\#(\rho_{a,b})(U) \\
    &= (\rho_{a,b})_\#^{a+b,\delta}(U) \\
    &= \{ w\in [Z+Y]^{b+a}\mid \rho_{a,b}(a+b)(w) \ominus \rho_{a,b}((a+b)\ominus \delta_U)(w) \sqsupseteq \delta\}\\
    &= \{ w\in [Z+Y]^{b+a}\mid (b+a)(w) \ominus ((b+a)\ominus \delta_{\rho'_{a,b}(U)})(w) \sqsupseteq \delta\}\\
    &= \rho'_{a,b}(U) = \rho'_{\#(a),\#(b)} (U)
  \end{align*}
  since $\rho_{a,b}$ distributes over componentwise subtraction and
  $\rho_{a,b}(\delta_U) = \delta_{\rho'_{a,b}(U)}$. The second-last
  equality holds since for all $w$ in the set we have
  $(b+a)(w) \sqsupset 0$.
\end{itemize}
\item gs-monoidality:

  \smallskip
  
  \begin{itemize}
  \item $\#(!_a) = !'_{\#(a)}$

    \smallskip
    
    Let $U\subseteq [Y]^a$, then for some $\delta$:
    \[ \#(!_a)(U) = (!_a)_\#^{a,\delta}(U) = \emptyset = !'_a(U) =
      !'_{\#(a)}(U) \] since the codomain of $(!_a)_\#^{a,\delta}(U)$
    is $\Pow{\emptyset}$ and hence the only possible value for
    $(!_a)_\#^{a,\delta}(U)$ is $\emptyset$.
  \item $\#(\nabla_a) = \nabla'_{\#(a)}$
    
    \smallskip
    
Let $U\subseteq [Y]^a$, then for sufficiently small $\delta\sqsupset 0$:
\begin{align*}
  & \#(\nabla_a)(U) \\
  &= (\nabla_a)^{a,\delta}_\#(U)\\
  &= \{ w\in [Y+Y]^{a+a}\mid \nabla_a(a)(w) \ominus \nabla_a(a\ominus \delta_U)(w) \sqsupseteq \delta\}\\
  &= \{ w\in [Y+Y]^{a+a}\mid (a+a)(w) \ominus ((a+a)\ominus \delta_{\nabla'_a(U)})(w) \sqsupseteq \delta\}\\
  &= \nabla'_a(U) = \nabla'_{\#(a)}(U)
\end{align*}
since $\nabla_{a}$ distributes over componentwise subtraction and
$\nabla_{a}(\delta_U) = \delta_{\nabla'_a(U)}$. The second-last
equality holds since for all $w$ in the set we have
$(a+a)(w) \sqsupset 0$. \qedhere
\end{itemize}
\end{enumerate}
\end{proof}

\section{\udefix: A Tool for Fixpoints Checks}
\label{sec:tool}

\subsection{Overview}

We present a tool, called $\udefix$, which exploits gs-monoidality as
discussed before and allows the user to construct functions
$f\colon \monM^Y \to \monM^Y$ (with $Y$ finite) as a circuit. As basic
components, $\udefix$ can handle all functions presented in
Section~\ref{sec:categories-functor}
(Table~\ref{tab:basic-functions-approximations}) and
addition/subtraction by a fixed constant $w$, denoted
$\mathit{add}_w$/$\mathit{sub}_w$ (both are non-expansive functions).
Since the approximation functor $\#$ is gs-monoidal on the finitary
subcategories, this circuit can then be transformed automatically, in
a compositional way, into the corresponding approximation $f_\#^a$,
for some given $a\in \monM^Y$. By computing the greatest fixpoint of
$f_\#^a$ and checking for emptiness, $\udefix$ can check whether
$a = \mu f$.
In addition, it is possible to check whether a given post-fixpoint $a$
is below the least fixpoint $\mu f$ (recall that in this case the
check is sound but not complete). The dual checks (for greatest
fixpoint and pre-fixpoints) are implemented as well.

The tool is shipped with pre-defined functions implementing examples
concerning case studies on termination probability, bisimilarity,
simple stochastic games, energy games, behavioural metrics and Rabin
automata.

{\udefix} is a Windows-Tool created in Python, which can be obtained
from \url{https://github.com/TimoMatt/UDEfix}.

Building the desired function $f\colon \monM^Y \to \monM^Y$ requires
three steps: 
\begin{itemize}
\item Choosing the MV-algebra $\monM$ of interest.
\item Creating the required basic functions by specifying their
  parameters.
\item Assembling $f$ from these basic functions. 
\end{itemize}

\subsection{Tool Areas}

Concretely, the GUI of \udefix\ is separated into three areas: the
\textsf{Content} area, \textsf{Building} area and
\textsf{Basic-Functions} area. Under \textsf{File}-\textsf{Settings}
the user can set the MV-algebra.  Currently
the MV-chains $[0,k]$ (\textsf{algebra 1}) and $\{0,\dots,k\}$
(\textsf{algebra 2}) for arbitrary $k$ are supported
(see Example~\ref{ex:mv-chains})

\paragraph*{\textsf{Basic-Functions} Area:}
The \textsf{Basic-Functions} area contains the basic functions,
encompassing those listed in
Table~\ref{tab:basic-functions-approximations}
(Section~\ref{sec:preliminaries}), as well as addition and subtraction
by a constant $w$.  Via drag-and-drop (or right-click) these basic
functions can be added to the \textsf{Building} area to create a
\textsf{Function} box. Each such box requires three (in the case of
$\tilde{\mathcal{D}}$ two) \textsf{Contents}: The \textsf{Input} set,
the \textsf{Output} set and an additional required parameter (see
Table~\ref{tab:additional-para}). These \textsf{Contents} can be
created in the \textsf{Content} area.

\begin{table}[h]
\begin{center}
\begin{tabular}{|c|c|c|c|c|c|}
  \hline \textbf{Basic Function} & $c_k$ & $u^*$ &
  $\min_\mathcal{R}/\max_\mathcal{R}$& $\mathit{add}_w$ & $\mathit{sub}_w$ \\
  \hline
  \textbf{Req. Parameter} & $k\in \monM^Z$ & $u\colon Z\to Y$ & $R\subseteq Y\times Z$ & $w\in \monM^Y$ & $w\in \monM^Y$ \\
  \hline
\end{tabular} 
\end{center}
\caption{Additional parameters for the basic functions from
  Table~\ref{tab:basic-functions-approximations}.}
\label{tab:additional-para}
\end{table}

Additionally the \textsf{Basic-Functions} area offers functionalities
for composing functions via disjoint union (more concretely, this is
handled by the auxiliary \textsf{Higher-Order Function}) and the
\textsf{Testing} functionality for fixpoint checks which we will
discuss in the next paragraph.

\paragraph*{\textsf{Building} Area:}
The user can connect the created \textsf{Function} boxes to construct
the function $f$ of interest. Composing functions is as simple as
connecting two \textsf{Function} boxes in the correct order by
mouse-click. Disjoint union is achieved by connecting two boxes to the
same box. Note that \textsf{Input} and \textsf{Output} sets of
connected \textsf{Function} boxes need to match.  As an example, in
Figure~\ref{fig:assembling-beh-metrics} we show how the function
$\mathcal{B}$, discussed in Section~\ref{sec:case-study}, for
computing the behavioural distance of a labeled Markov chain can be
assembled.  The construction exactly follows the structure of the
diagram in Figure~\ref{fig:decomp-beh-metric}. Here, the parameters
are instantiated for the labeled Markov chain displayed in
Figure~\ref{fig:prob-ts} (left-hand side).

\begin{figure}[t]
\begin{center}
\includegraphics[scale=0.4]{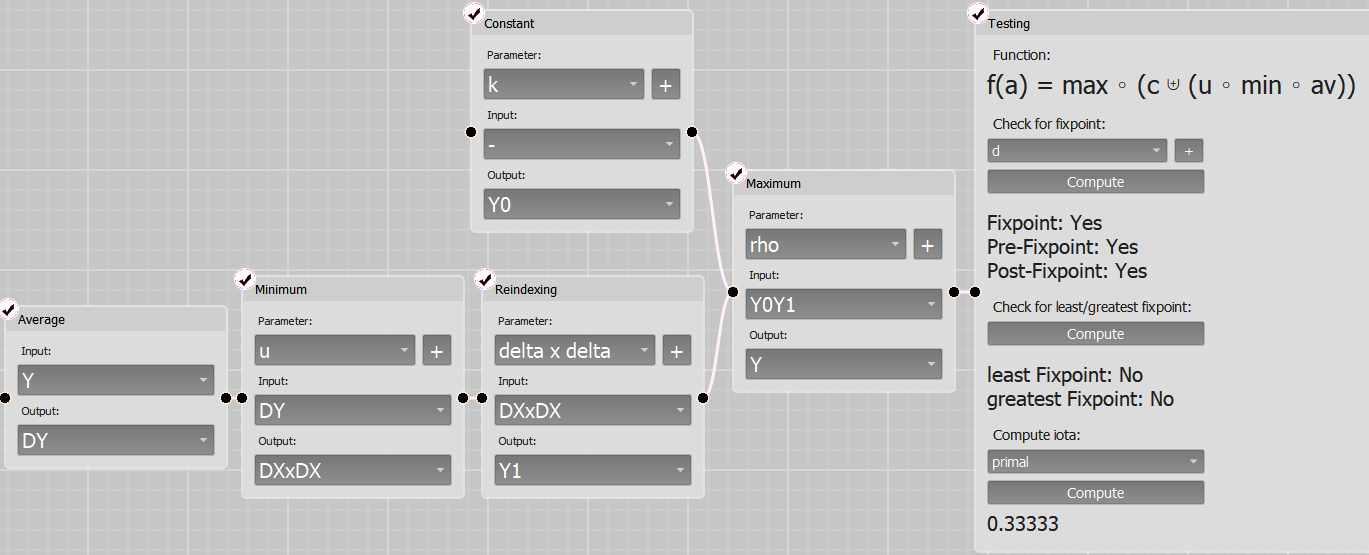}
\end{center}
\vspace{-12pt}
\caption{Assembling the function $\mathcal{B}$ from Section~\ref{sec:case-study}.}
\label{fig:assembling-beh-metrics}
\end{figure}

The special box \textsf{Testing} is always required to appear at the
end. Here, the user can enter some mapping $a\colon Y \to \monM$, test
if $a$ is a fixpoint of the function $f$ of interest and then verify
if $a = \mu f$. As explained before, this is realised by computing the
greatest fixpoint of the approximation $\nu f_\#^a$ . In case this is
not empty and thus $a \neq \mu f$, the tool can produce a suitable
value which can be used for decreasing $a$, needed for iterating to the least fixpoint
from above (respectively increasing $a$ for iterating to the greatest
fixpoint from below). There is also support for comparison with pre-
and post-fixpoints.

\begin{example}
  As an example, consider the left-hand system in
  Figure~\ref{fig:prob-ts} and consider the function $\mathcal{B}$
  from Section~\ref{sec:case-study} whose least fixpoint corresponds to the
  behavioural distance for a labeled Markov chain.

  We now define a fixpoint of $\mathcal{B}$, which is not the least,
  namely $d\colon Y \to [0,1]$ with $d(3,3) = 0$,
  $d(1,1) = \nicefrac{1}{2}$,
  $d(1,2) = d(2,1) = d(2,2) = \nicefrac{2}{3}$ and $1$ for all other
  pairs. Note that $d$ is not a pseudo-metric, but it is a fixpoint of
  $\mathcal{B}$. For understanding why $d$ is a fixpoint, consider,
  for instance, the pair $(1,2)$. Due to the labels, an optimal
  coupling for the successors of $1,2$ assigns
  $(3,3)\mapsto \nicefrac{1}{3}$, $(4,4)\mapsto \nicefrac{1}{2}$,
  $(3,4)\mapsto \nicefrac{1}{6}$. Hence, by the fixpoint equation, we
  have
  \[ d(1,2) = \nicefrac{1}{3}\cdot d(3,3) + \nicefrac{1}{2}\cdot
    d(4,4) + \nicefrac{1}{6} \cdot d(3,4) = \nicefrac{1}{3}\cdot 0 +
    \nicefrac{1}{2}\cdot 1 + \nicefrac{1}{6} \cdot 1 = \nicefrac{1}{2}
    + \nicefrac{1}{6} = \nicefrac{2}{3}. \] A similar argument can be
  made for the remaining pairs.
  
  By clicking \textsf{Compute} in the \textsf{Testing}-box, $\udefix$
  displays that $d$ is a fixpoint and tells us that $d$ is in fact not
  the least and not the greatest fixpoint. It also computes the
  greatest fixpoints of the approximations step by step (via Kleene
  iteration) and displays the results to the user.  In this case
  $\nu f_\#^a = \{ (4,4)\}$, indicating that the distance $d(4,4) = 1$
  over-estimates the true value and can be decreased. The fact that
  also other pairs over-estimate their value (but to a lesser degree),
  will be detected in later steps in the iteration to the least
  fixpoint from above.
\end{example}

\paragraph*{\textsf{Content} Area:}
Here the user can create sets, mappings and relations which are used
to specify the basic functions. 
The user can create a variety of different types of sets, such as
$X=\{1,2,3,4\}$, which is a basic set of numbers, and the set
$D=\{p_1,p_2,p_3,p_4\}$ which is a set of mappings representing
probability distributions. These objects are called \textsf{Content}.

Once \textsf{Input} and \textsf{Output} sets are created we can
specify the required parameters (cf. Table~\ref{tab:additional-para})
for a function. Here, the created sets can be chosen as domain and
co-domain. Relations can be handled in a similar fashion: Given the
two sets one wants to relate, creating a relation can be easily
achieved by checking some boxes. Some useful in-built relations
like ``is-element-of''-relation and projections to the $i$-th component
are pre-defined.

By clicking on the icon ``+'' in a \textsf{Function} box, a new
function with the chosen \textsf{Input} and \textsf{Output} sets is
created.  The additional parameters
(cf. Table~\ref{tab:additional-para}) have domains and co-domains
which need to be created by the user or are provided by the chosen
MV-algebra.

The \textsf{Testing} function $a$ (i.e., the candidate
(pre/post-)fixpoint) is a mapping as well and can be created as all
other functions.

In Figure~\ref{fig:contents} we give examples of how
contents can be created: we show the creation of a set ($Y = X\times X$), a
distance function ($d$) and a relation ($\rho$).

\begin{figure}[t]
\begin{center}
\includegraphics[scale=0.6]{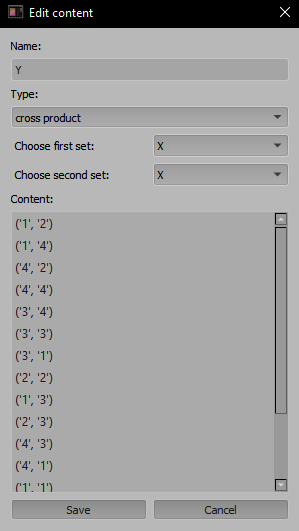} \quad
\includegraphics[scale=0.6]{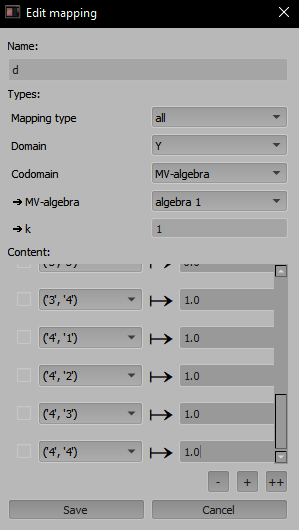} \quad
\includegraphics[scale=0.6]{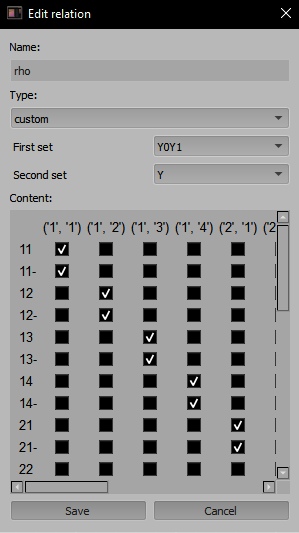}
\end{center}
\caption{Contents: Set $Y$, Mapping $d$, Relation $\rho$.}
\label{fig:contents}
\end{figure}

\subsection{Tutorial}

We now provide a small tutorial intended to clarify the use of the
tool. It deals with a simple example: a function whose least fixpoint
provides the termination probability of a Markov chain.
Specifically, we continue the example from Section~\ref{sec:motivation} and
Example~\ref{ex:termination-mc-comp}, and we consider the Markov chain
$(S,T,\eta)$ in Figure~\ref{fig:markov-chain2}.

Remember that the function $\mathcal{T}$, whose least fixpoint is the
termination probability, can be decomposed as follows:
\[ \mathcal{T}= (\eta^*\circ \tilde{\mathcal{D}})\otimes c_k\]

In order to start, one first chooses the correct MV-algebra under
\textsf{Settings} and creates the sets $S,T,S\setminus T$
(\textsf{Input} and \textsf{Output} sets). The creation of set $S$ is
exemplified in Figure~\ref{fig:tutorial-set-distr} (left-hand side). The
tool supports several types of sets and operators on sets (such as
complement, which makes it easy to create $S\backslash T$. Next, we
create the set $D$ of probability distributions, consisting of the
mappings $p_x,p_y,p_z$. In Figure~\ref{fig:tutorial-set-distr}
(right-hand side) we show the creation of $p_x$.

\begin{figure}
  \centering
  \includegraphics[scale=0.6]{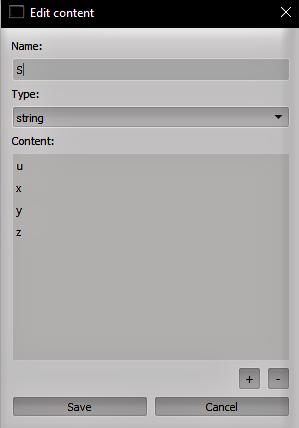} \quad
  \includegraphics[scale=0.448]{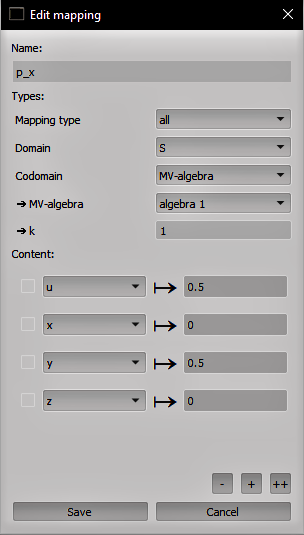} 
  \caption{Creation of set $S$ and probability distribution $p_x$ for
    the example.}
  \label{fig:tutorial-set-distr}
\end{figure}

Now we create the basic function boxes and connect them in the correct
way (see Figure~\ref{fig:tutorial-diagram}). The additional parameters
according to Table~\ref{tab:additional-para} -- in this case the map
$c_k$ and the reindexing $\eta^*$ based on the successor map -- can be
created by clicking the icon ``+'' in the corresponding box (see
Figure~\ref{fig:tutorial-k-eta-a1}) (left-hand side for $k$ and middle
for $\eta$).

\begin{figure}
  \centering
  \includegraphics[scale=0.4]{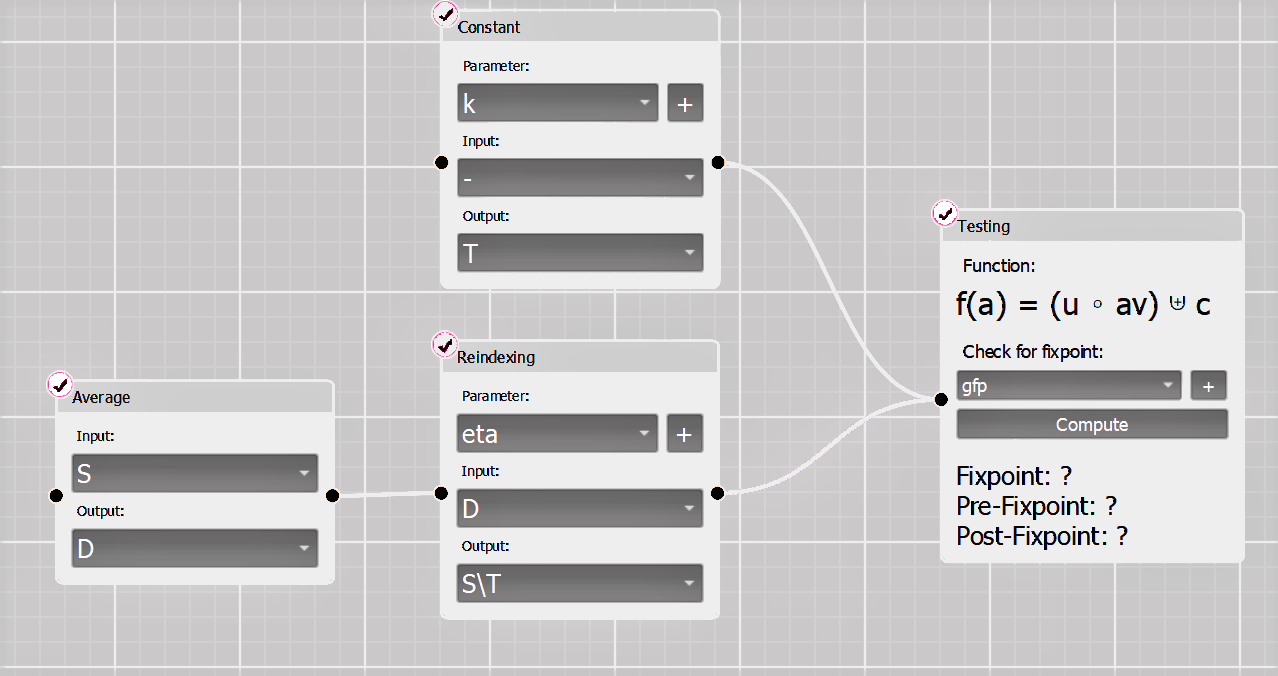}
  \caption{Assembling the function $\mathcal{T}$.}
  \label{fig:tutorial-diagram}
\end{figure}

\begin{figure}
  \centering
  \includegraphics[scale=0.60]{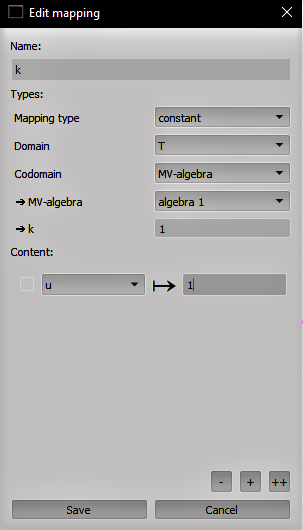} \quad
  \includegraphics[scale=0.60]{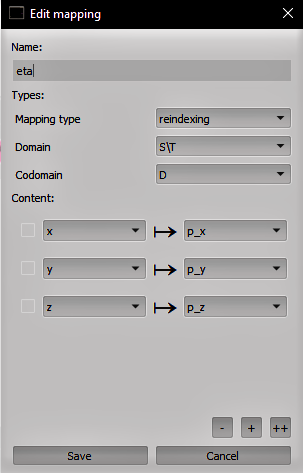} \quad
  \includegraphics[scale=0.55]{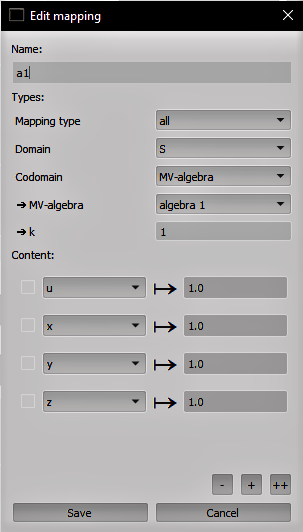}
  
  \caption{Creation of the parameters $k$ and $\eta^*$ and the greatest
    fixpoint~$a_1 = \mu \mathcal{T}$.}
  \label{fig:tutorial-k-eta-a1}
\end{figure}

We can also assemble several test functions (e.g., possible candidate
fixpoints), among them the greatest fixpoint $a_1 = \nu \mathcal{T}$
(see Figure~\ref{fig:tutorial-k-eta-a1}, right-hand side).

When testing $a_1$ we obtain the results depicted in
Figure~\ref{fig:tutorial-testing-a1}. In fact
$\nu \mathcal{T}_\#^{a_1} = \{y,z\} \neq \emptyset$, which tells us
that $a_1$ is not the least fixpoint. 
\begin{figure}
  \centering
  \includegraphics[scale=0.55]{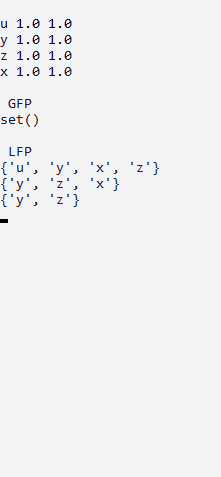} \quad
    \includegraphics[scale=0.55]{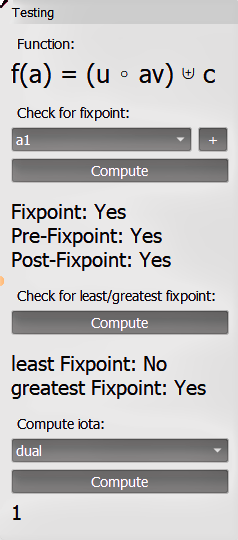}
  \caption{Checking the candidate fixpoint $a_1$.}
  \label{fig:tutorial-testing-a1}
\end{figure}

Similarly, one can test $a_2 = \mu \mathcal{T}$, the least fixpoint,
obtaining $\nu \mathcal{T}_\#^{a_2} = \emptyset$. This allows the user
to deduce that $a_2$ is indeed the least fixpoint. As mentioned before,
one can also test whether a pre-fixpoint is below the greatest
fixpoint or a post-fixpoint above the least fixpoint, although such
tests are sound but not complete.

\section{Conclusion, Related  and Future Work}
\label{sec:conclusion}

We have shown that a framework originally introduced
in~\cite{bekp:fixpoint-theory-upside-down,BEKP:FTUD-journal} for
analysing fixpoint of non-expansive functions over MV-algebras can be
naturally cast into a gs-monoidal setting. When considering the finitary categories, both the non-expansive
functions of interest and their approximations live in two gs-monoidal
categories and are related by a gs-monoidal functor $\#$.  We also
developed a general theory for constructing approximations of
predicate liftings, which find natural application in the definition
of behavioural metrics over coalgebras.

Graph compositionality has been studied from several angles and
has always been an important part of the theory of graph
rewriting. For instance, one way to explain the double-pushout
approach~\cite{eps:gragra-algebraic} is to view the graph to be
rewritten as a composition of a left-hand side and a context, then
compose the context with the right-hand side to obtain the result of
the rewriting step.

Several algebras have been proposed for a compositional view on
graphs, see for
instance~\cite{bc:graph-expressions,k:hypergraph-construction-journal,bk:axiomatization-graphs}.
For the compositional modelling of graphs and graph-like structures it
has in particular proven useful to use the notion of monoidal
categories~\cite{ml:categories}, i.e., categories equipped with a
tensor product. There are several variants of such categories, such as
gs-monoidal categories, that have been shown to be suitable for
specifying term graph rewriting (see
e.g.~\cite{g:concurrent-term-rewriting,gh:inductive-graph}).

In our work, the compositionality properties arising from the
gs-monoidal view of the theory are at the basis of the development of
the prototypical tool {\udefix}. The tool allows one to build the
concrete function of interest out of some basic components. Then the
approximation of the function can be obtained compositionally from the
approximations of the components and one can check whether some
fixpoint is the least or greatest fixpoint of the function of
interest. Additionally, one can use the tool to show that some
pre-fixpoint is above the greatest fixpoint or some post-fixpoint is
below the least fixpoint.

\paragraph*{\emph{Related work:}} This paper is based on fixpoint theory,
coalgebras, as well as on the theory of monoidal categories. Monoidal
categories~\cite{ml:categories} are categories equipped with a
tensor. It has long been realised that monoidal categories can have
additional structure such as braiding or symmetries. Here we base our
work on so called gs-monoidal categories~\cite{cg:term-graphs-gs-monoidal,gh:inductive-graph}, called
s-monoidal in~\cite{g:concurrent-term-rewriting}. These are symmetric
monoidal categories, equipped with a discharger and a duplicator. Note
that ``gs'' originally stood for ``graph substitution'' as such
categories were first used for modelling term graph rewriting.

We view gs-monoidal categories as a means to compositionally build
monotone non-expansive functions on complete lattices, for which we
are interested in the (least or greatest) fixpoint.  Such fixpoints
are ubiquitous in computer science, here we are in particular
interested in applications in concurrency theory and games, such as
bisimilarity~\cite{s:bisimulation-coinduction}, behavioural
metrics~\cite{dgjp:metrics-labelled-markov,b:prob-bisimilarity-distances,cbw:complexity-prob-bisimilarity,bbkk:coalgebraic-behavioral-metrics}
and simple stochastic games~\cite{c:algorithms-ssg}. In recent work we
have considered strategy iteration procedures inspired by games for
solving fixpoint equations~\cite{bekp:lattice-strategy-iteration}.

Fixpoint equations also arise in the context of coalgebra~\cite{r:universal-coalgebra}, a general framework for investigating
behavioural equivalences for systems that are parameterised -- via a
functor -- over their branching type (labelled, non-deterministic,
probabilistic, etc.). Here in particular we are concerned with
coalgebraic behavioural metrics~\cite{bbkk:coalgebraic-behavioral-metrics}, based on a generalisation
of the Wasserstein or Kantorovich lifting~\cite{v:optimal-transport}. Such liftings require the notion of
predicate liftings, well-known in coalgebraic modal logics~\cite{s:coalg-logics-limits-beyond-journal}, lifted to a quantitative
setting~\cite{bkp:up-to-behavioural-metrics-fibrations}.

\paragraph*{\emph{Future work:}}
One important question is still open: we defined an approximation
$\#$, relating the concrete category $\C$ of functions of type
$\monM^Y\to\monM^Z$ -- where $Y,Z$ might be infinite -- to their
approximations, living in $\A$. It is unclear whether $\#$ is
a lax or even proper functor, i.e., whether it (laxly) preserves
composition. For finite sets functoriality derives from a non-trivial
result in~\cite{BEKP:FTUD-journal} and it is unclear whether it can be
extended to the infinite case. If so, this would be a valuable step to
extend the theory to infinite sets.

In this paper we illustrated the approximation for predicate liftings
via the powerset and the distribution functor. It would be worthwhile
to study more functors and hence broaden the applicability to other
types of transition systems.

Concerning $\udefix$, we plan to extend the tool to compute fixpoints,
either via Kleene iteration or strategy iteration (strategy iteration
from above and below), as detailed
in~\cite{bekp:lattice-strategy-iteration}. Furthermore for convenience
it would be useful to have support for generating fixpoint functions
directly from a given coalgebra respectively transition system.

\medskip

\noindent\emph{Acknowledgements:} We want to thank Ciro Russo who
helped us with a question on the connection of MV-algebras and
quantales.

\bibliographystyle{alphaurl}
\bibliography{references}
\end{document}